\def\submission{0}
\newcommand{\qw}[1][-1]{\ar @{-} [0,#1]}
\newcommand{\qwx}[1][-1]{\ar @{-} [#1,0]}
\newcommand{\gate}[1]{*+<.6em>{#1} \POS ="i","i"+UR;"i"+UL **\dir{-};"i"+DL **\dir{-};"i"+DR **\dir{-};"i"+UR **\dir{-},"i" \qw}
\newcommand{\meter}{*=<1.8em,1.4em>{\xy ="j","j"-<.778em,.322em>;{"j"+<.778em,-.322em> \ellipse ur,_{}},"j"-<0em,.4em>;p+<.5em,.9em> **\dir{-},"j"+<2.2em,2.2em>*{},"j"-<2.2em,2.2em>*{} \endxy} \POS ="i","i"+UR;"i"+UL **\dir{-};"i"+DL **\dir{-};"i"+DR **\dir{-};"i"+UR **\dir{-},"i" \qw}
\newcommand{\control}{*!<0em,.025em>-=-<.2em>{\bullet}}
\newcommand{\ctrl}[1]{\control \qwx[#1] \qw}
\newcommand{\multigate}[2]{*+<1em,.9em>{\hphantom{#2}} \POS [0,0]="i",[0,0].[#1,0]="e",!C *{#2},"e"+UR;"e"+UL **\dir{-};"e"+DL **\dir{-};"e"+DR **\dir{-};"e"+UR **\dir{-},"i" \qw}
\newcommand{\ghost}[1]{*+<1em,.9em>{\hphantom{#1}} \qw}
\newcommand{\push}[1]{*{#1}}
\newcommand{\gategroup}[6]{\POS"#1,#2"."#3,#2"."#1,#4"."#3,#4"!C*+<#5>\frm{#6}}
\newcommand{\rstick}[1]{*!L!<-.5em,0em>=<0em>{#1}}
\newcommand{\lstick}[1]{*!R!<.5em,0em>=<0em>{#1}}
\newcommand{\ustick}[1]{*!D!<0em,-.5em>=<0em>{#1}}
\newcommand{\Qcircuit}{\xymatrix @*=<0em>}
\newcommand{\cpho}{\mathsf{CPhO}}
\newcommand{\qft}{\mathsf{QFT}}
\newcommand{\sto}{\mathsf{StO}}
\newcommand{\pho}{\mathsf{PhO}}
\newcommand{\comp}{\mathsf{Comp}}
\newcommand{\bool}{\{0,1\}}
\newcommand{\id}{\mathsf{I}}
\newcommand{\ket}[1]{|#1\rangle}
\newcommand{\bra}[1]{\langle #1|}
\newcommand{\proj}[1]{\ket{#1}\!\bra{#1}}
\newcommand{\ketbra}[2]{\ket{#1}\!\bra{#2}}
\newcommand{\norm}[1]{\left\Vert #1 \right\Vert}
\newcommand{\any}{\cdot}
\newcommand{\arrowform}[5]{\underset{\substack{{#4\underset{\substack{#3}}{\rightarrow} #5}}}{{#1}_{#2}}}
\newcommand{\hmat}[2]{\cpho_{{#1},{#2}}}
\newcommand{\harg}[4]{\arrowform{\mathsf{H}}{#1}{#2}{#3}{#4}}
\newcommand{\opnorm}[1]{\left\|#1\right\|_{\mathsf{op}}}
\newcommand\restr[2]{{
  \left.\kern-\nulldelimiterspace 
  #1 
  \vphantom{\big|} 
  \right|_{#2} 
  }}
\newcommand{\supp}{\mathsf{Supp}}
\newcommand{\gPoSW} {G_{n}^{\mathsf{PoSW}}}
\newcommand{\Prover}{\mathcal{P}}
\newcommand{\V}{\mathcal{V}}
\newcommand{\Hash}{{H}}
\newcommand{\ap}{\mathsf{ap}}
\newcommand{\anc}{\mathsf{anc}}
\newcommand{\rt}{\mathsf{rt}}
\newcommand{\VId}[1]{V_{#1}}
\newcommand{\lab}[1]{\ell_{#1}}
\newcommand{\mpar}{\mathsf{par}}
\newcommand{\parent}{\mathsf{in}}
\newcommand{\lch}[1]{\mathsf{left}({#1})}
\newcommand{\rch}[1]{\mathsf{right}({#1})}
\newcommand{\sib}{\mathsf{sib}}
\newcommand{\suc}{\mathsf{Suc}}
\newcommand{\fail}{\mathsf{Fail}}
\newcommand{\ext}[3]{\mathsf{Extract}^{#1}_{#2}(#3)}
\newcommand{\leaves}{\mathsf{leaves}}
\newcommand{\col}{\mathsf{CL}}
\newcommand{\nocol}{\neg\col}
\newcommand{\chain}{\mathsf{CHN}}
\newcommand{\pchain}{\mathsf{PChain}}
\newcommand{\gro}{\mathsf{ZERO_{\geq 1}}}
\newcommand{\adv}{\mathsf{Adv}}
\newcommand{\Acal}{\mathcal{A}}
\newcommand{\add}{\mathsf{ADD}}
\newcommand{\lsupp}{\mathrm{LSupp}}
\newcommand{\Lcal}{\mathcal{L}}
\newcommand{\Gcal}{\mathcal{G}}
\newcommand{\Dcal}{\mathcal{D}}
\newcommand{\Ucal}{\mathcal{U}}
\newcommand{\Jcal}{\mathcal{J}}
\newcommand{\Ical}{\mathcal{I}}
\newcommand{\Dfk}{\mathfrak{D}}
\newcommand{\Ncal}{\mathcal{N}}
\newcommand{\tcap}[3]{\left[ #1\rightarrow #2\Big| #3 \right]}
\newcommand{\rcap}[2]{\left[ #1 \Big| #2 \right]}
\newcommand{\echain}{\mathsf{ExChain}}
\newcommand{\zo}{{\{0,1\}}}
\newcommand{\Path}{{P}}
\newcommand{\CO}{{\sf cO}}
\newcommand{\bfx}{{\bf x}}
\newcommand{\bfy}{{\bf y}}
\newcommand{\bfr}{{\bf r}}
\newcommand{\bfu}{{\bf u}}
\newcommand{\lext}{\ell^\mathsf{ext}}
\def\spc{\hspace{0.05ex}}
\def\nspc{\hspace{-0.1ex}}
\newcommand{\QTC}[3]{\big\llbracket\spc#1\nspc\stackrel{#3}{\rightarrow}\nspc#2\spc\big\rrbracket}
\newcommand{\longQTC}[3]{\big\llbracket\spc#1\nspc\stackrel{#3}{\longrightarrow}\nspc#2\spc\big\rrbracket}
\newcommand{\qQTC}[4]{\big\llbracket\spc#1\stackrel{#3,#4\spc}{\Longrightarrow}\nspc#2\spc\big\rrbracket}
\newcommand{\condQTC}[4]{\big\llbracket\spc#1\nspc\stackrel{#3}{\rightarrow}\nspc#2\spc\big|\spc#4\spc\big\rrbracket}
\newcommand{\CTC}[3]{\big[\,#1\stackrel{#3}{\rightarrow}#2\,\big]}
\newcommand{\PRMG}{\text{\sf PRMG}}
\newcommand{\SIZE}[1][kq]{\mathsf{SZ}_{\leq #1}}
\renewcommand{\P}{\mathsf{P}}
\newcommand{\Q}{\mathsf{Q}}
\renewcommand{\L}{\mathsf{L}}
\newcommand{\HF}{\mathfrak{H}}  
\newcommand{\DB}{\mathfrak{D}}  
\newcommand{\Ycal}{{\cal Y}}
\newcommand{\Xcal}{{\cal X}}
\newcommand{\CC}{{\mathbb{C}}}
\newcommand{\ZZ}{{\mathbb{Z}}}
\renewcommand{\H}{\mathscr{H}}
\newcommand{\Lin}{\mathcal{L}}
\def\serge#1{{\color{red}\sf [SF: #1]}}
\def\yuhsuan#1{{\color{red}\sf [YH: #1]}}
\def\KM#1{{\color{red}\sf [KM: #1]}}
\def\serge#1{}
\def\yuhsuan#1{}
\def\KM#1{}
\newtheorem{thm}{Theorem}[section] 
\newtheorem*{lemma*}{Lemma}
    \newtheorem{theorem}[thm]{Theorem}
    \newtheorem{lemma}[thm]{Lemma}
    \newtheorem{definition}[thm]{Definition}
    \newtheorem{corollary}[thm]{Corollary}
    \newtheorem{proposition}[thm]{Proposition}
    \theoremstyle{definition}
    \newtheorem{remark}[thm]{Remark}
\theoremstyle{definition}
    \newtheorem{example}[thm]{Example}
\title{On the Compressed-Oracle Technique, and Post-Quantum Security of Proofs of Sequential Work%
\thanks{This is the full version of an article submitted by the authors to the IACR and to Springer Verlag in March 2021. The published version is available from the proceedings of {\it Advances in Cryptology\,--\,EUROCRYPT 2021}. }
}
\begin{document}

\ifnum\submission=0

\author[1]{Kai-Min Chung}
\affil[1]{\small Academia Sinica, Taiwan ({\tt kmchung@iis.sinica.edu.tw})}

\author[2]{Serge Fehr}
\affil[2]{\small CWI Cryptology Group and Leiden University, The Netherlands ({\tt serge.fehr@cwi.nl})}

\author[3]{Yu-Hsuan Huang}
\affil[3]{\small National Chiao-Tung University, Taiwan ({\tt asd00012334.cs04@nctu.edu.tw})}

\author[4]{Tai-Ning Liao}
\affil[4]{\small National Taiwan University, Taiwan ({\tt tonyliao8631@gmail.com})}

\date{}

\else
\author{}
\institute{}
\fi
\maketitle

\ifnum\submission=1
    \renewcommand{\baselinestretch}{0.95}
    \vspace{-4ex}
\fi

\begin{abstract}
\ifnum\submission=0
We revisit the so-called {\em compressed oracle} technique, introduced by Zhandry for analyzing quantum algorithms in the quantum random oracle model (QROM). 
This technique has proven to be very powerful for reproving known lower bound results, but also for proving new results that seemed to be out of reach before. 
Despite being very useful, it is however still quite cumbersome to actually employ the compressed oracle technique. 

To start off with, we offer a {\em concise} yet {\em mathematically rigorous} exposition of the compressed oracle technique. We adopt a more abstract view than other descriptions found in the literature, which allows us to keep the focus on the relevant aspects. 
Our exposition easily extends to the 
{\em parallel-query} QROM, where in each query-round the considered quantum oracle algorithm may make {\em several} queries to the QROM {\em in parallel}. This variant of the QROM allows for a more fine-grained query-complexity analysis of quantum oracle algorithms.

Our main technical contribution is a framework that {\em simplifies} the use of (the parallel-query generalization of) the compressed oracle technique for proving query complexity results. With our framework in place, whenever applicable, it is possible to prove {\em quantum} query complexity lower bounds by means of purely {\em classical} reasoning. 
More than that, we show that, for typical examples, the crucial classical observations that give rise to the classical bounds are {\em sufficient} to conclude the corresponding quantum bounds. 

We demonstrate this on a few examples, recovering known results (like the optimality of parallel Grover), but also obtaining new results (like the optimality of parallel BHT collision search). Our main application is to prove hardness of finding a $q$-chain, i.e., a sequence $x_0,x_1,\ldots,x_q$ with the property that $x_i = H(x_{i-1})$ for all $1 \leq i \leq q$, with fewer than $q$ parallel queries. 



The above problem of producing a hash chain is of fundamental importance in the context of {\em proofs of sequential work}. 
Indeed, as a concrete cryptographic application, we prove that the ``Simple Proofs of Sequential Work" proposed by Cohen and Pietrzak remains secure against quantum attacks. Such proof is not simply a matter of plugging in our new bound; the entire protocol needs to be analyzed in the light of a quantum attack, and substantial additional work is necessary. Thanks to our framework, this can now be done with purely classical reasoning. 
\else
We revisit the so-called compressed oracle technique, introduced by Zhandry for analyzing quantum algorithms in the quantum random oracle model (QROM). To start off with, we offer a concise exposition of the technique, which easily extends to the parallel-query QROM, where in each query-round the considered algorithm may make several queries to the QROM in parallel. This variant of the QROM allows for a more fine-grained query-complexity analysis.

Our main technical contribution is a framework that simplifies the use of (the parallel-query generalization of) the compressed oracle technique for proving query complexity results. With our framework in place, whenever applicable, it is possible to prove quantum query complexity lower bounds by means of purely classical reasoning. More than that, for typical examples the crucial classical observations that give rise to the classical bounds are sufficient to conclude the corresponding quantum bounds.

We demonstrate this on a few examples, recovering known results (like the optimality of parallel Grover), but also obtaining new results (like the optimality of parallel BHT collision search). Our main target is the hardness of finding a $q$-chain with fewer than $q$ parallel queries, i.e., a sequence $x_0, x_1,\ldots, x_q$ with $x_i = H(x_{i−1})$ for all $1 \leq i \leq q$.

The above problem of finding a hash chain is of fundamental importance in the context of proofs of sequential work. Indeed, as a concrete cryptographic application of our techniques, we prove that the “Simple Proofs of Sequential Work” proposed by Cohen and Pietrzak remains secure against quantum attacks. Such an analysis is not simply a matter of plugging in our new bound; the entire protocol needs to be analyzed in the light of a quantum attack. Thanks to our framework, this can now be done with purely classical reasoning.
\fi
\end{abstract}

\section{Introduction}

\paragraph{\bf Background. }
The random oracle methodology~\cite{BellareRo93} has proven to be a successful way to design very efficient cryptographic protocols and to argue them secure in a rigorous yet idealized manner. The considered idealization treats a cryptographic hash function $H:\{0,1\}^n\rightarrow\{0,1\}^m$ as an external {\em oracle} that the adversary needs to query on $x \in \{0,1\}^n$ in order to learn $H(x)$. Furthermore, this oracle, called {\em random oracle} (RO) then, answers these queries by means of a {\em uniformly random} function $H:\{0,1\}^n\rightarrow\{0,1\}^m$. 
Even though it is known that in principle the methodology can break down~\cite{canetti2004random} and a ``proven secure" protocol may become insecure in the actual (non-idealized) setting, experience has shown that for natural protocols this does not seem to happen. 

In case of a {\em quantum} adversary that may locally run a quantum computer, the RO needs to be modeled as a quantum operation that is capable of answering queries {\em in superposition}, in order to reasonably reflect the capabilities of an attacker in the non-idealized setting~\cite{BDFLSZ11}. This is then referred to as the {\em quantum random oracle model} (QROM). Unfortunately, this change in the model renders typical RO-security proofs invalid. One reason is that in the ordinary RO model the security reduction can inspect the queries that the adversary makes to the RO, while this is not possible anymore in the quantum setting when the queries are quantum states in superposition\,---\,at least not without disturbing the query state significantly and, typically, uncontrollably.  

\ifnum\submission=1 \par\vspace{-1ex}\noindent \fi
\paragraph{\bf The Compressed Oracle. }
A very powerful tool to deal with the QROM is the so-called {\em compressed oracle} technique, introduced by Zhandry~\cite{zha19}. On a conceptual level, the technique very much resembles the classical "lazy sampling" technique; on a technical level, the idea is to consider a {\em quantum purification} of the random choice of the function $H$, and to analyze the internal state of the RO then in the Fourier domain. 

This idea has proven to be very powerful. On the one hand, it gives rise to new and shorter proofs for known lower bound results on the query complexity of quantum algorithms (like Grover~\cite{grover1996fast,bennett1997strengths}); on the other hand, it allows for proving new cryptographic security results that seemed to be out of reach before, like in the context of {\em indifferentiability} \cite{zha19,CzajkowskiMSZ19}, or, more recently, the {\em Fiat-Shamir transformation}~\cite{LiuZ19}, when considering a quantum adversary. 
Despite being very useful, it is however still quite cumbersome to actually employ the compressed oracle technique. Proofs tend to be hard to read, and they require a good understanding of quantum information science. 


\ifnum\submission=1 \par\vspace{-1ex}\noindent \fi
\paragraph{\bf Our Results. }
We first present a {\em concise} yet {\em mathematically rigorous} exposition of the compressed oracle technique. 
Our exposition differs from other descriptions found in the literature (e.g. \cite{zha19,HosoyamadaI19,CzajkowskiMSZ19,ChiesaMS19,Hamoudi2020}) in that we adopt a more abstract view in terms of Fourier transform for arbitrary finite Abelian groups, i.e., by considering the range of $H$ to be an arbitrary finite Abelian group. 
Some readers may, to start with, feel uncomfortable with this approach, but it allows us to keep the focus on the relevant aspects, and, on the long run, abstraction simplifies matters and improves the understanding. 

We also consider a generalization of the compressed-oracle technique to the {\em parallel-query} QROM. In this variation of the standard QROM, the considered quantum oracle algorithm may make {\em several} queries to the QROM {\em in parallel} in each query-round. The main difference between parallel and sequential queries is of course that sequential queries may be {\em adaptive}, i.e., the queried value $x$ may depend on the hash learned in a previous query, while parallel queries are limited to be {\em non-adaptive}, i.e., the queries are independent of the hash values that are to be learned.
This variation of the QROM allows for a more fine-grained query-complexity analysis that distinguishes between the number $q$ of query rounds, and the number $k$ of queries made {\em per round}; the {\em total} number of queries made is then obviously given by $Q = kq$. This way of studying the query complexity of quantum oracle algorithms is in particular suited for analyzing how well a computational task can or cannot be parallelized (some more on this below). 

As our first main technical contribution, we propose an abstract framework that simplifies the use of (our generalized version of) the compressed oracle technique in certain cases. In particular, with our new framework in place and whenever it is applicable, it is possible to prove {\em quantum} query complexity lower bounds by means of purely {\em classical} reasoning: all the quantum aspects are abstracted away by our framework. This means that no knowledge about quantum information science is actually necessary in order to apply our framework. If applicable, the reasoning is purely by means of identifying some classical property of the problem at hand and applying our meta-theorems. More than that, the necessary classical property can typically be extracted from the\,---\,typically much simpler\,---\,proof for the classical query complexity bound. 

We demonstrate the workings and the power of our framework on a few examples, recovering known and finding new bounds. For example, with $q,k,m$ as above, we show that the success probability of finding a {\em preimage} is upper bounded by $O(k q^2/2^m)$, compared to the coarse-grained bound $O(Q^2/2^m)$ \cite{bennett1997strengths} that does not distinguish between sequential and parallel queries; this recovers the known fact that the naive way to parallelize a preimage search (by doing several executions of Grover~\cite{grover1996fast} in parallel) is optimal~\cite{Zalka99}.%
\footnote{This parallel lower bound can be improved for ``unbalanced'' algorithms for which $k$ varies from query to query; see e.g.~\cite[Lemma 2]{AHU19}.}
We also show that the success probability of finding a {\em collision} is bounded by $O(k^2 q^3/2^m)$, compared to the coarse-grained bound $O(Q^3/2^m)$ \cite{ambainis2005polynomial} that does not distinguish between sequential and parallel queries. 
Like for Grover, this shows optimality for the obvious parallelization of the BHT collision finding algorithm~\cite{brassard1997quantum}, which makes $k$-parallel queries in the first phase to collect $kq/2$ function values and then runs a parallel Grover in the second phase, which gives a factor $k^2$ improvement. We are not aware of any prior optimality result on parallel collision search; \cite{JefferyMW17} shows a corresponding bound for {\em element distinctness}, but that bound does not apply here when considering a hash function with many collisions. 
Finally, our main example application is to the problem of finding a {\em $q$-chain}, i.e., a sequence $x_0,x_1,\ldots,x_q$ with the property that $x_i = H(x_{i-1})$ for all $1 \leq i \leq q$ (or, more generally, that $H(x_{i-1})$ is a substring of $x_i$, or yet satisfies some other relation). 
While classically it is well known and not too hard to show that $q$ parallel queries are necessary to find a $q$-chain, there has been no proven bound in the quantum setting\,---\,at least not until very recently (see the recent-related-work paragraph below).%
\footnote{The problem of finding a $q$-chain looks very similar to the {\em iterated hashing} studied by Unruh in~\cite{Unruh14}; however, a crucial difference is that the start of the chain, $x_0$, can be freely chosen here. } 
Here, we show that the same does hold in the quantum setting. Formally, we prove that the success probability of finding a $q$-chain using {\em fewer} than $q$ queries is upper bounded by $O(k^3 q^3/2^m)$. 
The proof is by means of recycling an observation that is crucial to the classical proof, and plugging it into the right theorem(s) of our framework.

The problem of producing a hash chain is of fundamental importance in the context of {\em proofs of sequential work} (PoSW); indeed, a crucial ingredient of a PoSW is a computational problem that is hard/impossible to parallelize. 
Following up on this, our second main technical contribution is to show that the ``Simple Proofs of Sequential Work" proposed by Cohen and Pietrzak~\cite{cohen2018simple} remain secure against quantum attacks. 
One might hope that this is simply a matter of plugging in our bound on the chain problem; unfortunately, it is more complicated than that: the entire protocol needs to be analyzed in the light of a quantum attack, and substantial additional work is necessary to reduce the security of the protocol to the hardness of finding a chain. As a matter of fact, we enrich our framework with a ``calculus'' that facilitates the latter. 
In return, relying on our framework, the proof of the quantum security of the PoSW scheme is purely classical, with no need to understand anything about quantum information science.

\ifnum\submission=1 \par\vspace{-1ex}\noindent \fi
\paragraph{\bf Related Work. }

Independently and concurrently to the preparation of our work, the hardness of finding a $q$-chain with fewer than $q$ queries and the security of the Cohen and Pietrzak PoSW scheme~\cite{cohen2018simple} against quantum attacks have also been analyzed and tackled by Blocki, Lee and Zhou in~\cite{BlockiLZ20}.%
\footnote{An early version of [5] with weaker results (a weaker bound on the $q$-chain problem and no PoSW proof) appeared before our work, a newer version with results comparable to ours then appeared almost simultaneously to our work, and an update of the newer version that fixed certain technical issues appeared after our work. } 
Their bounds are comparable to ours, and both works are exploiting the compressed oracle idea; however, the actual derivations and the conceptual contributions are quite different. Indeed, Blocki {\em et al.}'s work is very specific to the $q$-chain problem and the PoSW scheme, and verifying the proofs requires a deep understanding of quantum information science in general and of the compressed oracle technique in particular. In contrast, in our work we provide a {\em general} framework for proving {\em quantum} query complexity bounds by means of {\em classical} reasoning. Verifying our framework also requires a deep understanding of quantum information science and of the compressed oracle,  but once our framework is place, the proofs become purely classical and thus accessible to a much broader audience. Furthermore, even though our original targets were the $q$-chain problem and the PoSW scheme, our framework provides means to tackle other quantum query complexity bounds as well, as is demonstrated with our new collision finding bound. Thus, our framework opens the door for non-quantum-experts to derive quantum query complexity bounds for their problems of merit. 

In the same spirit, Chiesa, Manohar and Spooner~\cite{ChiesaMS19} also offer means to apply the compressed oracle technique using purely classical combinatorial  reasoning. A major difference is that in our work we allow {\em parallel} queries (which is crucial for our PoSW application), which confronted us with the main technical challenges in our work. Our framework easily applies to the main application of the Chiesa {\em et al.}\ paper (post-quantum secure SNARGs), but not vice~versa.

\section{Warm-up: Proving Classical Query Complexity Lower Bounds}\label{sec:CTC}

In this section, we discuss lower bounds on the {\em classical} query complexity in the classical ROM for a few example problems. This serves as a warm-up and as a reminder of how such classical bounds are (or can be) rigorously proven. Additionally, it demonstrates that, when it then comes to analyzing the {\em quantum} query complexity of these problems, it is simply a matter of recycling certain observations from the classical proofs and plugging them into our framework.

\subsection{The Lazy-Sampling Technique}

First, let us briefly recall the {\em lazy sampling} technique, which allows us to efficiently simulate the random oracle. Instead of choosing a uniformly random function $H: {\cal X} \to {\cal Y}$ and answering each query $x$ to random oracle as $y = H(x)$, one can build up the hash function $H$ ``on the fly". Introduce a special symbol $\bot$, which stands for ``not defined (yet)", and initiate $D_0$ to be the constant-$\bot$ function. 
Then, inductively for $i = 1,2,\ldots$, on receiving the $i$-th query $x_i$, check if this query has been made before, i.e., if $x_i = x_j$ for some $j < i$. If this is the case then set $D_i := D_{i-1}$; else, do the following: choose a uniformly random $y_i \in {\cal Y}$ and set $D_i$ to $D_i := D_{i-1}[x_i \!\mapsto\! y_i]$, where in general $D[x \!\mapsto\! y]$ is defined by $D[x \!\mapsto\! y](x) = y$ and $D[x \!\mapsto\! y](\bar x) = D(\bar x)$ for $\bar x \neq x$.%
\footnote{We stress that we define $D[x \!\mapsto\! y]$ also for $x$ with $D(x) \neq \bot$, which then means that $D$ is {\em re}defined at point $x$; this will be useful later. }
In either case, answer the query then with $y_i = D_i(x_i)$. We refer to such a function $D_i: {\cal X} \to {\cal Y} \cup \{\bot\}$ as a {\em database}. 

As it is easy to see, the lazy-sampling only affects the ``internal workings" of the random oracle; any algorithm making queries to the standard random oracle (which samples $H$ as a random function at the beginning of time), or to the lazy-sampled variant (which builds up $D_0,D_1,\ldots$ as explained above), cannot see any difference. 

For below, it will be convenient to write $D_i$, the ``update" of $D_{i-1}$ in response to query $x_i$, as $D_i = D_{i-1}^{\circlearrowleft x_i}$. 
Note that since $D_i(x) = y_i$ is chosen in a randomized way, \smash{$D_{i-1}^{\circlearrowleft x_i}$} is a random variable, strictly speaking. 

\ifnum\submission=0
\subsection{Efficient Representation}\label{sec:EffRepClassical}
One important feature of the lazy-sampling technique is that it allows for an {\em efficient} simulation of the random oracle. Indeed, compared to a uniformly random function $H: {\cal X} \to {\cal Y}$, the databases $D_0,D_1,\ldots$ can be efficiently represented by means of an encoding function $enc$ that maps any database $D: {\cal X} \to {\cal Y} \cup \{\bot\}$ to (a suitable representation of) the list of pairs $\big(x,D(x)\big)$ for which $D(x) \neq \bot$.%
\footnote{This representation as a list of pairs somewhat justifies the terminology "database" for $D$. }
Obviously, for a bounded number of queries, the list $enc(D_i)$ remains bounded in size. Furthermore, the update $enc(D_i) \mapsto enc(D_{i+1}) = enc(D_{i}[x_i \!\mapsto\! y_i])$ can be efficiently computed (for any choice of $y_i$). 
\fi

\subsection{Proving Classical Lower Bounds}\label{sec:ClLowerBounds}

\ifnum\submission=1
One important feature of the lazy-sampling technique is that it allows for an {\em efficient} simulation of the random oracle. 
\fi
In the work here, we are more interested in the fact that the lazy sampling idea is useful for showing lower bounds on the query complexity for certain tasks. Our goal here is to show on a few examples that the well-understood classical reasoning is very close to the reasoning that our framework will admit for proving bounds in the quantum setting. In order to align the two, certain argumentation below may appear overkill given the simplicity of the classical case. 

\paragraph{\bf Finding a Preimage. }

We first consider the example of finding a preimage of the random oracle, say, without loss of generality, finding $x \in {\cal X}$ with $H(x) = 0$. Thus, let $\cal A$ be an algorithm making $q$ queries to the random oracle and outputting some $x$ at the end, with the goal of $x$ being a zero-preimage. A first simple observation is the following: if in the lazy-sampling picture after $q$ queries the built-up database $D_q : {\cal X} \to {\cal Y} \cup \{\bot\}$ does not map $\cal A$'s output $x$ to $0$, then $H(x)$ is unlikely to vanish, where $H(x)$ is understood to be obtained by making one more query to the oracle, i.e., $H(x) = D_{q+1}(x)$. 
More formally, if $p$ is the probability that $H(x) = 0$ when $\cal A$ is interacting with the standard oracle, and $p'$ is the probability that $D_q(x) = 0$ when $\cal A$ is interacting with the lazy-sampled oracle, then $p \leq p' + 1/|{\cal Y}|$. Looking ahead, this trivial observation is the classical counterpart of Corollary~\ref{cor:zha} (originally by Zhandry) that we encounter later.  

The above observation implies that it is sufficient to show that $P[\exists \, x: D_q(x) \!=\! 0]$ is small. Furthermore, writing $\PRMG := \{ D: {\cal X} \to {\cal Y} \cup \{\bot\}\, |\, \exists\, x : D(x) = 0 \}$, we can write and decompose 
$$
P[\exists \, x: D_q(x) \!=\! 0] = P[D_q \!\in\! \PRMG\,] \leq \sum_i P[D_i \!\in \! \PRMG \,| \, D_{i-1} \!\not\in \! \PRMG\,] \, .
$$
In order to align the reasoning here with our framework, which relies on the notion of a {\em quantum transition capacity}, we introduce here the {\em classical transition capacity} 
$$
\CTC{\neg \PRMG}{\PRMG}{} := \max_{D \not\in \PRMG \atop x \in {\cal X}} P[D^{\circlearrowleft x} \!\in \PRMG \,]
$$
as the maximal probability that a database $D: {\cal X} \to {\cal Y} \cup \{\bot\}$ with {\em no} zero-preimage will be turned into a database {\em with} a zero-preimage as a result of a query. 
Combining the above observations, we obtain that 
\begin{equation}\label{eq:classicalPREIMG}
p \leq q \cdot \CTC{\neg \PRMG}{\PRMG}{}  + \frac{1}{|{\cal Y}|} \, . 
\end{equation}
Looking ahead, this is the classical counterpart to Theorem~\ref{thm:QTCBound} (with $\P_s$ set to $\PRMG$), which is in terms of the (appropriately defined) {\em quantum} transition capacity $\QTC{\cdot }{\cdot}{}$. 

The reader probably already sees that $\CTC{\neg \PRMG}{\PRMG}{}  = 1/{|{\cal Y}|}$, leading to the (well-known) bound $p \leq (q+1)/{|{\cal Y}|}$. 
However, in order to better understand the general reasoning, we take a more careful look at bounding this transition capacity. For every $D \not\in \PRMG$ and $x \in {\cal X}$, we identify a ``{\em local}" property $\L^{D,x} \subseteq \Ycal$ that satisfies
$$
D[x \!\mapsto\! y] \in \PRMG \;\Longleftrightarrow\; y \in \L^{D,x} \, ;
$$
therefore, $P[D^{\circlearrowleft x} \!\in \PRMG\,] \leq P\bigl[D[x \!\mapsto\! U] \!\in\! \PRMG\bigr] = P[U \!\in\! \L^{D,x}]$ where $U$ is defined to be uniformly random in $\Ycal$.
Here, we can simply set $\L^{D,x} := \{0\}$ and thus obtain $\CTC{\neg \PRMG}{\PRMG}{} = P[U \!=\! 0] = 1/{|{\cal Y}|}$ as claimed. 

The point of explicitly introducing $\L^{D,x}$ is that our framework will offer similar connections between the {\em quantum} transition capacity $\QTC{\cdot }{\cdot}{}$ and the purely classically defined probability $P[U \!\in\! \L^{D,x}]$. Indeed, by means of the very same choice of local property $\L^{D,x}$, but then applying Theorem~\ref{thm:simple}, we obtain 
$$
\QTC{\neg\PRMG }{\PRMG}{} \leq \max_{D,x} \sqrt{10 P\bigl[U \!\in\! \L^{D,x}\bigr]} \leq \sqrt{\frac{10}{|\Ycal|}} \, . 
$$
By Theorem~\ref{thm:QTCBound}, this implies that the success probability $p$ of a {\em quantum} algorithm to find a preimage is bounded by 
$$
p \leq \Bigg(q \QTC{\neg\PRMG }{\PRMG}{} + \frac{1}{\sqrt{|\Ycal|}}\Bigg)^2 \leq \Bigg(q \sqrt{\frac{10}{|\Ycal|}} + \frac{1}{\sqrt{|\Ycal|}}\Bigg)^2 = O\biggl(\frac{q^2}{|\Ycal|}\biggr) \, ,
$$
confirming the optimality of the quadratic speed-up of Grover.

\paragraph{\bf Finding a Preimage with Parallel Queries. }

The above (classical and quantum) reasoning can be extended to the parallel query model, where with each interaction with the random oracle, a query algorithm can make $k$ queries in one go. The lazy-sampling technique then works in the obvious way, with the function update $D_i := D_{i-1}^{\circlearrowleft \bfx_i}$ now involving a query {\em vector} $\bfx_i \in \Xcal^k$. This then gives rise to \smash{$\CTC{\neg\PRMG }{\PRMG}{k}$}, and (\ref{eq:classicalPREIMG}) generalizes accordingly. For $D \not\in \PRMG$ and $\bfx = (x_1,\ldots,x_k) \in {\cal X}^k$, we then identify a {\em family} of $k$ local properties $\L_1^{D,\bfx}, \ldots,\L_k^{D,\bfx} \subseteq \Ycal$ so that  
\begin{equation}\label{eq:StrongRecPREIMG}
D[\bfx \!\mapsto\! {\bf y}] \in \PRMG \;\Longleftrightarrow\; \exists \,i: y_i \in \L_i^{D,\bfx} \, ,
\end{equation}
and therefore, by the union bound, 
$P[D^{\circlearrowleft \bfx} \!\in \PRMG\,] \leq \sum_i P[U \!\in\! \L_i^{D,\bfx}]$. Setting
$\L_1^{D,\bfx} = \ldots = \L_k^{D,\bfx} := \{0\}$, we now obtain \smash{$\CTC{\neg \PRMG}{\PRMG}{k} = k P[U \!=\! 0] = k/{|{\cal Y}|}$}, showing a factor-$k$ increase in the bound as expected. More interesting is that Theorem~\ref{thm:simple} still applies, implying that for the quantum version we have  
$$
\QTC{\neg\PRMG }{\PRMG}{k} \leq \max_{D,\bfx} \sqrt{10 \sum_i P\bigl[U \!\in\! \L_i^{D,\bfx}\bigr]} \leq \sqrt{\frac{10k}{|\Ycal|}} \, .
$$
Plugging this into Theorem~\ref{thm:QTCBound}, we then get the bound
$$
p \leq \Bigg(q \sqrt{\frac{10k}{|\Ycal|}} + \frac{1}{\sqrt{|\Ycal|}}\Bigg)^2 = O\biggl(\frac{q^2 k}{|\Ycal|}\biggr) \, ,
$$
showing optimality of running $k$ parallel executions of Grover.

\paragraph{\bf Finding a Chain (with Parallel Queries). }

Another example we want to discuss here, where we now stick to the parallel query model, is the problem of finding a $(q+1)$-chain, i.e., a sequence $x_0,x_1,\ldots,x_{q+1}$ with $H(x_{i-1}) \triangleleft x_i$, with no more than $q$ (parallel) queries. Here, $\triangleleft\,$ refers to an arbitrary relation among the elements of $\Xcal$ and $\Ycal$; typical examples are: $y \triangleleft x$ if $x = y$, or if $y$ is a prefix of $y$, or if $y$ is an arbitrary continuous substring of $x$. Below, we set $\Ycal^{\triangleleft x} := \{y \in \Ycal \,|\,y \triangleleft x\}$ and $T := \max_x |\Ycal^{\triangleleft x}|$. 

Using the same kind of reasoning as above, we can argue that 
$$
p \leq \sum_{s=1}^q \CTC{\neg \chain^s}{\chain^{s+1}}{k} + \frac{2}{|\Ycal|}  \, , 
$$
where ${\chain}^s = \{D\,|\,\exists\, x_0,x_1,\ldots,x_s \in {\cal X}: D(x_{i-1}) \triangleleft x_i \:\forall i\}$. Here, it will be useful to exploit that after $s$ (parallel) queries, $D_s \in \SIZE[ks] := \{ D\,|\,|\{x|D(x) \!\neq\! \bot\}| \leq ks\}$%
\ifnum\submission=0
   , i.e., that the {\em size} of the database $D_s$, measured as the number of $x$'s for which $D_s(x) \neq \bot$, is at most $ks$
\fi
.
Thus, the above extends to 
\begin{equation}\label{eq:classicalCHAIN}
p \leq \sum_{s=1}^q \CTC{\SIZE[k(s-1)] \backslash \chain^s}{\chain^{s+1}}{k} + \frac{2}{|\Ycal|} \, ,  
\end{equation}
with the (classical) transition capacity here given by $\max P[D^{\circlearrowleft \bfx} \!\in \chain^{s+1}\,]$, maximized over all $\bfx \in \Xcal^k$ and $D \in \SIZE[k(s-1)] \setminus \chain^s$. 
To control the considered (classical and quantum) transition capacity, for any $D$ and any $\bfx = (x_1,\ldots,x_k) \in \Xcal^k$, we introduce the following local properties $\L_i^{D,\bfx} \subseteq \Ycal$ with $i =1,\ldots,k$:  
\begin{equation}\label{eq:LocalPropForChain}
\L_i^{D,\bfx} = \bigcup_{x \in \Xcal \atop D(x)\neq \bot} \!\!\!\!\Ycal^{\triangleleft x} \cup \bigcup_{j=1}^k \Ycal^{\triangleleft x_j} \, , 
\end{equation}
so that $y_i \in \L_i^{D,\bfx}$ if $y_i \triangleleft x$ for some $x \in \Xcal$ with $D(x) \neq \bot$ or $x \in \{x_1,\ldots,x_k\}$. 
They satisfy the following condition, which is slightly weaker than (\ref{eq:StrongRecPREIMG}) used above. 

\begin{lemma}\label{lem:classicalCHAINproperty}
$
D[\bfx \!\mapsto\! \bfr] \not\in \chain^s \,\wedge\, D[\bfx \!\mapsto\! \bfu] \in \chain^{s+1} \,\Longrightarrow\, \exists \, i : r_i \neq u_i \,\wedge\, u_i \in \L_i^{D,\bfx}. 
$
\end{lemma}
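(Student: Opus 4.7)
The statement is essentially the ``chain-extension observation'' sketched in the commented-out text: to turn a database with no $s$-chain into one with an $(s+1)$-chain via a parallel query, at least one new image $u_i$ must land inside a point that either was already in the support of $D$ or is being queried in the current round. The plan is to argue this by contrapositive: assume for every index $i$ where the new chain passes through $x_i$ we have $r_i = u_i$, and derive that $D[\bfx\!\mapsto\!\bfr]$ would already contain an $s$-chain, contradicting the hypothesis.

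Concretely, I would first unpack the assumption $D[\bfx\!\mapsto\!\bfu]\in\chain^{s+1}$ as the existence of witnesses $v_0,v_1,\ldots,v_{s+1}\in\Xcal$ with $D[\bfx\!\mapsto\!\bfu](v_{j-1})\triangleleft v_j$ for $j=1,\ldots,s+1$. Then I would note the ``locality'' of the reprogramming: for any $v\in\Xcal$, $D[\bfx\!\mapsto\!\bfr](v)$ and $D[\bfx\!\mapsto\!\bfu](v)$ agree unless $v=x_i$ for some $i$ with $r_i\neq u_i$. Suppose (towards contradiction) that for every $j\in\{1,\ldots,s\}$ and every index $i$ with $v_{j-1}=x_i$, we have $r_i=u_i$. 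Then $D[\bfx\!\mapsto\!\bfr](v_{j-1})=D[\bfx\!\mapsto\!\bfu](v_{j-1})\triangleleft v_j$ for $j=1,\ldots,s$, so $v_0,\ldots,v_s$ is an $s$-chain witnessing $D[\bfx\!\mapsto\!\bfr]\in\chain^s$, contradicting the hypothesis.

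Hence there is some $j\in\{1,\ldots,s\}$ and index $i$ with $v_{j-1}=x_i$ and $r_i\neq u_i$. By definition, $u_i = D[\bfx\!\mapsto\!\bfu](v_{j-1})\triangleleft v_j$, so it remains to show $v_j$ lies in the support of $D$ or in $\{x_1,\ldots,x_k\}$; this will put $u_i$ into $\L_i^{D,\bfx}$ by the definition in~\eqref{eq:LocalPropForChain}. The key observation is that since $j\le s$, the next element $v_{j+1}$ exists in the chain and satisfies $D[\bfx\!\mapsto\!\bfu](v_j)\triangleleft v_{j+1}$, which forces $D[\bfx\!\mapsto\!\bfu](v_j)\neq\bot$; but this is only possible if either $D(v_j)\neq\bot$ or $v_j\in\{x_1,\ldots,x_k\}$, exactly the two cases unioned in the definition of $\L_i^{D,\bfx}$.

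The main subtlety, and the only place where care is really needed, is the index bookkeeping: the contradiction must be set up so that the ``problematic'' step of the chain sits at some $j\le s$, so that $v_j$ still has a successor $v_{j+1}$ in the longer $(s+1)$-chain, which is what certifies $v_j$ as a point already ``known'' to $D$ or queried in $\bfx$. Once this is arranged correctly, the rest is a direct unfolding of the definitions of $\chain^s$, $D[\bfx\!\mapsto\!\cdot]$, and $\L_i^{D,\bfx}$.
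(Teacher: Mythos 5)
Your proposal is correct and follows essentially the same route as the paper's proof: both locate an index $j-1 \leq s-1$ of the witnessing chain where the two reprogrammed databases disagree (the paper via the smallest such index, you via the contrapositive), conclude that this vertex must be some $x_i$ with $r_i \neq u_i$, and then use the existence of the successor $v_{j+1}$ to certify that $v_j$ is either in the support of $D$ or among the queried points, hence $u_i \in \L_i^{D,\bfx}$. The index bookkeeping you flag as the main subtlety is handled correctly.
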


\begin{proof}
Write $D_\circ$ for $D[\bfx \!\mapsto\! \bfr]$ and $D'$ for $D[\bfx \!\mapsto\! \bfu]$. Assume that $D' \in \chain^{s+1}$, and let $\hat x_0,\hat x_1,\ldots,\hat x_{s+1} \in \Xcal$ be such a chain, i.e., so that $D'(\hat x_j) \triangleleft \hat x_{j+1}$ for $j = 0,\ldots,s$. Let $s_\circ$ be the smallest $j$ so that $D_\circ(\hat x_j) \neq D'(\hat x_j)$; if $s_\circ \geq s$ (or no such $j$ exists) then $D_\circ(\hat x_j) = D'(\hat x_j) \triangleleft \hat x_{j+1}$ for $j = 0,\ldots,s-1$, and thus $D_\circ \in \chain^{s}$ and we are done. Therefore, we may assume $s_\circ < s$. Furthermore, since $D_\circ(\bar x) = D'(\bar x)$ for $\bar x \not\in \{x_1,\ldots,x_k\}$, we must have that $\hat x_{s_\circ} = x_i$ for some $i \in \{1,\ldots,k\}$, and therefore $r_i = D_\circ(x_i) = D_\circ(\hat x_{s_\circ}) \neq D'(\hat x_{s_\circ}) = D'(x_i) = u_i$. 
Also, we have that $u_i  = D'(x_i) = D'(\hat x_{s_\circ}) \triangleleft \hat x_{s_\circ+1}$ where $\hat x_{s_\circ+1}$ is such that $D'(\hat x_{s_\circ+1}) \triangleleft \hat x_{s_\circ+2}$ and thus $\neq \bot$. The latter means that either $D(\hat x_{s_\circ+1}) \neq \bot$ or $\hat x_{s_\circ+1} \in \{x_1,\ldots,x_k\}$ (or both). In either case we have that $u_i \in \L_i^{D,\bfx}$. 
\end{proof}

Applied to $\bfr := D(\bfx)$ so that $D[\bfx \!\mapsto\! \bfr] = D$, we obtain $P[D^{\circlearrowleft \bfx} \!\in \chain^{s+1}\,] \leq \sum_i P[\,U \!\in\! \L_i^{D,\bfx}]$. Given that, for $D \in \SIZE[k(s-1)]$, the set $\{x|D(x)\!\neq\!\bot\}$ is bounded in size by $k(s-1)$, and  $|\Ycal^{\triangleleft x}|,|\Ycal^{\triangleleft x_j}| \leq T$, we can bound the relevant probability \smash{$P[U \!\in\! \L_i^{D,x}] \leq ksT/|\Ycal|$}. Hence, the considered classical transition capacity is bounded by $k^2 s T/|\Ycal|$. By (\ref{eq:classicalCHAIN}), we thus have $p = O(k^2 q^2 T/|\Ycal|)$, which is in line with the bound given by Cohen-Pietrzak~\cite{cohen2018simple}. 

Also here, our framework allows us to lift the above reasoning to the quantum setting, simply by plugging the core elements of the above reasoning for the classical case into our framework. Concretely, choosing the local properties $\L_i^{D,\bfx}$ as above whenever $D \in \SIZE[k(s-1)]$, and to be constant-false otherwise, Lemma~\ref{lem:classicalCHAINproperty} ensures that we can apply Theorem~\ref{thm:tricky} to bound the {\em quantum} transition capacity as
\begin{align*}
\QTC{\SIZE[k(s-1)] \!\backslash\! \chain^s}{\chain^{s+1}}{k} &\leq 
e \max_{\bfx,D} \sum_i \!\sqrt{10 P\bigl[U \!\in\! \L_i^{D,\bfx}\bigr]} 
\leq e k \sqrt{\frac{10k(q+1)T}{|\Ycal|}} ,
\end{align*}
where $e$ is Euler's number. Plugging this into Theorem~\ref{thm:QTCBound}, we then get the bound
$$
p \leq \Bigg(q e k \sqrt{\frac{10k(q+1)T}{|\Ycal|}} + \frac{q+2}{|\Ycal|}\Bigg)^2 = O\biggl(\frac{q^3 k^3 T}{|\Ycal|}\biggr) 
$$
on the success probability of a quantum oracle algorithm in finding a $(q\!+\!1)$-chain with no more than $q$ $k$-parallel queries. 
Recall, $T$ depends on the considered relation $y \triangleleft x$; $T = 1$ if $y$ is required to be equal to $x$, or a prefix of $x$, and $T = m-n$ if $y$ and $x$ are $n$- and $m$-bit strings, respectively, and $y$ is required to be a continuous substring of $x$.

\paragraph{\bf Finding a Collision (with Parallel Queries). }

In the same spirit, for the query complexity of finding a {\em collision}, it is sufficient to control the transition capacity for $\col := \{ D \, |\, \exists\, x \neq x' : D(x) = D(x') \neq \bot \}$. Indeed, using the same kind of reasoning as above, we can argue that 
\begin{equation*}
p \leq \sum_{s=1}^q \CTC{\SIZE[k(s-1)] \backslash \col}{\col}{k} + \frac{2}{|\Ycal|} \, ,  
\end{equation*}
with the (classical) transition capacity here given by $\max P[D^{\circlearrowleft \bfx} \!\in \col\,]$, maximized over all $D \in \SIZE[k(s-1)] \setminus \col$ and $\bfx \in \Xcal^k$. In order to analyze this transition capacity, for given $D$ and $\bfx = (x_1,\ldots,x_k) \in \Xcal^k$, we consider the following family of 1-local and 2-local properties: 
$$
\col_{i,j} = \{ (y,y) \,|\, y \in \Ycal \} \subseteq \Ycal \times \Ycal
\qquad\text{and}\qquad
\col_i = \{D(\bar x) \,|\, \bar x \not\in \{x_1,\ldots,x_k\}: D(\bar x) \neq \bot\} \subseteq \Ycal \, ,
$$
indexed by $i \neq j \in \{1,\ldots,k\}$ and $i \in \{1,\ldots,k\}$,  respectively, and where were we leave the dependency on $D$ and $\bfx$ implicit. 
Similar to (\ref{eq:StrongRecPREIMG}), here we have that for any $\bfx \in \Xcal^k$ and $D \in \SIZE[k(s-1)] \setminus \col$ 
$$
D[\bfx \!\mapsto \bfy] \in \col \;\Longleftrightarrow\; \bigl(\exists\, i \!\neq\! j:  (y_i,y_j) \in \col_{i,j}\bigr) \,\vee\, \bigl(\exists\, i : y_i \in \col_{i}\bigr) \, ,
$$
i.e., a collision can only happen for $D[\bfx \!\mapsto \bfy]$ if $y_i = y_j$ for $i \neq j$, or $y_i = D(\bar x)$ for some $i$ and some $\bar x$ outside of $\bfx$. It then follows that 
$$
\CTC{\SIZE[k(s-1)] \backslash \col}{\col}{k} = \sum_{i\neq j} P[(U,U') \!\in\!  \col_{i,j}] + \sum_{i} P[U \!\in\!  \col_{i}] \leq \frac{k(k-1)}{|\Ycal|} + \frac{k^2(s-1)}{|\Ycal|} \, ,
$$
where we exploited that $\col_i $ is bounded in size by assumption on $D$. This then amounts to the classical bound 
$$
p \leq O\biggl(\frac{q^2k^2}{|\Ycal|}\biggr) 
$$ 
on the success probability of finding a collision with no more than $q$ $k$-parallel queries. 

Here, due to the 2-locality of $\col_{i,j}$, there is an additional small complication for deriving the corresponding quantum bound, since in such a case our framework does not relate the corresponding quantum transition capacity to the probability $P[(U,U') \!\in\!  \col_{i,j}]$ of a random pair in $\Ycal \times \Ycal$ satisfying the $2$-local property $\col_{i,j}$. Instead, we have to consider the following derived $1$-local properties. 
For any $i\neq j$ and $D'$, let 
$$
\col_{i,j}|_{D'|^{x_i}} := \col_{i,j} \cap \big((\Ycal \cup \{\bot\}) \times \{D'(x_j)\} \big) =  \{D'(x_j)\} 
\qquad\text{and}\qquad
\col_{i}|_{D'|^{x_i}} := \col_{i} \, .
$$
Then, the considered quantum transition capacity is given in terms of 
$$
P\bigl[U \!\in\! \col_{i,j}|_{D'|^{x_i}}\bigr] = \frac{1}{|\Ycal|}
\qquad\text{and}\qquad
P\bigl[U \!\in\! \col_{i}|_{D'|^{x_i}}\bigr] \leq \frac{kq}{|\Ycal|} \, .
$$
Namely, by Theorem~\ref{thm:simple-general}, 
$$
\QTC{\SIZE[ks]\backslash\col}{\col}{k} 
\leq 2e \sqrt{10 \bigg(\sum_{i \neq j} P\bigl[U \!\in\! \col_{i,j}|_{D'|^{x_i}}\bigr] +  \sum_{i} P\bigl[U \!\in\! \col_{i}|_{D'|^{x_i}}\bigr]\bigg) } 
\leq 2e k \sqrt{10 \, \frac{q+1}{|\Ycal|}} \, .
$$
By Theorem~\ref{thm:QTCBound}, this then amounts to the bound
$$
p \leq O\biggl(\frac{q^3 k^2}{|\Ycal|}\biggr) 
$$ 
on the success probability of a quantum oracle algorithm in finding a collision with no more than $q$ $k$-parallel queries.

\section{Notation}

\subsection{Operators and Their Norms}\label{sec:operators}

Let $\H$ be a finite-dimensional complex Hilbert space; by default, $\H = \CC^d$ for some dimension $d$. 
We use the standard bra-ket notation for covariant and contravariant vectors in $\H$, i.e., for column and row vectors $\CC^d$. 
We write $\Lin(\H,\H')$ for the linear maps, i.e., operators (or matrices), $A: \H \to \H'$, and we use $\Lin(\H)$ as a short hand for $\Lin(\H,\H)$. We write $\id$ for the identity operator in $\Lin(\H)$. 
It is understood that pure states are given by norm-$1$ ket vectors $\ket{\psi} \in \H$ and mixed states by density operators $\rho \in \Lin(\H)$. 

A (possibly) mixed state $\rho \in \Lin(\H)$ is said to be {\em supported} by subspace $\H_\circ \subseteq \H$ if the support of the operator $\rho$ lies in $\H_\circ$, or, equivalently, if any purification $\ket{\Psi} \in \H \otimes \H$ of $\rho$ lies in $\H_\circ \otimes \H$. 
A state is said to be supported by a family of (orthonormal) vectors if it is supported by the span of these vectors. 

We write $\|A\|$ for the {\em operator norm} of $A \in \Lin(\H,\H')$ and recall that it is upper bounded by the {\em Frobenius norm}. 
Special choices of operators in $\Lin(\H)$ are {\em projections} and {\em unitaries}. 
We assume familiarity with these notions, as well as with the notion of an {\em isometry} in $\Lin(\H,\H')$.

If $\H_\circ$ is a subspace of $\H$ and $A \in \Lin(\H_\circ)$ then we can naturally understand $A$ as a map $A \in \Lin(\H)$ by letting $A$ act as zero-map on any $\ket{\psi} \in \H$ that is orthogonal to $\H_\circ$. We point out that this does not cause any ambiguity in $\|A\|$. 
Vice versa, for any $A \in \Lin(\H)$ we can consider its restriction to $\H_\circ$. Here, we have the following. 
If $\H = \H_1 \oplus \ldots \oplus \H_m$ is a decomposition of $\H$ into orthogonal subspaces $\H_i \subseteq \H$, and $A \in \Lin(\H)$ is such that its restriction to $\H_i$ is a map $\H_i \to \H_i$ and coincides with $B_i \in \Lin(\H_i)$ for any $i \in \{1,\ldots,m\}$, then 
$$
\|A\| = \max_{1 \leq i \leq m} \|B_i\| \, .
$$
This is a property we are exploiting multiple times, typically making a reference then to ``basic properties" of the operator norm.

\subsection{The Computational and the Fourier Basis}

Let $\cal Y$ be a finite Abelian group of cardinality $M$, and let $\{\ket{y}\}_{y \in \cal Y}$ be an (orthonormal) basis of $\H = \CC^M$, where the basis vectors are labeled by the elements of $\cal Y$. 
We refer to this basis as the {\em computational basis}, and we also write $\CC[{\cal Y}]$ for $\H = \CC^M$ to emphasize that the considered space is spanned by basis vectors that are labeled by the elements in~$\cal Y$. Let $\hat{\cal Y}$ be the {\em dual group} of $\cal Y$, which consists of all group homomorphisms ${\cal Y} \to \{\omega \in \CC \,|\, |\omega| = 1\}$ and is known to be isomorphic to $\cal Y$, and thus to have cardinality $M$ as well. Up to some exceptions, we consider  $\hat{\cal Y}$ to be an {\em additive} group; the neutral element is denoted~$\hat 0$. 
We stress that we treat $\cal Y$ and $\hat{\cal Y}$ as disjoint sets, even though in certain (common) cases they are {\em naturally} isomorphic and thus considered to be equal. 
The {\em Fourier basis} $\{\ket{\hat y}\}_{\hat y \in \hat{\cal Y}}$ of $\H$ is defined by the basis transformations 
\begin{equation}\label{eq:FourierBasis}
\ket{\hat y} = \frac{1}{\sqrt{M}}\sum_y \hat y(y)^*\ket{y} 
\qquad\text{and}\qquad 
\ket{y} = \frac{1}{\sqrt{M}}\sum_{\hat y} \hat y(y)\ket{\hat y} 
\, ,
\end{equation}
where $(\cdot)^*$ denotes complex conjugation.%
\footnote{By fixing an isomorphism $\Ycal \to \hat\Ycal, y \mapsto \hat y$ we obtain a unitary map $\ket{y} \mapsto \ket{\hat y}$, called {\em quantum Fourier transform (QFT)}. However, we point out that {\em in general} there is no {\em natural} choice for the isomorphism, and thus for the QFT\,---\,but in the common cases there is. We note that in this work we do not fix any such isomorphism and do not make use of a QFT; we merely consider the two bases. }
With the above convention on the notation, we have $\CC[{\cal Y}] = \CC[\hat{\cal Y}] = \H$.%
\footnote{The reader that feels uncomfortable with this abstract approach to the Fourier basis may stick to $\Ycal = \{0,1\}^m$ and replace $\ket{\hat y}$ by ${\sf H}^{\otimes m} \ket{y}$ with $y \in \{0,1\}^m$ and $\sf H$ the Hadamard matrix. }
An elementary property of the Fourier basis is that the operator in $\Lin(\CC[{\cal Y}] \otimes \CC[{\cal Y}])$ defined by $\ket{y}\ket{y'} \mapsto \ket{y\!+\!y'}\ket{y'}$ for $y,y' \in \Ycal$ acts as $\ket{\hat y}\ket{\hat y'} \mapsto \ket{y}\ket{\hat y \!-\! \hat y'}$ for $\hat y, \hat y' \in \hat\Ycal$. 

\smallskip

We will also consider extensions $\Ycal \cup \{\bot\}$ and $\hat\Ycal \cup \{\bot\}$ of the sets $\Ycal$ and $\hat\Ycal$ by including a special symbol $\bot$. We will then fix a norm-$1$ vector $\ket{\bot} \in \CC^{M+1}$ that is orthogonal to $\CC[{\cal Y}] = \CC[\hat{\cal Y}]$, given a fixed embedding of $\CC[{\cal Y}] = \CC^{M}$ into $\CC^{M+1}$. In line with our notation, $\CC^{M+1}$ is then referred to as $\CC[\Ycal \cup \{\bot\}] = \CC[\hat\Ycal \cup \{\bot\}]$. 


\subsection{Functions and Their (Quantum) Representations}

For an arbitrary but fixed non-empty finite set ${\cal X}$, we let $\HF$ be the set of functions $H: {\cal X} \to {\cal Y}$. Similarly,  $\hat{\HF}$ denotes the set of all functions $\hat H: {\cal X} \to \hat{\cal Y}$. Given that we can represent $H$ by its function table $\{H(x)\}_{x \in \Xcal}$, and $\ket{y} \in \CC[\Ycal]$ is understood as a ``quantum representation" of $y \in \Ycal$, we consider $\ket{H} = \bigotimes_x \ket{H(x)}$ to be the ``quantum representation" of $H$, where in such a tensor product we implicitly consider the different registers to be {\em labeled} by $x \in \cal X$ in the obvious way. By our naming convention, the space $\bigotimes_x \CC[\Ycal]$ spanned by all vectors $\ket{H} = \bigotimes_x \ket{H(x)}$ with $H \in \HF$ is denoted $\CC[\HF]$. Similarly, $\ket{\hat H} = \bigotimes_x \ket{\hat H(x)}$ is the ``quantum representation" of $\hat H \in \hat\HF$. By applying (\ref{eq:FourierBasis}) register-wise, any $\ket{H}$ decomposes into a linear combination of vectors $\ket{\hat H}$ with $\hat H \in \hat\HF$, and vice versa. Thus, $\CC[\HF] = \CC[\hat\HF]$. 

\smallskip

Extending $\Ycal$ to $\bar\Ycal := \Ycal \cup \{\bot\}$, we also consider the set $\DB$ of functions (referred to as {\em databases}) $D: {\cal X} \to \bar{\cal Y}$. In line with the above, the ``quantum representation" of a database $D$ is given by $\ket{D} = \bigotimes_x \ket{D(x)} \in \bigotimes_x \CC[\bar\Ycal] = \CC[\DB]$. 
We also consider the set $\hat\DB$ of functions $\hat D: {\cal X} \to \hat\Ycal \cup \{\bot\}$ and have $\CC[\DB] = \CC[\hat\DB]$. 

\smallskip

For $D \in \DB$ and $\bfx = (x_1,\ldots,x_k) \in \Xcal^k$, we write $D(\bfx)$ for $\bigl(D(x_1),\ldots,D(x_k)\bigr) \in \bar\Ycal^k$; similarly for $H \in \HF$. Furthermore, if $\bfx$ has pairwise distinct entries and $\bfr = (r_1,\ldots,r_k) \in \bar\Ycal^k$, we define $D[\bfx \!\mapsto\! \bfr] \in \DB$ to be the database 
$$
D[\bfx \!\mapsto\! \bfr](x_i) = r_i
\qquad\text{and}\qquad
D[\bfx \!\mapsto\! \bfr](\bar x) = D(\bar x) \;\; \forall \: \bar x \not\in \{x_1,\ldots,x_k\} \, .
$$

\section{Zhandry's Compressed Oracle - Refurbished}  \label{sec:zhandry}

We give a concise yet self-contained \ifnum\submission=0 and mathematically rigorous \fi introduction to the compressed-oracle technique. 
\ifnum\submission=0 
For the reader familiar with the compressed oracle, we still recommend to browse over the section to familiarize with the notation we are using, and for some important observations, but some of the proofs can well be skipped then. 
\fi

\subsection{The Compressed Oracle}

The core ideas of Zhandry's compressed oracle are, first, to consider a {\em superposition} $\sum_H \ket{H}$ of all possible functions $H \in \HF$, rather than a uniformly random choice; this {\em purified} oracle is indistinguishable from the original random oracle for any (quantum) query algorithm since the queries commute with measuring the superposition. 
Second, to then analyze the behavior of this purified oracle in the {\em Fourier} basis. Indeed, the initial state of the oracle is given by 
\begin{equation}\label{eq:init}
\ket{\Pi_0} = \sum_H \ket{H} = \bigotimes_x \Bigl(\sum_y\ket{y}\Bigr) = \bigotimes_x \ket{\hat 0} = \ket{\boldsymbol \hat {\bf 0}} \in \CC[\HF] \, ,
\end{equation}
with ${\boldsymbol \hat {\bf 0}} \in \hat{\HF}$ the constant-$\hat 0$ function. Furthermore, an oracle query invokes the unitary map $\sf O$ given by 
$$
{\sf O}: \ket{x}\ket{y} \otimes \ket{H} \mapsto \ket{x}\ket{y+H(x)} \otimes \ket{H} 
$$
in the computational basis; in the Fourier basis, this becomes 
\begin{equation}\label{eq:F-Step}
{\sf O}: \ket{x}\ket{\hat y} \otimes \ket{\hat H} \mapsto \ket{x}\ket{\hat y} \otimes {\sf O}_{x \hat y} \ket{\hat H} = \ \ket{x}\ket{\hat y} \otimes \ket{\hat H - \hat y \cdot \delta_x} \, ,
\end{equation}
where the equality is the definition of ${\sf O}_{x \hat y}$, and $\delta_x: {\cal X} \rightarrow \bool$ satisfies $\delta_x(x) = 1$ and $\delta_x(x') = 0$ for all $x' \neq x$. 
Note that ${\sf O}_{x \hat y}$ acts on register $x$ only, and ${\sf O}_{x \hat y} {\sf O}_{x \hat y'} ={\sf O}_{x,\hat y + \hat y'}$; thus, ${\sf O}_{x \hat y}$ and ${\sf O}_{x' \hat y'}$ all commute. 
As an immediate consequence of (\ref{eq:init}) and (\ref{eq:F-Step}) above, it follows that the internal state of the oracle after $q$ queries is supported by state vectors of the form $\ket{\hat H} = \ket{\hat y_1\delta_{x_1} + \cdots +\hat y_q\delta_{x_q}}$. 

The actual {\em compressed} oracle (respectively some version of it) is now obtained by applying the isometry
$$
\comp_x = \ketbra{\bot}{\hat 0} + \sum_{\hat z \neq \hat 0} \proj{\hat z}: \, \CC[\Ycal] \to \CC[\bar\Ycal], \:
\ket{\hat y} \mapsto 
\left\{\begin{array}{ll}
\ket{\bot} & \text{if $\hat y = \hat 0$} \\[0.5ex]
\ket{\hat y} & \text{if $\hat y \neq \hat 0$}
\end{array}\right.
$$
to register $x$ for all $x \in \cal X$ (and then viewing the result in the computational basis). 
This ``compression" operator $\comp := \bigotimes_x \comp_x: \CC[\HF] \to \CC[\DB]$ maps $\ket{\Pi_0}$ to
$$
\ket{\Delta_0} := \comp\,\ket{\Pi_0} = \Big(\bigotimes_x \comp_x\Big) \Big(\bigotimes_x\ket{\hat 0}\Big) = \bigotimes_x \comp_x \ket{\hat 0} = \bigotimes_x \ket{\bot} = \ket{\boldsymbol \bot} \, ,
$$
which is the quantum representation of the trivial database ${\boldsymbol \bot}$ that maps any $x \in \cal X$ to $\bot$. More generally, for any $\hat H \in \hat \HF$, $\comp \, \ket{\hat H} = \ket{\hat D}$ where $\hat D \in \hat\DB$ is such that $\hat D(x) = \hat H(x)$ whenever $\hat H(x) \neq 0$, and $\hat D(x) = \bot$ whenever $\hat H(x) = 0$. 
As a consequence, the internal state of the compressed oracle after $q$ queries is supported by state vectors $\ket{D}$ in the computational basis (respectively $\ket{\hat D}$ in the Fourier basis) for which $D(x) = \bot$ (respectively $\hat D(x) = \bot$) for all but (at most) $q$ choices of $x$.

This representation of the internal state of the purified random oracle is referred to as the {\em compressed} oracle because, for a bounded number of queries, these state vectors $\ket{D}$ can be efficiently represented in terms of the number of qubits, i.e., can be {\em compressed}, as $\ket{enc(D)}$, i.e., by employing a classical efficient representation\ifnum\submission=0, similar to the one mentioned in Section~\ref{sec:EffRepClassical}\fi. Furthermore, the unitary that implements an oracle call (see $\sf cO$ below) can then be efficiently computed by a quantum circuit. In this work, we are not concerned with such computational efficiency aspect; nevertheless, for completeness, we formally discuss this in Appendix~\ref{app:Efficiency}. 


\subsection{Linking the Compressed and the Original Oracle}

The following result (originally by Zhandry~\cite{zha19}) links the compressed oracle with the original standard oracle. Intuitively, it ensures that one can extract useful information from the compressed oracle. 
Recall that $M = |{\cal Y}|$. 

\begin{lemma} {\label{lem:zha}}
Consider an arbitrary (normalized) $\ket{\Pi} \in \CC[\HF]$ , and let $\ket{\Delta} = \comp\, \ket{\Pi}$ in $\CC[\DB]$ be the corresponding ``compressed database". Let ${\bf x} = (x_1,\ldots, x_\ell)$ consist of pairwise distinct $x_i \in {\cal X}$, let ${\bf y} = (y_1,\ldots,y_\ell) \in {\cal Y}^\ell$, and set $P_{\bf x} := 
\proj{y_1}\otimes \cdots \otimes \proj{y_\ell}$ with the understanding that $\proj{y_i}$ acts on register $x_i$. Then 
$$
\| P_{\bf x} \ket{\Pi}\| \leq \| P_{\bf x} \ket{\Delta}\| + \sqrt\frac{\ell}{M} \, .
$$
\end{lemma}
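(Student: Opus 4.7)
The idea is to expand $\ket{\Pi}$ in the Fourier basis on the registers $R = \{x_1,\ldots,x_\ell\}$ targeted by $P_\bfx$, then apply the projection explicitly on both the uncompressed and compressed sides, and compare. I will write
\begin{equation*}
\ket{\Pi} = \sum_{\hat\bfy \in \hat\Ycal^\ell} \ket{\hat\bfy}_R \otimes \ket{\Pi_{\hat\bfy}}\, ,
\end{equation*}
where the $\ket{\hat\bfy}_R$ are orthonormal, so that $\sum_{\hat\bfy} \|\ket{\Pi_{\hat\bfy}}\|^2 = 1$. Since $\comp = \comp_R \otimes \comp_{R^c}$ sends $\ket{\hat 0}$ to $\ket{\bot}$ on each register and fixes $\ket{\hat z}$ for $\hat z \neq \hat 0$, this gives $\ket{\Delta} = \sum_{\hat\bfy} \comp_R\ket{\hat\bfy}_R \otimes \comp_{R^c}\ket{\Pi_{\hat\bfy}}$.

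Next, I will compute $P_\bfx\ket{\Pi}$ and $P_\bfx\ket{\Delta}$ using $\braket{y}{\hat y} = M^{-1/2}\,\hat y(y)^*$ for $y \in \Ycal$ and $\braket{y}{\bot} = 0$. This yields
\begin{align*}
P_\bfx \ket{\Pi} &= M^{-\ell/2}\,\ket{\bfy}_R \otimes \sum_{\hat\bfy \in \hat\Ycal^\ell} \prod_i \hat y_i(y_i)^*\,\ket{\Pi_{\hat\bfy}}\, , \\
P_\bfx \ket{\Delta} &= M^{-\ell/2}\,\ket{\bfy}_R \otimes \sum_{\hat\bfy :\,\hat y_i \neq \hat 0\,\forall i} \prod_i \hat y_i(y_i)^*\,\comp_{R^c}\ket{\Pi_{\hat\bfy}}\, ,
\end{align*}
because any $\hat\bfy$ with some $\hat y_i = \hat 0$ places a $\ket{\bot}$ in the $i$-th $R$-register on the compressed side, which is annihilated by $\ketbra{y_i}{y_i}$ (with $y_i \in \Ycal$).

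The key observation is that $\comp_{R^c}$ is an isometry, so pulling it out preserves norms and gives
\begin{equation*}
\|P_\bfx\ket{\Delta}\| = M^{-\ell/2}\Big\|\!\!\sum_{\hat\bfy :\,\hat y_i \neq \hat 0\,\forall i}\!\! \prod_i \hat y_i(y_i)^*\,\ket{\Pi_{\hat\bfy}}\Big\|\, .
\end{equation*}
Splitting the unrestricted sum in $P_\bfx\ket{\Pi}$ into this ``good'' part plus the complementary ``bad'' set $A = \{\hat\bfy : \exists\, i : \hat y_i = \hat 0\}$ (of size $M^\ell - (M-1)^\ell$) and applying the triangle inequality yields
\begin{equation*}
\|P_\bfx\ket{\Pi}\| \leq \|P_\bfx\ket{\Delta}\| + M^{-\ell/2}\Big\|\sum_{\hat\bfy \in A} \prod_i \hat y_i(y_i)^*\,\ket{\Pi_{\hat\bfy}}\Big\|\, .
\end{equation*}

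Finally, I will bound the residual by the triangle inequality term-by-term (using $|\hat y_i(y_i)|=1$) followed by Cauchy--Schwarz against the scalars $\|\ket{\Pi_{\hat\bfy}}\|$:
\begin{equation*}
M^{-\ell/2}\sum_{\hat\bfy\in A}\|\ket{\Pi_{\hat\bfy}}\| \leq M^{-\ell/2}\sqrt{|A|}\,\sqrt{{\textstyle\sum_{\hat\bfy\in A}}\|\ket{\Pi_{\hat\bfy}}\|^2} \leq \sqrt{1-(1-1/M)^\ell}\leq \sqrt{\ell/M}\, ,
\end{equation*}
using $\sum_{\hat\bfy}\|\ket{\Pi_{\hat\bfy}}\|^2 = 1$ and Bernoulli's inequality $(1-1/M)^\ell \geq 1-\ell/M$. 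The main subtlety to keep in mind is that the conditional amplitudes $\ket{\Pi_{\hat\bfy}}$ need not be mutually orthogonal, but this is never used; only the normalization $\sum_{\hat\bfy}\|\ket{\Pi_{\hat\bfy}}\|^2 = 1$ enters the argument, through Cauchy--Schwarz.
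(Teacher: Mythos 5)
Your proof is correct and rests on the same core observation as the paper's: in the Fourier basis, $P_\bfx\ket{\Pi}$ and $P_\bfx\ket{\Delta}$ differ only in the components with some $\hat y_i = \hat 0$, and these carry total weight at most $\ell/M$. The only difference is packaging — you argue at the level of the state (triangle inequality plus Cauchy--Schwarz against the normalization $\sum_{\hat\bfy}\|\ket{\Pi_{\hat\bfy}}\|^2 = 1$), whereas the paper bounds the operator norm of $P_\bfx - P_\bfx\comp_\bfx$ by its Frobenius norm, which amounts to the same counting.
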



This somewhat technical statement directly translates to the following statement in terms of algorithmic language. 

\begin{corollary}[Zhandry]\label{cor:zha}
Let $R \subseteq {\cal X}^\ell \times {\cal Y}^\ell$ be a relation. 
Let $\cal A$ be an oracle quantum algorithm that outputs ${\bf x} \in \Xcal^\ell$ and ${\bf y} \in \Xcal^\ell$. Let $p$ be the probability that ${\bf y} = H({\bf x})$ and $({\bf x},{\bf y}) \in R$ when $\cal A$ has interacted with the standard random oracle, initialized with a uniformly random function $H$. Similarly, let $p'$ be the probability that ${\bf y} = D({\bf x})$ and $({\bf x},{\bf y}) \in R$ when $\cal A$ has interacted with the compressed oracle instead and $D$ is obtained by measuring its internal state (in the computational basis). Then
$$
\sqrt{p} \leq \sqrt{p'} + \sqrt\frac{\ell}{M} \, .
$$
\end{corollary}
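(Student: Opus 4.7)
My plan is to lift Lemma~\ref{lem:zha} from a single-state bound to one about an algorithm's output distribution via three moves: (i) purify the random oracle, (ii) decompose the final joint state according to the measured output, and (iii) combine per-outcome bounds using Minkowski's inequality. I would begin by replacing the standard random oracle with its purification, whose initial state is $\ket{\Pi_0} \in \CC[\HF]$ as in~(\ref{eq:init}); since the query unitary is diagonal in the $\ket{H}$ basis, a terminal computational-basis measurement of $H$ commutes with all queries and can be deferred to the end without altering $\cal A$'s output distribution. Let $\ket{\Psi}$ denote the joint state of $\cal A$'s registers and the oracle register immediately before the final output measurement.

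Decompose $\ket{\Psi}$ along the output register as $\ket{\Psi} = \sum_{\bfx,\bfy}\ket{\bfx,\bfy} \otimes \ket{\psi_{\bfx,\bfy}}$, where $\ket{\psi_{\bfx,\bfy}}$ is the (unnormalized) conditional state on the remaining algorithm registers and the oracle. The event defining $p$, namely $(\bfx,\bfy)\in R$ and $\bfy = H(\bfx)$, corresponds to projecting the output register onto pairs in $R$ and then projecting the oracle positions indexed by $\bfx$ onto $\ket{\bfy}$. Hence $p = \sum_{(\bfx,\bfy)\in R} \|P_\bfx \ket{\psi_{\bfx,\bfy}}\|^2$, where $P_\bfx$ is the projector from the lemma (tacitly depending on $\bfy$). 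Because $\comp$ is an isometry acting only on the oracle register and commutes with both the output measurement and the subsequent projection on disjoint oracle positions, one obtains the analogous identity $p' = \sum_{(\bfx,\bfy)\in R}\|P_\bfx (I\otimes \comp)\ket{\psi_{\bfx,\bfy}}\|^2$ for the compressed oracle.

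Now I would apply Lemma~\ref{lem:zha} term by term. Although the lemma is stated for a pure state in $\CC[\HF]$, the bound transfers to states with an auxiliary algorithm register: split $\ket{\psi_{\bfx,\bfy}}$ along a computational basis of that auxiliary register, apply the lemma on each slice, and recombine via orthogonality and Minkowski's inequality. This yields $\|P_\bfx \ket{\psi_{\bfx,\bfy}}\| \le \|P_\bfx (I\otimes \comp)\ket{\psi_{\bfx,\bfy}}\| + \|\psi_{\bfx,\bfy}\|\sqrt{\ell/M}$. A second use of Minkowski across the sum over $(\bfx,\bfy)\in R$, together with $\sum_{\bfx,\bfy}\|\psi_{\bfx,\bfy}\|^2 = 1$, produces $\sqrt{p}\le \sqrt{p'}+\sqrt{\ell/M}$. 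The main (minor) obstacle I anticipate is that Lemma~\ref{lem:zha} presumes pairwise distinct $x_i$'s, while $\cal A$'s output need not satisfy this: collisions in $\bfx$ paired with incompatible $\bfy$ entries kill the corresponding term, and collisions with matching $y_i$'s only reduce the effective $\ell$, so in either case the per-term bound tightens rather than fails. Once this bookkeeping is dispatched, the two Minkowski steps are routine.
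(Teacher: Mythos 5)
Your proof is correct and follows essentially the same route as the paper's: purify the oracle, decompose the final state according to the measured output, apply Lemma~\ref{lem:zha} to each post-selected branch, and recombine. The paper packages the recombination as an average over normalized conditional states followed by Jensen's inequality (and handles your auxiliary-register slicing by having $\cal A$ additionally measure and output its internal state $w$), while you use unnormalized branches and Minkowski; these are the same computation in different notation, and your remark on repeated entries of $\bfx$ correctly dispatches a point the paper leaves implicit.
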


\begin{proof}[Proof (of Corollary~\ref{cor:zha})]
Consider an execution of $\cal A$ when interacting with the purified oracle. For technical reasons, we assume that, after having measured and output ${\bf x},{\bf y}$, $\cal A$ measures its internal state in the computational basis to obtain a string~$w$, which he outputs as well. 
We first observe that
$$
p = \sum_{\bfx,\bfy,w \atop (\bfx,\bfy) \in R} q_{\bfx,\bfy,w} \, p_{\bfx,\bfy,w} 
\qquad\text{and}\qquad
p' = \sum_{\bfx,\bfy,w \atop (\bfx,\bfy) \in R} q_{\bfx,\bfy,w} \, p'_{\bfx,\bfy,w} 
$$
where $q_{\bfx,\bfy,w}$ is the probability that $\cal A$ outputs the triple $\bfx,\bfy,w$, and $p_{\bfx,\bfy,w}$ is the probability that ${\bf y} = H({\bf x})$ {\em conditioned} on the considered output of $\cal A$, and correspondingly for $p'_{\bfx,\bfy,w}$. 
More technically, using the notation from Lemma~\ref{lem:zha}, $p_{\bfx,\bfy,w} = \| P_{\bf x} \ket{\Pi}\|^2$ with $\ket{\Pi}$ the internal state of the purified oracle, {\em post-selected} on ${\bf x},{\bf y}$ and $w$. Similarly, $p'_{\bfx,\bfy,w} = \| P_{\bf x} \comp\,\ket{\Pi}\|^2$. Thus, applying Lemma~\ref{lem:zha} and squaring, we obtain
$$
p_{\bfx,\bfy,w} \leq \Big(\sqrt{p'_{\bfx,\bfy,w}} + \varepsilon\Big)^2 = p'_{\bfx,\bfy,w} + 2 \sqrt{p'_{\bfx,\bfy,w}}\,\varepsilon + \varepsilon^2 \, .
$$
Averaging with the $q_{\bfx,\bfy,w}$'s, applying Jensen's inequality, and taking square-roots, then implies the claim. 
\end{proof}

\begin{proof}[Proof (of Lemma~\ref{lem:zha})]
We set $\comp_{\bf x} :=\bigotimes_i\comp_{x_i}$; the subscript $\bf x$ again emphasizing that $\comp_{\bf x}$ acts on the registers $x_1,\ldots,x_\ell$ only. In line with this, we write $\id_{\bar{\bf x}}$ for the identity acting on the registers $x \not\in \{x_1,\ldots,x_\ell\}$.
Then%
\footnote{In line with the discussion in Section~\ref{sec:operators}, since it maps any $\ket{\bot}$-component to~$0$, $P_\bfx$ can be understood to have domain $\CC[\Ycal]^{\otimes \ell}$ or $\CC[\bar\Ycal]^{\otimes \ell}$; the same for its range. Thus, below, in $P_{\bf x} \ket{\Pi}$ it is understood as $\CC[\Ycal]^{\otimes \ell} \to \CC[\Ycal]^{\otimes \ell} \subseteq \CC[\bar\Ycal]^{\otimes \ell}$, while in $P_{\bf x} \comp_{\bf x}\ket{\Pi}$ as $\CC[\bar\Ycal]^{\otimes \ell} \to \CC[\bar\Ycal]^{\otimes \ell}$. }
\begin{align*}
\| P_{\bf x} \ket{\Pi}\| - \| P_{\bf x} \comp\ket{\Pi}\| 
&=  \| P_{\bf x} \ket{\Pi}\| - \| P_{\bf x} \comp_{\bf x}\ket{\Pi}\| && \text{(since the $\comp_x$'s are isometries)} \\
&\leq \|(P_{\bf x} - P_{\bf x}\comp_{\bf x})\ket{\Pi}\| && \text{(by triangle inequality)} \\
&\leq \|(P_{\bf x} - P_{\bf x}\comp_{\bf x}) \otimes \id_{\bar{\bf x}}\| && \text{(by definition of the operator norm)} \\
&= \|P_{\bf x} - P_{\bf x}\comp_{\bf x}\| && \text{(by basic property of the operator norm)}\
\end{align*}
We will work out the above operator norm. For this, recall that in the Fourier basis 
$$
P_{\bf x} =  \bigotimes_i \Bigg(\frac{1}{M}\!\sum_{\hat y \in \hat{\cal Y} \atop \hat z \in \hat{\cal Y}} \omega_{\hat z/ \hat y}(y_i) \ketbra{\hat z}{\hat y}\Bigg)
\quad\text{and}\quad
\comp_{\bf x} = \bigotimes_i \Bigg(\ketbra{\bot}{0} + \!\!\sum_{0 \neq \hat y \in \hat{\cal Y}}\!\! \proj{\hat y}\Bigg) \, ,
$$
with the understanding that in the above respective tensor products the $i$-th component acts on register $x_i$, and where the $\omega_{\hat z/ \hat y}(y_i)$ are suitable phases, i.e., norm-$1$ scalars, which will be irrelevant though.%
\footnote{For the record, switching back to multiplicative notation for the elements in the dual group $\hat{\cal Y}$, we have $\omega_{\hat z/ \hat y}(y_i) = (\hat z/\hat y)(y_i)$. } 
By multiplying the two, we get 
$$
P_{\bf x} \comp_{\bf x} = \bigotimes_i \Bigg(\frac{1}{M}\!\!\sum_{0 \neq \hat y \in \hat{\cal Y} \atop \hat z \in \hat{\cal Y}}\!\! \omega_{\hat z/ \hat y}(y_i) \ketbra{\hat z}{\hat y}\Bigg) \, .
$$
Multiplying out the respective tensor products in $P_{\bf x}$ and $P_{\bf x} \comp_{\bf x}$, and subtracting the two expressions, we obtain 
$$
P_{\bf x} - P_{\bf x} \comp_{\bf x} = \frac{1}{M^\ell} \!\sum_{\hat y_1,\ldots,\hat z_\ell \in \hat{\cal Y} \atop \exists i: \hat y_i = 0}\! \bigotimes_i \omega_{\hat z_i/ \hat y_i}(y_i) \ketbra{\hat z_i}{\hat y_i} 
= \frac{1}{M^\ell} \!\!\sum_{{\bf \hat y},{\bf \hat z} \atop \exists i: \hat y_i = 0}\!\! \omega_{\bf{\hat z}/{\bf \hat y}} \ketbra{{\bf \hat z}}{{\bf \hat y}} \, ,
$$
where the sum is over all ${\bf \hat y} = (\hat y_1,\ldots,\hat y_\ell)$ and ${\bf \hat z} = (\hat z_1,\ldots,\hat z_\ell)$ in $\hat{\cal Y}^\ell$ subject to that at least one $\hat y_i$ is $0$, and where $\omega_{\bf{\hat z}/{\bf \hat y}}$ is the phase $\omega_{\bf{\hat z}/{\bf \hat y}} := \prod_i \omega_{\hat z_i/\hat y_i}(y_i)$. 
Bounding the operator norm by the Frobenius norm, we thus obtain that
\begin{align*}
\|P_{\bf x} - P_{\bf x}& \comp_{\bf x}\|^2 \leq \sum_{{\bf \hat y},{\bf \hat z}} |\bra{{\bf \hat z}}(P_{\bf x} - P_{\bf x} \comp_{\bf x}) \ket{{\bf \hat y}}|^2 \\
&= \frac{1}{M^{2\ell}} \!\!\sum_{{\bf \hat y},{\bf \hat z} \atop \exists i: \hat y_i = 0}\!\! |\omega_{\bf \hat z/\hat y}|^2 
\leq \frac{1}{M^{2\ell}} \, \ell M^{2\ell-1} = \frac{\ell}{M}  \, ,
\end{align*}
where the inequality is a standard counting argument: there are $\ell$ choices for $i$, and for each $i$ there are $M^{\ell-1}$ choices for ${\bf \hat y} \in \hat{\cal Y}^\ell$ with $\hat y_i = 0$ (however, ${\bf \hat y}$'s with multiple zeros are counted multiple times this way). 
\end{proof}

%

\subsection{Working Out the Transition Matrix}\label{sec:TransitionMatrix}

Here, we explicitly work out the matrix (in the computational basis) that describes the evolution that the compressed oracle undergoes as a result of an oracle query. 
For this, it is necessary to extend the domain $\CC[\Ycal]$ 
of $\comp_x$ to $\CC[\bar\Ycal]$ by declaring that 
$\comp_x \ket{\bot} = \ket{\hat 0}$. 
This turns 
$\comp_x$ into a {\em unitary} on $\CC[\bar\Ycal]$, and correspondingly then for 
$\comp$. Formally, we are then interested in the unitary
$$
\CO := \comp \circ {\sf O} \circ \comp^\dagger \in \Lin\bigl(\CC[\Xcal] \otimes \CC[\Ycal] \otimes \CC[\DB]\bigr) \, ,
$$
which maps $\ket{x}\ket{\hat y} \otimes \ket{D}$ to $\ket{x}\ket{\hat y} \otimes \CO_{x \hat y}\ket{D}$ for any $D \in \DB$, where $\CO_{x \hat y} := \comp_x \circ {\sf O}_{x \hat y}\circ \comp_x^\dagger \in \Lin(\CC[\bar\Ycal])$ acts on the $x$-register only. 
In the form of a commuting diagram, we thus have
$$
\begin{array}{ccccl}
        & \CC[\HF]   & \xrightarrow{\;\comp\;} & \CC[\DB] \\[1ex]
{\sf O}_{x \hat y}\!\!\!\!\!\! & \Big\downarrow & & \Big\downarrow & \!\!\!\!\!\!\CO_{x \hat y}  \\[0.8ex]
        & \CC[\HF]   & \xrightarrow{\;\comp\;} & \CC[\DB]
\end{array}
$$

\begin{lemma}\label{lemma:cO-Matrix}
For any $\hat y \neq 0$, in the computational basis the unitary $\CO_{x \hat y}$ on $\mathbb{C}^{M+1}$ is represented by the matrix given in Figure~\ref{fig:Evolution}; i.e, for all $r,u \in \bar{\cal Y} := {\cal Y} \cup \{\bot\}$ it holds that $\bra{u}\CO_{x \hat y} \ket{r} = \gamma_{u,r}^{\hat y}$. 
Furthermore, $\CO_{x,\hat 0} = \id$. 
\end{lemma}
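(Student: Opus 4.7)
The plan is to compute the matrix of $\CO_{x\hat y}$ in the computational basis by evaluating $\CO_{x\hat y}\ket{r} = \comp_x\,{\sf O}_{x\hat y}\,\comp_x^\dagger \ket{r}$ for each $r \in \bar\Ycal$ and then reading off the coefficients in the computational basis. The case $\hat y = \hat 0$ is immediate: by (\ref{eq:F-Step}), ${\sf O}_{x,\hat 0}$ acts as $\ket{\hat H}\mapsto\ket{\hat H-\hat 0\cdot\delta_x} = \ket{\hat H}$, so ${\sf O}_{x,\hat 0} = \id$, and hence $\CO_{x,\hat 0} = \comp_x\comp_x^\dagger = \id$.

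For $\hat y \neq \hat 0$ I will use the following facts about the individual factors. The unitary extension of $\comp_x$ to $\CC[\bar\Ycal]$ swaps $\ket{\hat 0}\leftrightarrow\ket{\bot}$ and fixes every other Fourier basis vector (and similarly for $\comp_x^\dagger$), while ${\sf O}_{x\hat y}$ acts as $\ket{\hat z}\mapsto\ket{\hat z-\hat y}$ on $\CC[\Ycal]$ and, under its natural extension, fixes $\ket{\bot}$. The column $r=\bot$ is then short: $\comp_x^\dagger\ket{\bot} = \ket{\hat 0}$, then ${\sf O}_{x\hat y}\ket{\hat 0} = \ket{-\hat y}$, and finally $\comp_x\ket{-\hat y} = \ket{-\hat y}$ since $-\hat y\neq\hat 0$; expanding $\ket{-\hat y}$ via (\ref{eq:FourierBasis}) in the computational basis yields the entries $\gamma^{\hat y}_{u,\bot}$ (and in particular $\gamma^{\hat y}_{\bot,\bot} = 0$).

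For $r \in \Ycal$, I first Fourier-expand $\ket{r} = \tfrac{1}{\sqrt M}\sum_{\hat z}\hat z(r)\ket{\hat z}$ and split off the $\hat z = \hat 0$ summand, which $\comp_x^\dagger$ converts to $\tfrac{1}{\sqrt M}\ket{\bot}$ while leaving the remaining summands intact. Next, ${\sf O}_{x\hat y}$ shifts each surviving Fourier index by $-\hat y$ and leaves the $\ket{\bot}$ piece alone. Finally, $\comp_x$ converts the image of the original $\hat z = \hat y$ summand (which now sits at Fourier index $\hat 0$) into $\ket{\bot}$, and sends the earlier $\ket{\bot}$ to $\ket{\hat 0}$. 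Collecting the $\ket{\bot}$ amplitude directly gives $\gamma^{\hat y}_{\bot,r}$, and re-expanding the remaining Fourier sum into the computational basis using (\ref{eq:FourierBasis}) together with the character orthogonality $\sum_{\hat z}\hat z(r-u) = M\delta_{u,r}$ produces $\gamma^{\hat y}_{u,r}$ for $u \in \Ycal$.

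The main obstacle is the bookkeeping around the distinguished Fourier mode: both $\comp_x$ and $\comp_x^\dagger$ act non-trivially only on the pair $\ket{\hat 0}\leftrightarrow\ket{\bot}$, while ${\sf O}_{x\hat y}$ in the Fourier basis moves $\hat 0\mapsto -\hat y$ and $\hat y\mapsto\hat 0$, so the ``compressed'' index is different before and after the shift. Cleanly separating the Fourier sum into the three cases $\hat z=\hat 0$, $\hat z=\hat y$, and $\hat z\notin\{\hat 0,\hat y\}$ is what delivers the block structure displayed in Figure~\ref{fig:Evolution}. As a sanity check, the resulting $(M\!+\!1)\!\times\!(M\!+\!1)$ matrix must be unitary, which is automatic from $\comp_x$ and ${\sf O}_{x\hat y}$ being unitary on $\CC[\bar\Ycal]$.
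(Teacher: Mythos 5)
Your proposal is correct and follows essentially the same route as the paper's proof: a direct computation of $\CO_{x\hat y}\ket{r}$ column by column, tracking the Fourier expansion through $\comp_x^\dagger$, the shift $\ket{\hat z}\mapsto\ket{\hat z-\hat y}$, and $\comp_x$, with the same case split on the distinguished Fourier modes $\hat 0$ and $\hat y$ and the same short argument for the $r=\bot$ column and the $\hat y=\hat 0$ case. The only difference is presentational: the paper writes out the chain of intermediate states explicitly, whereas you organize the bookkeeping as a three-way case distinction on $\hat z$, which yields identical coefficients.
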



\begin{figure}[h]
$$
\begin{array}{c||c|c}
 & & \\[-1.5ex]
 \phantom{\gamma_{u,r}^{\hat y}} & \bot & r \in {\cal Y}  \\[1.5ex] \hline\hline 
 & & \\[-0.5ex]
 \bot              & \gamma_{\bot,\bot}^{\hat y} \!=\! 0\!    & \displaystyle \gamma_{\bot,r}^{\hat y} = \frac{\hat y^*(r)}{\sqrt M}  \\[3ex] \hline 
  & & \\[-1ex]
 \begin{turn}{90} 
\makebox[0.1cm]{\mbox{$u \in {\cal Y}$}}
\end{turn} & \displaystyle\frac{{\hat{y}}(u)}{\sqrt M} & 
\makebox[1cm][l]{$
\gamma_{u,r}^{\hat y} = \left\{
\begin{array}{ll}
   \displaystyle \Big(1-\frac{2}{M}\Big) {\hat{y}}(u) + \frac{1}{M} & \mbox{if $u = r \in \cal Y$} \\[2ex]
   \displaystyle \frac{1-{\hat{y}}(r) - {\hat{y}}(u)}{M}  & \mbox{if $u \neq r$, both in $\cal Y$}  
\end{array}\right.
$} \\
\end{array}
\phantom{\displaystyle \frac{1-\hat y(r) - \hat y(u)}{M} \mbox{if $u \neq r$, both in $\cal Y$} }
$$
\vspace{-1ex}
    \caption{The matrix describing the evolution of the compressed oracle in the computational basis. }
    \label{fig:Evolution}\vspace{-2ex}
\end{figure}

\begin{proof}
From simple but somewhat tedious manipulations, using basic properties of the Fourier transform, we obtain the following. 
For any $r \neq \bot$ (and $\hat y \neq \hat 0$), we have
$$
\sqrt{M} \, \ket{r} = \sum_{\hat r} \hat{r}(r) \ket{\hat r} = \ket{0} + \sum_{\hat r \neq \hat 0} \hat{r}(r) \ket{\hat r} \, ,
$$
which gets mapped to
$$
\xmapsto{\comp^\dagger\!\!} \ket{\bot} + \sum_{\hat r \neq \hat 0} \hat{r}(r) \ket{\hat r} \, , 
$$
which gets mapped to
\begin{align*}
\xmapsto{{{\sf O}}_{x \hat y}}\; & \ket{\bot} + \sum_{\hat r \neq \hat 0} \hat{r}(r) \ket{\hat r {-} \hat y} = \ket{\bot} - \ket{{-}\hat y} + \sum_{\hat r} \hat{r}(r) \ket{\hat r {-} \hat y} \\
& = \ket{\bot} - \ket{{-}\hat y} + {\hat{y}}(r)\sum_{\hat r} \hat{r}(r) \, \ket{\hat r} = \ket{\bot} - \ket{{-}\hat y} + {\hat{y}}(r)\ket{\hat 0} + {\hat{y}}(r)\sum_{\hat r \neq \hat 0} \hat{r}(r) \, \ket{\hat r} \, ,
\end{align*}
which gets mapped to
\begin{align*}
\xmapsto{\comp}\; & \ket{\hat{0}} - \ket{{-}\hat y}  + {\hat{y}}(r)\ket{\bot}  + {\hat{y}}(r)\sum_{\hat r \neq \hat 0} \hat{r}(r) \ket{\hat r} \\
&= \ket{\hat{0}} - \ket{{-}\hat y}  + {\hat{y}}(r)\ket{\bot} - {\hat{y}}(r) \ket{\hat 0} + {\hat{y}}(r) \sum_{\hat r} \hat{r}(r) \, \ket{\hat r} \\
&= \frac{1}{\sqrt M}\sum_{u}\ket{u} - \frac{1}{\sqrt M} \sum_u {\hat{y}}(u) \ket{u} + {\hat{y}}(r)\ket{\bot} - \frac{{\hat{y}}(r)}{\sqrt M}\sum_{u}\ket{u} + \sqrt{M}{\hat{y}}(r) \ket{r} \, .
\end{align*}
From this expression, one can now easily read out the coefficients $\gamma_{u,r}^{\hat y}$ for $r \neq \bot$. Finally, from
$$
\ket{\bot} \xmapsto{\comp^\dagger\!\!} \ket{\hat 0} \xmapsto{{{\sf O}}_{x \hat y}} \ket{ {-} \hat y} \xmapsto{\comp} \ket{ {-} \hat y} = \frac{1}{\sqrt{M}}\sum_u {\hat{y}}(u) \ket{u} 
$$
we obtain the coefficients for $r = \bot$ (with $\hat y \neq 0$). The case $\hat y = 0$ follows from the fact that ${\sf O}_{x,\hat 0} = \id$. 
\end{proof}

Since, for any fixed $\hat y$, this matrix is unitary, the squares of the absolute values of each column add up to $1$. Thus, for any $\hat y$ and $r$ we can consider the (conditional) probability distribution defined by $\tilde P[U\!=\!u|r, \hat y] := |\gamma_{u,r}^{\hat y}|^2$. This offers us a convenient notation, like $\tilde P[U\!\in \!{\cal S}|r, \hat y]$ for $\sum_{u \in \cal S}|\gamma_{u,r}^{\hat y}|^2$ or $\tilde P[U\!\neq \!r|r, \hat y]$ for $\sum_{u \neq r}|\gamma_{u,r}^{\hat y}|^2$. For later purposes, it is useful to observe that, for any $\L \subseteq \Ycal$ (i.e., $\bot \not\in \L$), 
\begin{align}\label{eq:connection}
\begin{split}
\sum_{r} \tilde P[r \!\neq\! U \!\in\! \L &|r,\hat y] \leq \tilde P[U \!\in\! \L |\bot,\hat y] + \sum_{r \neq \bot} \tilde P[r \!\neq\! U  \!\in\! \L|r,\hat y] 
\leq |\L|\frac{1}{M} + M|\L|\frac{9}{M^2} = 10 P[U \!\in\! \L] 
\end{split}
\end{align}
where $P[U \!\in\! \L] = \frac{|\L|}{M}$ is the probability for a uniformly random $U$ in $\Ycal$ to be in $\L$. 


\subsection{The Parallel-Query (Compressed) Oracle}

Here, we extend the above compressed-oracle technique to the setting where a quantum algorithm may make {\em several} queries to the random oracle {\em in parallel}. We recall that distinguishing between {\em parallel} and {\em sequential} queries allows for a more fine-grained query-complexity analysis of quantum algorithms. In particular, by showing a lower bound on the number of necessary {\em sequential} queries (with each sequential query possibly consisting of a large number of {\em parallel} queries), one can show  the impossibility (or bound the possibility) of {\em parallelizing} computational tasks. 

Formally, for any positive integer $k$, a {\em $k$-parallel query} is given by $k$ parallel applications of $\sf O$, with the understanding that each application acts on a different input/output register pair. More explicitly, but slightly abusing notation of writing a $k$-th power, a $k$-parallel query is given by
$$
{\sf O}^k: \ket{{\bf x}}\ket{{\bf y}} \otimes \ket{H} \mapsto \ket{{\bf x}}\ket{{\bf y}\!+\!H({\bf x})} \otimes \ket{H} 
$$
for any ${\bf x} = (x_1,\ldots,x_k) \in {\cal X}^k$ and ${\bf y} = (y_1,\ldots,y_k) \in {\cal Y}^k$. 
The operator $\CO^k := \comp \circ {\sf O}^k \circ \comp^\dagger$, which described the evolution of the compressed oracle under such a $k$-parallel query, 
then acts as
$$
\CO^k : \ket{{\bf x}}\ket{{\bf \hat y}} \otimes \ket{\Delta} \mapsto \ket{{\bf x}}\ket{{\bf \hat y}} \otimes \CO_{{\bf x}{\bf \hat y}} \ket{\Delta}  
$$
for any $\ket{\Delta} \in \CC[\DB]$, where $\CO_{\bfx \hat\bfy}$ is the product  $\CO_{x_1 \hat y_1} \cdots \CO_{x_k \hat y_k}$.
We recall that $\CO_{x_i \hat y_i}$ acts on register $x_i$ (only), and $\CO_{x_i \hat y_i}$ and $\CO_{x_j \hat y_j}$ commute (irrespectively of $x_i$ and $x_j$ being different or not).

\section{A Framework for Proving Quantum Query Lower Bounds}\label{sec:QTC_framework}

In this section, we set up a framework for proving lower-bounds on the query complexity (actually, equivalently, upper bounds on the success probability) of {\em quantum} algorithms in the quantum random oracle model. Our framework closely mimics the reasoning for classical algorithms and allows to easily ``lift" the typical kind of reasoning to the quantum setting. 

\subsection{Setting Up the Framework}

\begin{definition}\label{def:DBProp}
A {\em database property} on $\DB$ is a subset $\P \subseteq \DB$ of the set of databases $D$. 
\end{definition}

\begin{remark}%
As the naming suggests, we think of $\P$ as a property that is either {\em true} or {\em false} for any $D \in \DB$; we thus also write $\P(D)$ to denote that $D \in \P$, i.e., to express that ``$D$ satisfies $\P$". Furthermore, by convention, for any database property $\P \in \DB$, we overload notation and use $\P$ also to refer to the projection $\sum_{D \in \P} \proj{D} \in \Lin(\CC[\DB])$. 
\end{remark}

Examples that we will later consider are
$$
\PRMG := \{ D\, | \exists\, x : D(x) = 0 \} 
\quad\text{and}\quad
\col := \{D \,|\, \exists\, x,x':D(x) = D(x') \neq \bot\} \, ,
$$
as well as
$$
{\chain}^q := \{D\,|\,\exists\, x_0,x_1,\ldots,x_q \in {\cal X}: D(x_{i-1}) \triangleleft x_i  \:\forall i\} \, ,
$$
where $\triangleleft$ denotes an arbitrary relation, e.g., $y \triangleleft x$ if $y$ is a prefix of $x$. 

We introduce the following notation. 
For any tuple $\bfx = (x_1,\ldots,x_k)$ of pairwise distinct $x_i \in \cal X$ and for any $D: {\cal X} \to \bar{\cal Y}$ we let
$$
D|^\bfx := \big\{D[\bfx \!\mapsto \bfr]\,|\,\bfr \in \bar\Ycal^k\big\} \subseteq \DB 
$$
be the set of databases that coincide with $D$ outside of $\bfx$. Furthermore, for any database property $\P \subseteq \DB$, we then let
$$
\P|_{D|^\bfx} := \P \cap D|^\bfx
$$
be the restriction of $\P$ to the databases in $D|^\bfx$. 
We then typically think of $\P|_{D|^\bfx}$ as a property of functions $D' \in D|^\bfx$. 





\begin{remark}\label{rem:convention}
For fixed choices of $\bfx$ and $D$, we can, and often will, identify $D|^\bfx$ with $\bar\Ycal^k$ by means of the obvious identification map $\bfr \mapsto D[\bfx \!\mapsto \bfr]$. The property $\P|_{D|^\bfx}$ can then be considered to be a property/subset of $\bar\Ycal^k$, namely $\{\bfr \in \bar\Ycal^k \,|\, D[\bfx \!\mapsto \bfr] \in \P\}$. Accordingly, we do not distinguish between the projections 
$$
\sum_{D' \in  \P|_{D|^\bfx}} \proj{D'} \in \Lin(\CC[D|^\bfx]) \subseteq \Lin(\CC[\DB])
\quad\text{and}\quad
\sum_{\bfr \in \bar\Ycal^k \atop D[\bfx \mapsto \bfr] \in \P} \proj{\bfr} \in \Lin(\CC[\bar\Ycal^k])
$$
but refer to both as $\P|_{D|^\bfx}$, using our convention to use the same variable for a property and the corresponding projection. This is justified by the fact that on the space spanned by $\ket{D[\bfx \!\mapsto \bfr]}$ with $\bfr \in \bar\Ycal^k$, both act identically (with the understanding that the latter acts on the registers labeled by $\bfx$). In particular, they have the same operator norm. 
\end{remark}



\begin{example}
For a given $\bfx$ and $D$, as a subset of $\bar{\cal Y}^k$, we have 
$$
\PRMG|_{D|^\bfx} = \left\{\begin{array}{ll}
\bar{\cal Y}^k & \text{if $D(\bar x) = 0$ for some $\bar x \not\in\{x_1,\ldots,x_k\}$} \\
\{\bfr\,|\,\exists\, i: r_i=0\} & \text{else}
\end{array}\right.
$$ 
In words: if $D$ has a zero outside of $\bfx$ then $D[\bfx \!\mapsto \bfr]$ has a zero for any $\bfr \in \bar{\cal Y}^k$; otherwise, $D[\bfx \!\mapsto \bfr]$ has a zero if and only if one of the coordinates of $\bfr$ is zero. 
\end{example}

The following definition is the first main ingredient of our framework. Theorem~\ref{thm:QTCBound} below, which relates the success probability of a quantum algorithm to the quantum transition capacity, then forms the second main ingredient. 

\begin{definition}[Quantum transition capacity]\label{def:QTC}
Let $\P, \P'$ be two database properties. Then, the {\em quantum transition capacity} (of order $k$) is defined as
$$
\QTC{\P}{\P'}{k} := \max_{\bfx,\hat\bfy,D} \|\P'|_{D|^\bfx} \,\CO_{\bfx\hat\bfy} \, \P|_{D|^\bfx}\| \, .
$$
Furthermore, we define
$$
\qQTC{\P}{\P'}{k}{q} := \sup_{U_1,\ldots,U_{q-1}} \| \P' \CO^k \,U_{q-1} \, \CO \cdots \CO^k \, U_1 \, \CO^k \, \P \|  \, .
$$
where the supremum is over all positive $d \in \ZZ$ and all unitaries $U_1,\ldots,U_{q-1}$ acting on $\CC[\Xcal] \otimes \CC[\Ycal] \otimes \CC^d$.
\end{definition}

By definition, the notion $\qQTC{\P}{\P'}{k}{q}$ equals the square-root of the maximal probability that the internal state of the compressed oracle, when supported by databases $D \in \P$, turns into a database $D' \in \P'$ by means of a quantum query algorithm that performs $q$ $k$-parallel queries, and when we then measure the internal state to obtain $D'$. In particular, for $p'$ as in Corollary~\ref{cor:zha} and $\P^R$ as below in Theorem~\ref{thm:QTCBound}, it holds that \smash{$\qQTC{{\boldsymbol \bot}}{\P^R}{k}{q} = \sqrt{p'}$}. 

In a similar manner, $\QTC{\P}{\P'}{k}$ represents a measure of how likely it is that, as a result of {\em one} $k$-parallel query, a database $D \in \DB$ that satisfies $\P$ turns into a database $D'$ that satisfies $\P'$. 
In the context of these two notations, ${\boldsymbol \bot}$ is understood to be the database property that is satisfied by ${\boldsymbol \bot} \in \DB$ only, and $\neg\P$ is the complement of $\P$, i.e., $\neg\P = \id - \P$ (as projections). 
We also write $\P \to \P'$ and refer to this as a {\em database transition} when considering two database properties $\P$ and $\P'$ in the context of the above two notions. Formally, they are related as follows. 

\begin{lemma}\label{lem:tech}
For any sequence of database properties $\P_0,\P_1,\ldots,\P_q$, 
$$
\qQTC{\neg\P_0}{\P_q}{k}{q} \leq \sum_{s=1}^q \QTC{\neg\P_{s-1}}{\P_{s}}{k} \, .
$$ 
\end{lemma}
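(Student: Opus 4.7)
The plan is a telescoping/union-bound argument that ``splits'' any trajectory from $\neg\P_0$ to $\P_q$ across the $q$ rounds according to the first round $s$ at which the property $\P_s$ starts to hold. Concretely, fix any choice of unitaries $U_1,\ldots,U_{q-1}$ and write $V := \P_q\, \CO^k U_{q-1}\CO^k \cdots U_1\, \CO^k\, \neg\P_0$. I would insert the resolution of identity $\mathsf{I} = \P_s + \neg\P_s$ right after the $s$-th application of $\CO^k$ (from the right) for each $s = 1,\ldots,q-1$, and then regroup the resulting $2^{q-1}$ terms according to the smallest index $s$ for which the $\P_s$-branch is taken (with $s=q$ catching the case in which all inserted branches are $\neg\P_t$, noting that $\P_q$ is already present on the far left). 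This yields a decomposition
\begin{equation*}
V \;=\; \sum_{s=1}^{q} V_s, \qquad V_s \;=\; A_s \,\bigl(\P_s\, \CO^k\, U_{s-1}\, \neg\P_{s-1}\bigr)\, B_s,
\end{equation*}
where $A_s$ and $B_s$ are products of the remaining $W_t$'s interleaved with either $\id$'s (on the $A_s$-side, because the branches past position $s$ got summed back to identities) or with projections $\neg\P_t$ for $t<s-1$ (on the $B_s$-side), and where $U_0$ should be read as $\id$. In particular, $\|A_s\|, \|B_s\| \leq 1$.

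Next I would reduce the key factor to the quantum transition capacity. Since each $U_{s-1}$ acts on the query and workspace registers only, it commutes with the database projection $\neg\P_{s-1}$, so $\|\P_s\, \CO^k\, U_{s-1}\, \neg\P_{s-1}\| = \|\P_s\, \CO^k\, \neg\P_{s-1}\|$. Writing $\CO^k$ in the basis that is computational on the $\bfx$-register and Fourier on the $\bfy$-register exhibits it as $\sum_{\bfx,\hat\bfy} \proj{\bfx}\otimes\proj{\hat\bfy}\otimes \CO_{\bfx\hat\bfy}$, so the operator is block-diagonal with blocks indexed by $(\bfx,\hat\bfy)$, and by the basic property of the operator norm the overall norm equals $\max_{\bfx,\hat\bfy}\|\P_s\, \CO_{\bfx\hat\bfy}\, \neg\P_{s-1}\|$. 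Finally, since $\CO_{\bfx\hat\bfy}$ acts trivially on all registers outside $\bfx$, the orthogonal decomposition $\CC[\DB] = \bigoplus_{[D]} \CC[D|^\bfx]$ is left invariant by $\CO_{\bfx\hat\bfy}$ as well as by the diagonal projections $\P_s, \neg\P_{s-1}$, so the operator norm further reduces to $\max_{[D]}\|\P_s|_{D|^\bfx}\, \CO_{\bfx\hat\bfy}\, \neg\P_{s-1}|_{D|^\bfx}\|$, which is precisely $\QTC{\neg\P_{s-1}}{\P_s}{k}$ by Definition~\ref{def:QTC}.

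Combining the two steps, $\|V_s\| \leq \QTC{\neg\P_{s-1}}{\P_s}{k}$, and the triangle inequality then gives $\|V\| \leq \sum_{s=1}^q \QTC{\neg\P_{s-1}}{\P_s}{k}$. Taking the supremum over all $U_1,\ldots,U_{q-1}$ and all workspace dimensions $d$ yields the stated bound on $\qQTC{\neg\P_0}{\P_q}{k}{q}$.

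The only genuinely delicate step is the bookkeeping for the regrouping in the telescoping decomposition, in particular the treatment of the boundary cases $s=1$ (there is no $U_0$) and $s=q$ (the ``$\P_q$'' on the far left plays the role of the inserted projection). Everything else is either routine (triangle inequality, unitary invariance, block-diagonal norms) or a direct appeal to the observation that database projections commute with the adversary's unitaries and that $\CO_{\bfx\hat\bfy}$ respects the $D|^\bfx$-decomposition, which is the built-in feature that makes the definition of $\QTC{\cdot}{\cdot}{k}$ tight for this kind of hybrid argument.
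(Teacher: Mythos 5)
Your proposal is correct and is essentially the paper's proof: the paper peels off one round at a time by inserting $\id = \P_{q-1} + \neg\P_{q-1}$ and recursing, which unrolled is exactly your ``group by the first round at which $\P_s$ holds'' decomposition, and your reduction of $\|\P_s\,\CO^k\,\neg\P_{s-1}\|$ to $\QTC{\neg\P_{s-1}}{\P_s}{k}$ via block-diagonality over $(\bfx,\hat\bfy)$ and then over the cosets $D|^\bfx$ is verbatim the paper's second step. (Only a cosmetic slip: the ``$W_t$'' factors in your $A_s,B_s$ are never defined, though they are clearly the $\CO^k U_t$ blocks.)
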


\begin{proof}
By means of inserting $\id = \P_q + (\id - \P_q)$ before $U_{q-1}$ and using properties of the norm and the fact that $U_{q-1}$ and $\P_{q-1}$ commute (as they act on different subsystems) , we obtain
\begin{align*}
    \| \P_q \, \CO^k \,U_{q-1} \, \CO^k \cdots \CO^k \, (\id - \P_0) \| \leq \| \P_{q-1} \, \CO^k \,U_{q-2} \, \CO^k \cdots \CO^k \, (\id - \P_0) \| + \| \P_q \, \CO^k \, (\id - \P_{q-1}) \| \, .
\end{align*}
To the first term, we apply induction; so it remains to bound the second term by $\QTC{\neg\P_{q-1}}{\P_q}{k}$. Setting $\P = \neg \P_{q-1}$ and $\P' = \P_q$, this follows from%
\footnote{In line with Remark~\ref{rem:convention}, we consider $\P|_{D|^\bfx}$ to be a projection acting on $\CC[\bar{\Ycal}^k]$, and thus $\id$ in the last term is the identity in $\Lin(\CC[\bar\Ycal^k])$. }
$$
\|\P' \CO^k\, \P \| \leq \max_{\bfx,\hat\bfy} \|\P' \CO_{\bfx\hat\bfy}\, \P \| \leq \max_{\bfx,\hat\bfy,D} \|\P'|_{D|^\bfx} \,\CO_{\bfx\bfy} \, (\id - \P|_{D|^\bfx})\|    \, ,
$$
where for the first inequality we observe that $\P' \CO^k \P$ maps $\ket{\bfx}\ket{\hat\bfy} \otimes \ket{\Gamma}$ to $\ket{\bfx}\ket{\hat\bfy}\otimes \P' \CO_{\bfx\hat\bfy} \P\ket{\Gamma}$, and so the first inequality holds by basic properties of the operator norm. 
Similarly for the second inequality: For any fixed $D$, consider the subspace of $\CC[\DB]$ spanned by $\ket{D[\bfx\!\mapsto\!\bfr]}$ with $\bfr \in \bar{\cal Y}^k$. On this subspace, $\P$ and $\P|_{D|^\bfx}$ are identical projections (and similarly for $\P'$). Also, $\CO_{\bfx\bfy}$ is a unitary on this subspace. The claim then again follows again by basic properties of the operator norm. 
\end{proof}

The following is now an direct consequence of Corollary~\ref{cor:zha}, the definition of $\qQTC{{\boldsymbol \bot}}{\P^R}{k}{q}$, and the above lemma. 

\begin{theorem}\label{thm:QTCBound}
Let $R$ be a relation, and let $\cal A$ be a $k$-parallel $q$-query quantum oracle algorithm outputting ${\bf x} \in \Xcal^\ell$ and ${\bf y} \in \Ycal^\ell$ and with success probability $p$, as considered in Corollary~\ref{cor:zha}. Consider the database property
$$
\P^R = \big\{D \in \DB \,|\, \exists \, \bfx \in {\cal X}^\ell : \big(\bfx,D(\bfx)\big) \in R\big\} 
$$
induced by $R$. 
Then, for any database properties $\P_0,\ldots, \P_q$ with $\P_0=\neg {\boldsymbol \bot}$ and $\P_q=\P^R$:
$$
\sqrt{p} \,\leq\, \qQTC{{\boldsymbol \bot}}{\P^R}{k}{q} + \sqrt{\frac{\ell}{M}}
\,\leq\, 
\sum_{s=1}^q \QTC{\neg\P_{s-1}}{\P_{s}}{k} + \sqrt{\frac{\ell}{M}} \: .
$$
\end{theorem}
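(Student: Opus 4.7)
The plan is to prove the two inequalities in turn. The first inequality follows by combining Corollary~\ref{cor:zha} with a direct rewriting of the compressed-oracle success probability as an operator norm of the form appearing inside $\qQTC{{\boldsymbol \bot}}{\P^R}{k}{q}$. The second inequality is immediate from Lemma~\ref{lem:tech}.

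For the first inequality, I would model the execution of $\cal A$ against the compressed oracle as an alternation $V_q\,\CO^k\,V_{q-1}\,\CO^k\,\cdots V_1\,\CO^k\,V_0$ of adversary unitaries $V_i$ acting on $\CC[\Xcal] \otimes \CC[\Ycal] \otimes \CC^d$ (where $\CC^d$ is the adversary's workspace) with $k$-parallel oracle calls, applied to the joint initial state $|\phi_0\rangle \otimes |{\boldsymbol \bot}\rangle$. Call the resulting state $|\psi_q\rangle$. Corollary~\ref{cor:zha} gives $\sqrt{p} \leq \sqrt{p'} + \sqrt{\ell/M}$, where $p'$ is the probability that, after measuring $\bfx,\bfy$ from $\cal A$ and then measuring the database register to get $D$, we have $\bfy = D(\bfx)$ and $(\bfx,\bfy) \in R$. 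Any such $(\bfx,\bfy)$ witnesses $D \in \P^R$, so $p' \leq \|\P^R\, |\psi_q\rangle\|^2$, where $\P^R$ acts as the identity on the adversary and query registers.

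To turn this into a bound in terms of $\qQTC{{\boldsymbol \bot}}{\P^R}{k}{q}$, I would drop the outermost $V_q$ (which is unitary and acts on a subsystem disjoint from $\P^R$), absorb $V_0$ into $|\phi_0\rangle$, and use that $|\phi_0\rangle \otimes |{\boldsymbol \bot}\rangle$ lies in the range of the projection ${\boldsymbol \bot}$ (viewed as $\id \otimes \proj{{\boldsymbol \bot}}$) to insert ${\boldsymbol \bot}$ on the right. Submultiplicativity of the operator norm and normalization of the initial vector then yield
$$
\|\P^R\, |\psi_q\rangle\| \,\leq\, \bigl\| \P^R\, \CO^k\, V_{q-1}\, \CO^k\, \cdots V_1\, \CO^k\, {\boldsymbol \bot} \bigr\| \,\leq\, \qQTC{{\boldsymbol \bot}}{\P^R}{k}{q} \, ,
$$
where the last step is by the definition of $\qQTC{\,\cdot\,}{\,\cdot\,}{k}{q}$, whose supremum ranges over exactly such sequences of intermediate unitaries on $\CC[\Xcal] \otimes \CC[\Ycal] \otimes \CC^d$. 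Combining with Corollary~\ref{cor:zha} gives the first inequality. For the second inequality, apply Lemma~\ref{lem:tech} to the chain $\P_0, \ldots, \P_q$: since $\neg\P_0 = {\boldsymbol \bot}$ (double negation) and $\P_q = \P^R$, the lemma directly yields $\qQTC{{\boldsymbol \bot}}{\P^R}{k}{q} \leq \sum_{s=1}^q \QTC{\neg\P_{s-1}}{\P_s}{k}$. The only delicate point in the whole argument is the bookkeeping of which subsystem each operator acts on so that the commutations and norm-inequalities in the first inequality are justified; everything else is a direct specialization of results already in place.
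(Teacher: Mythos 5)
Your proposal is correct and follows essentially the same route as the paper, which presents the theorem as a direct consequence of Corollary~\ref{cor:zha}, the observation (stated right after Definition~\ref{def:QTC}) that $\sqrt{p'}$ is bounded by $\qQTC{{\boldsymbol \bot}}{\P^R}{k}{q}$, and Lemma~\ref{lem:tech}. You merely spell out the bookkeeping behind the middle step (dropping $V_q$, absorbing $V_0$ into the initial state, inserting the projection ${\boldsymbol \bot}$), which the paper leaves implicit.
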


\begin{remark}\label{rem:size_condition}
This result implies that in order to bound $p$, it is sufficient to find a sequence ${\boldsymbol \bot} \not\in \P_0,\ldots,\P_q = \P^R$ of properties for which all quantum transition capacities $\QTC{\neg\P_{s-1}}{\P_{s}}{}$ are small. Often, it is good to keep track of the (growing but bounded) size of the database and instead bound the capacities 
$$
\QTC{\SIZE[k(s-1)] \backslash\P_{s-1}}{\P_{s}}{} = \QTC{\SIZE[k(s-1)] \backslash\P_{s-1}}{\P_{s} \cup \neg\SIZE[ks]}{} \, ,
$$ 
where the equality is due to the fact that the size of a database cannot grow by more than $k$ with one $k$-parallel query. Formally, we would then consider the database properties $\P'_s = \neg(\SIZE[ks] \setminus\P_s)= \P_{s} \cup \neg\SIZE[ks]$. 
\end{remark}


We note that Theorem~\ref{thm:QTCBound} assumes that algorithm $\cal A$ not only outputs $\bfx$, but also $\bfy$, which is supposed to be $H(\bfx)$. Most of the time, this can be assumed without loss of generality, by letting $\cal A$ make a few more queries to obtain and then output $\bfy = H(\bfx)$. This increases the parameter $q$ by at most $\lceil\ell/k\rceil$. However, this is problematic when the relation $R$ of interest actually depends on $q$, as is the case for the hash chain problem. There, we will then use the following variant, which does not require $\cal A$ to output $\bfy$, but requires some additional quantum transition capacity to be analyzed. 

\begin{theorem}\label{thm:QTCBound2}
Let $R$ be a relation. Let $\cal A$ be a $k$-parallel $q$-query oracle quantum algorithm that outputs ${\bf x} \in \Xcal^\ell$, and let $p_\circ$ be the probability that $({\bf x},H({\bf x})) \in R$ when $\cal A$ has interacted with the standard random oracle, initialized with a uniformly random function $H$.  
Then, for any database properties $\P_0,\ldots, \P_q$ with $\P_0=\neg {\boldsymbol \bot}$ and $\P_q=\P^R$:
$$
\sqrt{p_\circ} \,\leq\, \qQTC{{\boldsymbol \bot}}{\P^R}{k}{q} + \max_\bfx\QTC{\neg \P^R_\bfx}{\P^R_\bfx}{\ell} + \sqrt{\frac{\ell}{M}}
\,\leq\, 
\sum_{s=1}^q \QTC{\neg\P_{s-1}}{\P_{s}}{k} + \max_\bfx\QTC{\neg \P^R_\bfx}{\P^R_\bfx}{\ell} + \sqrt{\frac{\ell}{M}} \: ,
$$
where $\P^R_\bfx := \{ D \in \DB \,|\, (\bfx, D(\bfx)) \in R \}$. 
\end{theorem}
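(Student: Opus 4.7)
The plan is to reduce to Corollary~\ref{cor:zha} by appending a ``verification'' query: let $\mathcal A'$ denote the algorithm that runs $\mathcal A$, copies its output $\bfx$ into a fresh query input register with $\bfy_{\mathrm{in}} = \mathbf 0$, applies a single $\ell$-parallel query, and outputs the pair $(\bfx, \bfy)$. In the standard-oracle world this extra query produces $\bfy = H(\bfx)$ deterministically, so $\mathcal A'$ succeeds in the sense of Corollary~\ref{cor:zha} exactly when $(\bfx, H(\bfx)) \in R$, i.e.\ with probability $p_\circ$. Applying Corollary~\ref{cor:zha} therefore yields $\sqrt{p_\circ} \le \sqrt{p'} + \sqrt{\ell/M}$, where $p'$ is at most $\|\Pi_{\mathrm{succ}}\ket{\Phi}\|^2$ with $\Pi_{\mathrm{succ}} := \sum_\bfx \proj{\bfx}_{\mathrm{out}} \otimes \P^R_\bfx$ and $\ket{\Phi}$ denoting the state after $\mathcal A'$ in the compressed-oracle world.

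Next I would write $\ket{\Phi} = \CO^\ell V \ket{\Psi}$, where $\ket{\Psi}$ is the state after $\mathcal A$ and $V$ prepares the query registers from the output $\bfx$. Inserting $\id = \P^R + \neg\P^R$ on the database register of $\ket{\Psi}$ (which commutes with $V$) and applying the triangle inequality gives
$$
\|\Pi_{\mathrm{succ}}\ket{\Phi}\| \,\le\, \|\Pi_{\mathrm{succ}}\, \CO^\ell V \P^R \ket{\Psi}\| + \|\Pi_{\mathrm{succ}}\, \CO^\ell V \neg\P^R \ket{\Psi}\|.
$$
The first term is bounded by $\|\P^R\ket{\Psi}\| \le \qQTC{{\boldsymbol \bot}}{\P^R}{k}{q}$ by the same reasoning as in the proof of Theorem~\ref{thm:QTCBound}: $\Pi_{\mathrm{succ}}$ and $\CO^\ell V$ have operator norm~$1$, unitaries on non-database registers commute with $\P^R$, and the initial database state lies in the image of ${\boldsymbol \bot}$.

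For the second term I would exploit that $\neg\P^R \subseteq \neg\P^R_\bfx$ for every $\bfx$. Decomposing $\neg\P^R \ket{\Psi} = \sum_\bfx \ket{\bfx}_{\mathrm{out}} \otimes \ket{\chi'_\bfx}$, each $\ket{\chi'_\bfx}$ has database component in $\neg\P^R_\bfx$, and expanding $\ket{\mathbf 0} = M^{-\ell/2}\sum_{\hat\bfy}\ket{\hat\bfy}$ in the Fourier basis makes the appended query split as a superposition of the $\CO_{\bfx\hat\bfy}$-actions on the database. Pythagoras over the orthogonal $\bfx$- and $\hat\bfy$-branches then yields
$$
\|\Pi_{\mathrm{succ}}\, \CO^\ell V \neg\P^R \ket{\Psi}\|^2 \,=\, \sum_\bfx \frac{1}{M^\ell}\sum_{\hat\bfy}\|\P^R_\bfx\, \CO_{\bfx\hat\bfy}\ket{\chi'_\bfx}\|^2 \,\le\, \max_\bfx \QTC{\neg\P^R_\bfx}{\P^R_\bfx}{\ell}^2,
$$
where the final inequality uses $\|\P^R_\bfx\, \CO_{\bfx\hat\bfy} \neg\P^R_\bfx\| \le \QTC{\neg\P^R_\bfx}{\P^R_\bfx}{\ell}$ (an instance of Definition~\ref{def:QTC} with the query input specialized to the output $\bfx$) together with $\sum_\bfx \|\ket{\chi'_\bfx}\|^2 \le 1$. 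Combining the two bounds yields the first inequality in the statement, and the second follows by applying Lemma~\ref{lem:tech} to $\qQTC{{\boldsymbol \bot}}{\P^R}{k}{q}$. The main technical subtlety is in the second term: one must carefully track the entanglement between the $\bfx$-output register and the database register through the appended query, using the Fourier expansion of $\ket{\mathbf 0}$ as the bridge between the physical $\ell$-parallel query and the per-$\hat\bfy$ transition capacities inside Definition~\ref{def:QTC}.
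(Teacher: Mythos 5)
Your proposal is correct and follows essentially the same route as the paper's proof: append $\ell$ classical queries to evaluate $H(\bfx)$, invoke Corollary~\ref{cor:zha}, insert $\id = \P^R + \neg\P^R$ before the appended query, bound the $\P^R$-branch by $\qQTC{{\boldsymbol\bot}}{\P^R}{k}{q}$, and bound the $\neg\P^R$-branch by $\max_\bfx\QTC{\neg\P^R_\bfx}{\P^R_\bfx}{\ell}$ using $\neg\P^R\subseteq\neg\P^R_\bfx$. The only difference is presentational: where you carry out an explicit Fourier/Pythagoras decomposition over the $\bfx$- and $\hat\bfy$-branches, the paper appeals to the block-diagonal structure and ``basic properties of the operator norm'' as in Lemma~\ref{lem:tech}.
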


\begin{proof}
Let $\tilde{\cal A}$ be the algorithm that runs $\cal A$ and then makes $\ell$ more {\em classical} queries to obtain $\bfy = H(\bfx)$, which it then outputs along with $\bfx$. Furthermore, let
\begin{itemize}
    \item $\tilde p$ be the probability $(\bfx,\bfy) \in R$ and $\bfy = H(\bfx)$ when $\tilde{\cal A}$ interacts with the standard random oracle, 
    \item $\tilde p'$ be the probability that $(\bfx,\bfy) \in R$ and $\bfy = D(\bfx)$ when $\tilde{\cal A}$ interacts with the compressed oracle instead, and 
    \item  $\tilde p'_\circ$ be the probability that $(\bfx,D(\bfx)) \in R$ when $\tilde{\cal A}$ interacts with the compressed oracle. 
\end{itemize}
We trivially have that $p_\circ = \tilde p$ and $\tilde p' \leq \tilde p'_\circ$. Also, applying Corollary~\ref{cor:zha} to $\tilde{\cal A}$, we see that 
$$
\sqrt{\tilde p} \leq \sqrt{\tilde p'} + \sqrt\frac{\ell}{M} \, .
$$
It remains to bound $\tilde p'_\circ$. For that purpose, we write $\CO_{\bfx}$ for the the evolution of the compressed oracle upon {\em classical} queries to the components of $\bfx$. Formally, $\CO_{\bfx}$ is a unitary acting on $\CC[\DB] \otimes \CC[\Ycal]^{\otimes \ell}$.%
\footnote{We emphasize that the components $\CO_{x_i}$ of $\CO_{\bfx}$ act on different response registers. Also, for these queries to be purely classical, the response registers have to be initialized in state $\ket{0}$; however, we do not exploit this. }
Also, we let $\{\proj{\bfx}\}$ be the measurement acting on $\cal A$'s output register to produce the output $\bfx$. Then, with the supremum over all $0 < d \in \ZZ$ and all unitaries $U_1,\ldots,U_{q}$ acting on $\CC[\Xcal] \otimes \CC[\Ycal] \otimes \CC^d$, 
\begin{align*}
\sqrt{\tilde p'_\circ} &\leq \sup_{U_1,\ldots,U_{q}} \bigg\| \sum_\bfx (\P^R_\bfx \, \CO_{\bfx} \otimes \proj{\bfx}) \, U_{q} \, \CO^k \,U_{q-1} \, \CO^k \cdots U_1 \, \CO^k {\boldsymbol \bot} \bigg\| \\
&\leq  \bigg\| \sum_\bfx (\P^R_\bfx \, \CO_{\bfx} \otimes \proj{\bfx}) (\id - \P^R) \bigg\|+ \sup_{U_1,\ldots,U_{q}} \big\| \P^R \, U_{q} \, \CO^k \,U_{q-1} \, \CO^k \cdots U_1 \, \CO^k {\boldsymbol \bot} \big\| \\
    &\leq \bigg\| \sum_\bfx \P^R_\bfx \, \CO_{\bfx}  (\id - \P^R) \otimes \proj{\bfx} \bigg\|  + \sup_{U_1,\ldots,U_{q-1}} \big\| \P^R \, \CO^k \,U_{q-1} \, \CO^k \cdots U_1 \, \CO^k {\boldsymbol \bot} \big\| \\
    &\leq \max_\bfx \big\| \P^R_\bfx \, \CO_{\bfx}  (\id - \P^R) \big\|  + \qQTC{{\boldsymbol \bot}}{\P^R}{k}{q} \\
    &\leq \max_\bfx \big\| \P^R_\bfx \, \CO_{\bfx}  (\id - \P^R_\bfx) \big\|  + \qQTC{{\boldsymbol \bot}}{\P^R}{k}{q}  \, .
\end{align*}
Using the same reasoning as in the proof of Lemma~\ref{lem:tech}, 
$$
\big\| \P^R_\bfx \, \CO_{\bfx}  (\id - \P^R_\bfx) \big\| \leq \max_{D,\hat\bfy} \big\| \P^R_\bfx|_{D|^\bfx} \, \CO_{\bfx \hat\bfy}  (\id - \P^R_\bfx|_{D|^\bfx}) \big\| \leq \QTC{\neg \P^R_\bfx}{\P^R_\bfx}{\ell} \, ,
$$
which concludes the proof. 
\end{proof}

In the following section, we offer techniques to bound the quantum transition capacities (in typical cases) using {\em classical} reasoning. Together with Theorem~\ref{thm:QTCBound} or \ref{thm:QTCBound2}, this then provides means to prove lower bounds on the quantum query complexity (for certain computational problems in the random oracle model) using purely classical reasoning.

\subsection{Bounding Quantum Transition Capacities Using Classical Reasoning Only}

The general idea is to ``recognize" a database transition $\neg\P \to \P$ in terms of {\em local} properties $\L$, for which the truth value $\L(D)$ only depends on the function value $D(x)$ at {\em one single point} $x$ (or at few points), and then to exploit that the behavior of the compressed oracle at a single point $x$ is  explicitly given by Lemma~\ref{lemma:cO-Matrix}. In the following two sections, we consider two possible ways to do this, but first we provide the formal definition for local properties. 



\begin{definition}\label{def:Locality}
A database property $\L \subseteq \DB$ is {\em $\ell$-local} if $\,\exists \, \bfx = (x_1,\ldots,x_{\ell}) \in \Xcal^\ell$ so that
\begin{enumerate}
    \item\label{it:locality} the truth value of $\L(D)$ is uniquely determined by $D(\bfx)$, and 
    \item\label{it:monotonicity} if $D \in \L \,\wedge\, (\exists \,i \in \{1,\ldots,\ell\} : D(x_i) = \bot)$ then $D[x_i \!\mapsto\!r_i] \in \L \; \forall \, r_i \in \Ycal$. 
\end{enumerate}
The set $\{x_1,\ldots,x_{\ell}\}$ is then called the {\em support} of $\L$, and denoted by $\supp(\L)$. 
\end{definition}

\begin{remark}\label{rem:Supp}
We observe that, as defined above, the support of an $\ell$-local property is not necessarily uniquely defined: if $\ell$ is not minimal with the required property then there are different choices. A natural way to have a unique definition for $\supp(\L)$ is to require it to have minimal size. For us, it will be more convenient to instead consider the choice of the support to be part of the specification of $\L$.%
\footnote{E.g., we may consider the constant-true property $\L$ with support $\supp(\L) = \emptyset$, in which case it is $\ell$-local for any $\ell \geq 0$, or we may consider the same constant-true property $\L$ but now with the support set to $\supp(\L) = \{x_\circ\}$ for some $x_\circ \in \Xcal$, which then is $\ell$-local for $\ell \geq 1$.}
Furthermore, we then declare that $\supp(\L \cup {\sf M}) = \supp(L) \cup \supp({\sf M})$, and $\supp(\L|_{D|^\bfx}) = \supp(\L) \cap \{x_1,\ldots,x_k\}$ for any $D \in \DB$ and $\bfx = (x_1,\ldots,x_k)$.%
\footnote{The above mentioned alternative approach would give $\subseteq$. }
\end{remark}

\begin{remark}
Condition \ref{it:monotonicity} captures that $\bot$ is a dummy symbol with no more ``value" than any other $r \in \Ycal$. 
\end{remark}

For example, for any database property $\P$, and for any $\bfx = (x_1,\ldots,x_\ell)$ and $D$, the property $\P|_{D|^\bfx}$ satisfies requirement \ref{it:locality}.\ of Definition~\ref{def:Locality}. In line with this, Remark~\ref{rem:convention} applies here as well: we may identify an $\ell$-local property $\L$ with a subset of $\bar\Ycal^\ell$.

\subsubsection{Reasoning via Strong Recognizability}

\begin{definition}\label{def:UnifStrongRec}
A database transition $\neg\P \to \P'$ is said be {\em (uniformly) strongly recognizable} by $\ell$-local properties if there exists a family of $\ell$-local properties $\{\L_i\}_i$ so that
\begin{equation}\label{eq:UnifStrongRec}
\P' \subseteq \bigcup_i \L_i \subseteq \P \, .
\end{equation}
\end{definition}
We also consider the following weaker but somewhat more intricate version. 

\begin{definition}\label{def:NonunifStrongRec}
A database transition $\neg\P \to \P'$ is said be {\em $k$-non-uniformly strongly recognizable} by $\ell$-local properties if for every $\bfx = (x_1,\ldots,x_k) \in {\cal X}^k$ with disjoint entries, and for every $D \in \DB$, there exist a family $\{\L^{\bfx,D}_i\}_i$ of $\ell$-local properties $\L^{\bfx,D}_i$ with supports in $\{x_1,\ldots,x_k\}$ so that
\begin{equation}\label{eq:NonunifStrongRec}
\P'|_{D|^\bfx} \subseteq \bigcup_i \L^{\bfx,D}_i \subseteq \P|_{D|^\bfx} \, .
\end{equation}
\end{definition}

It is easiest to think about these definitions for the case $\P = \P'$, where (\ref{eq:UnifStrongRec}) and  (\ref{eq:NonunifStrongRec}) become equalities. Requirement (\ref{eq:UnifStrongRec}) then means that for $D$ to satisfy $\P$ it is {\em necessary} and {\em sufficient} that $D$ satisfies one of the local properties. 


\begin{remark}\label{rem:DistinctSupports}
In the above definitions, as long as the support-size remains bounded by $\ell$, one can always replace two properties by their union without affecting (\ref{eq:UnifStrongRec}), respectively (\ref{eq:NonunifStrongRec}). Thus, we may\,---\,and by default do\,---\,assume the $\L_i$'s to have {\em distinct} (though not necessarily disjoint) supports in Definition~\ref{def:UnifStrongRec}, and the same we may assume for the $\L^{\bfx,D}_i$'s for every $\bfx$ and $D$ in Definition~\ref{def:NonunifStrongRec}. 
\end{remark}

\begin{remark}\label{rem:FromUnifToNon}
It is easy to see that Definition~\ref{def:UnifStrongRec} implies Definition~\ref{def:NonunifStrongRec} with $\L^{\bfx,D}_i := \L_i|_{D|^{\bfx}}$. 
\end{remark}


\begin{theorem}\label{thm:simple}
Let $\neg\P \to \P'$ be $k$-non-uniformly strongly recognizable by $1$-local properties $\{\L_1^{\bfx,D},\ldots,\L_k^{\bfx,D}\}$, where, without loss of generality, the support of $\L_i^{\bfx,D}$ is $\{x_i\}$. Then
$$
\QTC{\neg\P}{\P'}{k} \leq \max_{\bfx,D}\sqrt{10 \sum_i P\bigl[U \!\in\! \L_i^{\bfx,D}\bigr]}
$$
\ifnum\submission=1 \par\vspace{-1ex}\noindent \fi
with the convention that $P\bigl[U \!\in\! \L_i^{\bfx,D}\bigr] = 0$ if $\L_i^{\bfx,D}$ is trivial (i.e. constant true or false). 
\end{theorem}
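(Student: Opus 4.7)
The plan is to unpack the definition of the quantum transition capacity, sandwich the two projections $\P'|_{D|^\bfx}$ and $\neg\P|_{D|^\bfx}$ by the local properties, and then reduce the resulting operator norm to the single-coordinate bound supplied by Lemma~\ref{lemma:cO-Matrix} together with equation~(\ref{eq:connection}). Concretely, fix an arbitrary $\bfx, \hat\bfy, D$, set $\Pi := \P'|_{D|^\bfx}$, $\bar\Pi := \neg\P|_{D|^\bfx}$, and $\L_i := \L_i^{\bfx,D}$, and view everything as operators on $\CC[\bar\Ycal^k]$ via Remark~\ref{rem:convention}. By strong recognizability, $\Pi \leq \bigvee_i \L_i$ and $\bar\Pi \leq \bigwedge_i \neg \L_i$ as projections.

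The key structural fact is that each $\L_i$ has support $\{x_i\}$, so the $\L_i$ act on pairwise distinct tensor factors and in particular commute. Therefore $\bigvee_i \L_i$ and $\bigwedge_i \neg\L_i = \prod_i \neg\L_i$ admit a common eigenbasis with $\{0,1\}$-valued eigenvalues, which makes the inequality $\bigvee_i \L_i \leq \sum_i \L_i$ hold in the positive operator order (entry-wise $\max \leq \sum$ on $\{0,1\}$). Combined with $\Pi \leq \bigvee_i \L_i$, this yields the operator-norm union bound
\[
\|\Pi \CO_{\bfx\hat\bfy} \bar\Pi\,v\|^2 \leq \langle v, \bar\Pi\, \CO_{\bfx\hat\bfy}^\dagger (\textstyle\bigvee_i \L_i) \CO_{\bfx\hat\bfy}\, \bar\Pi\, v\rangle \leq \sum_i \|\L_i \CO_{\bfx\hat\bfy} \bar\Pi\, v\|^2,
\]
so that $\|\Pi \CO_{\bfx\hat\bfy} \bar\Pi\|^2 \leq \sum_i \|\L_i \CO_{\bfx\hat\bfy} \bar\Pi\|^2$.

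Next, since $\bar\Pi \leq \neg \L_i$ as projections, we have $\bar\Pi = \neg\L_i\,\bar\Pi$, and therefore $\|\L_i \CO_{\bfx\hat\bfy} \bar\Pi\| \leq \|\L_i \CO_{\bfx\hat\bfy} \neg\L_i\|$. Factoring $\CO_{\bfx\hat\bfy} = \prod_j \CO_{x_j \hat y_j}$ and using that $\L_i, \neg\L_i$ act only on register $x_i$ while the remaining $\CO_{x_j \hat y_j}$ ($j \neq i$) act unitarily on disjoint registers, basic properties of the operator norm collapse this to the one-register quantity $\|\L_i \CO_{x_i \hat y_i} \neg\L_i\|$. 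Bounding the operator norm by the Frobenius norm and using the matrix entries of Lemma~\ref{lemma:cO-Matrix},
\[
\|\L_i \CO_{x_i \hat y_i} \neg\L_i\|^2 \leq \sum_{r \notin \L_i} \sum_{u \in \L_i} |\gamma_{u,r}^{\hat y_i}|^2 = \sum_{r \notin \L_i} \tilde P[\,r\!\neq\! U \!\in\! \L_i \,|\, r, \hat y_i\,] \leq 10\, P[U \!\in\! \L_i],
\]
where the last inequality is exactly (\ref{eq:connection}) (noting that $\tilde P[r \!\neq\! U \!\in\! \L_i | r, \hat y_i] = \tilde P[U \!\in\! \L_i | r, \hat y_i]$ for $r \notin \L_i$). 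Assembling the two steps gives $\|\Pi \CO_{\bfx\hat\bfy} \bar\Pi\|^2 \leq 10 \sum_i P[U \!\in\! \L_i^{\bfx,D}]$, and taking the maximum over $\bfx, \hat\bfy, D$ yields the theorem (with the trivial convention handling the case of constant-true/false $\L_i$).

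The only delicate point is step three: the inequality $\bigvee \L_i \leq \sum \L_i$ is false for general projections, so it is essential to invoke the commutativity coming from the supports $\{x_i\}$ being distinct singletons. This is precisely the place where the $1$-locality hypothesis is used, and it also explains why the harder $2$-local analogue (used later for collisions) requires a different route through Theorem~\ref{thm:simple-general} rather than being handled by the same simple argument.
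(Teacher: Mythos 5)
Your proof is correct and follows essentially the same route as the paper's: a union bound over the $1$-local properties, reduction to the single-register operator $\L_i\,\CO_{x_i\hat y_i}\,(\id-\L_i)$, a Frobenius-norm estimate via Lemma~\ref{lemma:cO-Matrix}, and equation~(\ref{eq:connection}). The only cosmetic difference is in the union-bound step, where the paper disjointifies the $\L_i$ into mutually orthogonal ${\sf M}_i := \L_i \setminus \bigcup_{j<i}\L_j$ and applies Pythagoras, whereas you use $\bigvee_i\L_i \leq \sum_i\L_i$ for commuting projections; these are the same computation (and the commutativity holds already because all database properties are diagonal projections in the computational basis, not only because the supports are distinct).
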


Before doing the proof, let us show how the above can be used to bound the probability of finding a $0$-preimage. 

\begin{example}
$\P' = \P = \PRMG$ is uniformly strongly recognized by the $1$-local properties $\L_x = \{D|D(x)\!=\!0\}$. Furthermore, as a subset of $\bar{\cal Y}$, the property $\L_x^{\bfx,D} := \L_x|_{D|^\bfx}$ is either $\{0\}$ or trivial.%
\footnote{In more detail, $\L_x|_{D|^\bfx}=\{0\}$ whenever $x \in \{x_1,\ldots,x_k\}$, and otherwise it is constant true if $D(x) = 0$ and constant false if $D(x) \neq 0$. }
In the non-trivial case, we obviously have  $P\bigl[U \!\in\! \L_i^{\bfx,D}\bigr] = P[U \!=\! 0] = 1/M$. 
It then follows from Theorem~\ref{thm:simple} that
\ifnum\submission=1 
  we can bound the transition capacity as $\QTC{\neg{\PRMG}}{{\PRMG}}{k} \leq \sqrt{10k/M}$
\else
  $$
  \QTC{\neg{\PRMG}}{{\PRMG}}{k} \leq \sqrt{\frac{10k}{M}} \, ,
  $$
\fi
and thus from Theorem~\ref{thm:QTCBound}, setting $\P_i = {\PRMG}$ for all $i$, that the probability $p$ of any $k$-parallel  $q$-query algorithm outputting a $0$-preimage $x$ is bounded by 
$$
p \leq \biggl(q \sqrt{\frac{10k}{M}} + \frac{1}{\sqrt{M}}\biggr)^2 = O\biggl(\frac{k q^2 }{M}\biggr) \, .
$$
\end{example} 

\begin{proof}[\ifnum\submission=0 Proof (of Theorem~\ref{thm:simple}). \else of Theorem~\ref{thm:simple}\fi]
Consider arbitrary $\bfx$ and $D$.
To simplify notation, we then write $\L_i$ for $\L^{\bfx,D}_i$. 
We introduce the properties ${\sf M}_i := \L_i \setminus (\bigcup_{j<i} \L_j)$ for $i \in \{1,\ldots,k\}$. By assumption (\ref{eq:NonunifStrongRec}), as projectors they satisfy
\ifnum\submission=1 
  the operator inequalities $\P'|_{D|^\bfx} \leq \sum_i {\sf M}_i \leq \sum_i \L_i$ and ${\sf M}_i \leq \L_i \leq \P|_{D|^\bfx} \: \forall i$,
\else
  $$
  \P'|_{D|^\bfx} \leq \sum_i {\sf M}_i \leq \sum_i \L_i
  \qquad\text{and}\qquad
  \forall i: {\sf M}_i \leq \L_i \leq \P|_{D|^\bfx} \, ,
  $$
\fi
where, additionally, the ${\sf M}_i$'s are mutually orthogonal. 
Then, exploiting the various properties, for any $\hat\bfy$ we have
\begin{align*}
\|\P'|_{D|^\bfx} \,\CO_{\bfx\hat\bfy} \, (\id - \P|_{D|^\bfx})&\|^2 
\leq 
\bigg\|\sum_i {\sf M}_i \,\CO_{\bfx\hat\bfy} \, (\id - \P|_{D|^\bfx})\bigg\|^2 
= \sum_i \|{\sf M}_i \,\CO_{\bfx\hat\bfy} \, (\id - P|_{D|^\bfx})\|^2 \\
&\leq \sum_i \|\L_i \,\CO_{\bfx\hat\bfy} \, (\id - \L_i)\|^2
= \sum_i \|\L_i \,\CO_{x_i \hat y_i} \, (\id - \L_i)\|^2 \, ,
\end{align*}
where, by considering the map as a map on $\CC[\bar\Ycal]$ and bounding operator norm by the Frobenius norm, 
\begin{align*}
\|\L_i \,\CO_{x_i \hat y_i} \, (\id - \L_i) & \|^2 \leq \!\sum_{r_i,u_i \in \bar\Ycal}\! |\bra{u_i} \L_i \,\CO_{x_i \hat y_i} \, (\id - \L_i) \ket{r_i}|^2   
= \sum_{r_i \not\in {\cal \L}_i \atop u_i \in \L_i} |\bra{u_i} \CO_{x_i \hat y_i} \ket{r_i}|^2 
= \sum_{r_i \not\in \L_i} \tilde P[U \!\in\! \L_i| r_i, \hat y_i] \, .
\end{align*}
The claim now follows from (\ref{eq:connection}), with the additional observations that if $\bot \in \L_i$ (in which case (\ref{eq:connection}) does not apply) then $\L_i$ is constant-true (by property \ref{it:monotonicity} of Definition~\ref{def:Locality}), and that the sum vanishes if $\L_i$ is constant-true. 
\ifnum\submission=1 \qed \fi
\end{proof}

\subsubsection{Reasoning via Weak Recognizability}

Here, we consider a weaker notion of recognizability, which is wider applicable but results in a slightly worse bound. Note that it will be more natural here to speak of a transition $\P \to \P'$ instead of $\neg\P \to \P'$, i.e., we now write $\P$ for what previously was its complement. 



\begin{definition}\label{def:UnifWeakRec}
A database transition $\P \to \P'$  is said be {\em (uniformly) weakly recognizable} by $\ell$-local properties if there exists a family of $\ell$-local properties $\{\L_i\}_i$ so that
$$
D \in \P \:\wedge\: D' \in \P' \;\Longrightarrow\; \exists \, i : D' \in \L_i \:\wedge\: \big( \exists\, x \!\in\! \supp(\L_i) : D(x) \neq D'(x) \big) \, .
$$
\end{definition}
%
Also here, we have a non-uniform version (see below). Furthermore, Remarks~\ref{rem:DistinctSupports} and \ref{rem:FromUnifToNon} apply correspondingly;%
\footnote{We point out that this is thanks to our convention on the definition of the support, as discussed in Remark~\ref{rem:Supp}.}
in particular, we may assume the supports in the considered families of local properties to be distinct. 

\begin{definition}\label{def:NonunifWeakRec}
A database transition $\P \to \P'$ is said be {\em $k$-non-uniformly  weakly recognizable} by $\ell$-local properties if for every $\bfx = (x_1,\ldots,x_k) \in {\cal X}^k$ with disjoint entries, and for every $D \in \DB$, there exist a family of $\ell$-local properties $\{\L^{\bfx,D}_i\}_i$ with supports in $\{x_1,\ldots,x_k\}$ so that
\begin{equation}\label{eq:NonunifRec}
D_\circ \in \P|_{D|^\bfx} \:\wedge\: D' \in \P'|_{D|^\bfx}  \,\Longrightarrow\, \exists \, i : D' \in \L^{\bfx,D}_i  \:\wedge\: \big( \exists\, x \!\in\! \supp(\L^{\bfx,D}_i) : D_\circ(x) \neq D'(x) \big) \, .
\end{equation}
\end{definition}

\begin{remark}
Viewing $\L^{\bfx,D}_i$ as subset of $\bar\Ycal^k$, and its support $\L^{\bfx,D}_i = \{x_{i_1},\ldots,x_{i_\ell}\}$ then as subset $\{i_1,\ldots,i_\ell\}$ of $\{1,\ldots,k\}$,  (\ref{eq:NonunifRec}) can equivalently be written as follows, which is in line with Lemma~\ref{lem:classicalCHAINproperty} (where $\supp(\L^{\bfx,D}_i) = \{i\}$): 
$$
D[\bfx \mapsto \bfr] \in \P \:\wedge\: D[\bfx \mapsto \bfu] \in \P'  \;\Longrightarrow\; \exists \, i : \bfu \in \L^{\bfx,D}_i \:\wedge\:  \big(\exists \, j \in \supp(\L^{\bfx,D}_i): \bfr_j \neq \bfu_j \big) \, .
$$
\end{remark}

\begin{example}
Consider ${\chain}^q = \{D\,|\,\exists\, x_0,x_1,\ldots,x_q \in {\cal X}: D(x_{i-1}) \triangleleft x_i \:\forall i\}$ for an arbitrary positive integer $q$. For any $\bfx$ and $D$, we let $\L_i = \L^{\bfx,D}_i$ be the $1$-local property that has support $\{x_i\}$ and, as a subset of $\bar{\cal Y}$, is defined as (\ref{eq:LocalPropForChain}), 
i.e., so that $u \in \L_i$ if and only if $u \triangleleft x$ for some $x$ with $D(x) \neq \bot$ or $x \in \{x_1,\ldots,x_k\}$. 
Lemma~\ref{lem:classicalCHAINproperty} from the classical analysis shows that condition (\ref{eq:NonunifRec}) is satisfied for the database transition $\neg{\chain}^q \to {\chain}^{q+1}$. 
This in particular implies that (\ref{eq:NonunifRec}) is satisfied for the database transition $\SIZE[k(q-1)] \setminus {\chain}^q \to {\chain}^{q+1}$; in this latter case however, whenever $D$ is not in $\SIZE$, which then means that the left hand side of (\ref{eq:NonunifRec}) is never satisfied, we may simply pick the constant-false property as family of local properties satisfying (\ref{eq:NonunifRec}). 
\end{example}

\begin{theorem}\label{thm:tricky}
Let $\P \to \P'$ be $k$-non-uniformly weakly recognizable by $1$-local properties $\L_i^{\bfx,D}$, where the support of $\L_i^{\bfx,D}$ is $\{x_i\}$ or empty. 
Then
$$
\QTC{\P}{\P'}{k} \leq \max_{\bfx,D} e \sum_i \sqrt{10 P\bigl[U \!\in\! \L_i^{\bfx,D}\bigr]} \, ,
$$
\ifnum\submission=1 \par\vspace{-1ex}\noindent \fi
where $e$ is Euler's number.%
\footnote{Unlike Theorem~\ref{thm:simple}, here is no convention that \smash{$P\bigl[U \!\in \L_i^{\bfx,D}\bigr] = 0$} if $\L_i^{\bfx,D}$ is constant-true. This has little relevance since $\L_i^{\bfx,D}$ being constant-true can typically be avoided via Remark~\ref{rem:DistinctSupports}. }
\end{theorem}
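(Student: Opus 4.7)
The plan is to fix arguments $\bfx,\hat\bfy,D$ achieving the maximum in Definition~\ref{def:QTC}, abbreviate $\L_i := \L_i^{\bfx,D}$ and $a_i := \sqrt{10\,P[U\in\L_i]}$, and decompose the operator $W := \P'|_{D|^\bfx}\,\CO_{\bfx\hat\bfy}\,\P|_{D|^\bfx}$ register-by-register by tracking the smallest index for which the weak-recognizability witness holds. For each $i \in \{1,\ldots,k\}$ with non-trivial $\L_i$, I would define the following two single-register operators acting on the register indexed by $x_i$:
\[
F_i := \sum_{u\in\L_i,\,r\neq u}\gamma_{u,r}^{\hat y_i}\,\ketbra{u}{r},
\qquad
E_i := \CO_{x_i\hat y_i} - F_i,
\]
so that $F_i$ picks out exactly the matrix entries of $\CO_{x_i\hat y_i}$ on which the ``bad-at-$i$'' event ``$u\in\L_i \,\wedge\, r\neq u$'' holds, and $E_i$ picks out the complementary entries. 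Bounding the operator norm of $F_i$ by its Frobenius norm and invoking~(\ref{eq:connection}) gives $\|F_i\|^2 \leq \sum_{r}\tilde P[r\neq U\!\in\!\L_i|r,\hat y_i] \leq 10\,P[U\!\in\!\L_i]$, hence $\|F_i\|\leq a_i$, and the triangle inequality then yields $\|E_i\|\leq 1+a_i$ (since $\CO_{x_i\hat y_i}$ is unitary).

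Next, for each entry $\bra{\bfu}W\ket{\bfr}$ that is nonzero, the weak-recognizability hypothesis guarantees that the smallest index $i^\star(\bfu,\bfr)$ for which bad-at-$i^\star$ holds is well-defined; grouping entries by $i^\star = i$ yields $W = \sum_{i=1}^k W_i$. The crucial observation is that the condition ``$i^\star(\bfu,\bfr) = i$'' factorizes register-by-register as ``not bad-at-$j$'' for $j<i$, ``bad-at-$i$'' for $j=i$, and no constraint for $j>i$; combined with $\CO_{\bfx\hat\bfy} = \bigotimes_j\CO_{x_j\hat y_j}$, this gives the operator identity
\[
W_i \;=\; \P'|_{D|^\bfx}\bigl(E_1\otimes\cdots\otimes E_{i-1}\otimes F_i\otimes \CO_{x_{i+1}\hat y_{i+1}}\otimes\cdots\otimes \CO_{x_k\hat y_k}\bigr)\P|_{D|^\bfx}.
\]
Dropping the outer projections and applying sub-multiplicativity of the operator norm (with $\|\CO_{x_j\hat y_j}\|=1$) yields $\|W_i\|\leq a_i\prod_{j<i}(1+a_j)$.

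Combining the triangle inequality $\|W\|\leq\sum_i\|W_i\|$ with the telescoping identity
\[
\sum_{i=1}^k a_i\prod_{j<i}(1+a_j) \;=\; \prod_{j=1}^k(1+a_j) - 1 \;\leq\; e^{\sum_j a_j}-1
\]
gives $\|W\|\leq e^{\sum_j a_j}-1$. To convert this into the claimed bound, I split cases: if $\sum_j a_j\leq 1$, the elementary inequality $e^x-1\leq ex$ valid on $[0,1]$ yields $\|W\|\leq e\sum_j a_j$; otherwise $e\sum_j a_j>e>1\geq\|W\|$ and the bound is trivial. Taking the maximum over $\bfx,\hat\bfy,D$ completes the proof. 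The main obstacle I anticipate is verifying the operator identity for $W_i$ cleanly: the ``$i^\star=i$'' event is a condition on matrix entries (pairs $(u_j,r_j)$), not on basis vectors, so it must be implemented by the non-projection operators $E_j,F_i$ rather than by genuine projections on each register\,---\,this is exactly why weak recognizability (which compares $D_\circ$ and $D'$) rather than strong recognizability (which only inspects $D'$) is compatible with the tensor-product structure of $\CO_{\bfx\hat\bfy}$, and it is also the source of the extra factor of $e$ relative to Theorem~\ref{thm:simple}.
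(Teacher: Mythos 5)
Your proposal is correct and follows essentially the same route as the paper's proof: your $F_i,E_i$ are the paper's $A_i,B_i$, your telescoping by the first ``bad'' index mirrors the paper's telescoping by the last one, the all-$E_j$ term is killed by weak recognizability in exactly the same entrywise way, and the Frobenius-norm bound via~(\ref{eq:connection}) together with the $e^{\sum_j a_j}$ case split matches the paper's final step. The only point worth adding is the paper's parenthetical remark that when $\bot \in \L_i$ (so that~(\ref{eq:connection}) does not directly apply) property~\ref{it:monotonicity} of Definition~\ref{def:Locality} forces $\L_i = \bar\Ycal$ and the bound $\|F_i\|\le\sqrt{10\,P[U\in\L_i]}$ holds trivially.
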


\begin{example} \label{ex:chain_bound}
In the above example regarding ${\chain}^q$ with the considered $\L_i$'s for $D \in \SIZE[kq]$, as in the derivation of the classical bound in Section~\ref{sec:ClLowerBounds}, it holds that $P[U \!\in\! \L_i] \leq kqT/M$, where $T$ denotes the maximal number of $y \in \Ycal$ with $y \triangleleft x$ (for any $x$).%
\footnote{For $D \not\in \SIZE[kq]$ we get the trivial bound $0$ since we may then choose $\L_i$ to be constant false. }
Thus, 
$$
\QTC{\SIZE[k(q-1)]\backslash{\chain}^q }{{\chain}^{q+1}}{k} \leq e k \sqrt{\frac{10 k qT}{M}} \, . 
$$
In order to conclude hardness of finding a $(q+1)$-chain for a $q$-query algorithm in the general case where $x_i$ is not necessarily uniquely determined by the next element in the chain, i.e., when $T > 1$, we have to use Theorem~\ref{thm:QTCBound2} instead of Theorem~\ref{thm:QTCBound}. 
In order to apply Theorem~\ref{thm:QTCBound2}, we also need to bound $\QTC{\neg \chain^{q+1}_{\bfx'}}{\chain^{q+1}_{\bfx'}}{}$ for an arbitrary $\bfx'$. This works along the same lines: Consider arbitrary $D$ and $\bfx$. For the purpose of showing (non-uniform) weak recognizability, we note that if $D[\bfx\!\mapsto \bfr] \not\in \chain^{q+1}_{\bfx'}$ yet $D[\bfx\!\mapsto \bfu] \in \chain^{q+1}_{\bfx'}$ then there must exist a coordinate $x_i$ of $\bfx$, which is also a coordinate $x'_j$ of $\bfx'$, so that $u_i \neq r_i$ and \smash{$u_i \in \L_i^{\bfx,D} := \{ y \,|\, \exists \, i: y \triangleleft x'_i\}$}, and so 
$$
\QTC{\neg \chain^{q+1}_{\bfx'}}{\chain^{q+1}_{\bfx'}}{q+2} \leq e (q+2) \sqrt{ \frac{10 T (q+2)}{M}} \, . 
$$
Applying Theorem~\ref{thm:QTCBound2}, we obtain the following bound. 
\end{example}

\begin{theorem}\label{thm:qChain}
Let $\triangleleft$ be a relation over $\Ycal$ and $\Xcal$. 
The probability $p$ of any $k$-parallel $q$-query oracle algorithm $\cal A$ outputting $x_0,x_1,\ldots,x_{q+1} \in \Xcal$ with the property that $H(x_i) \triangleleft x_{i+1}$ for all $i \in \{0,\ldots,q\}$ is bounded by 
$$
p \leq \biggl(q k e \sqrt{\frac{10qkT}{M}} + e (q+2) \sqrt{ \frac{10 T (q+2)}{M}} + \sqrt{\frac{q+2}{M}}\biggr)^2 = O\biggl(\frac{q^3 k^3 T}{M}\biggr) \, , 
$$
where $T: = \max_x |\{y \in \Ycal \,|\, y \triangleleft x\}|$, and $M$ is the size of the range $\Ycal$ of $H: \Xcal \to \Ycal$. 
\end{theorem}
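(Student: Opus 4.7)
The plan is to apply Theorem~\ref{thm:QTCBound2} directly, re-using almost all ingredients already assembled in Example~\ref{ex:chain_bound}. I would take $\ell := q+2$ and define the relation $R \subseteq \Xcal^{q+2} \times \Ycal^{q+2}$ by $(\bfx,\bfy) \in R \Leftrightarrow y_i \triangleleft x_{i+1}$ for $i = 0,\ldots,q$ (the last coordinate of $\bfy$ being unconstrained). The success event in the theorem then coincides with $\Acal$ outputting some $\bfx$ with $(\bfx,H(\bfx)) \in R$, i.e.\ with the $p_\circ$ of Theorem~\ref{thm:QTCBound2}; correspondingly, $\P^R = \chain^{q+1}$ and $\P^R_{\bfx'} = \{D : D(x'_j) \triangleleft x'_{j+1} \text{ for all } j = 0,\ldots,q\}$ for any $\bfx' \in \Xcal^{q+2}$.

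For the $q$-term sum in Theorem~\ref{thm:QTCBound2}, I would take $\P_s := \chain^{s+1} \cup \neg\SIZE[ks]$ for $s = 0,\ldots,q$. Then $\P_0 = \chain^1 \cup \neg\SIZE[0] = \neg{\boldsymbol\bot}$ (since $\chain^1 \subseteq \neg\SIZE[0]$) and $\P_q \supseteq \P^R$, so Lemma~\ref{lem:tech} yields $\qQTC{{\boldsymbol\bot}}{\P^R}{k}{q} \leq \sum_{s=1}^q \QTC{\neg\P_{s-1}}{\P_s}{k}$. In each summand the $\neg\SIZE[ks]$-part of the target is inactive: whenever $(\SIZE[k(s-1)] \setminus \chain^s)|_{D|^\bfx}$ is nonzero, $D$ has at most $k(s-1)$ non-$\bot$ entries outside of $\bfx$, so every $D[\bfx \!\mapsto\! \bfu]$ has size at most $ks$. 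Therefore $\QTC{\neg\P_{s-1}}{\P_s}{k} = \QTC{\SIZE[k(s-1)]\setminus\chain^s}{\chain^{s+1}}{k}$, and the weak recognition by the $1$-local properties $\L_i^{\bfx,D}$ of~\eqref{eq:LocalPropForChain} (certified by Lemma~\ref{lem:classicalCHAINproperty}), combined with Theorem~\ref{thm:tricky}, bounds each summand by $ek\sqrt{10ksT/M} \leq ek\sqrt{10kqT/M}$, exactly as in Example~\ref{ex:chain_bound}. Summing gives $qek\sqrt{10qkT/M}$, the first term of the claim.

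Next I would bound the extra capacity $\max_{\bfx'} \QTC{\neg\P^R_{\bfx'}}{\P^R_{\bfx'}}{q+2}$ appearing in Theorem~\ref{thm:QTCBound2}. Fix $\bfx' \in \Xcal^{q+2}$. For arbitrary $\bfx \in \Xcal^{q+2}$ and $D \in \DB$, take on register $x_i$ the $1$-local property $\L_i^{\bfx,D} := \{y \in \Ycal : \exists\, j \text{ with } y \triangleleft x'_j\}$, which has size at most $(q+2)T$. If $D[\bfx \!\mapsto\! \bfr] \notin \P^R_{\bfx'}$ but $D[\bfx \!\mapsto\! \bfu] \in \P^R_{\bfx'}$, some chain-link $D(x'_j) \triangleleft x'_{j+1}$ that fails under $\bfr$ must succeed under $\bfu$; this forces $x'_j = x_i$ for some $i$ with $r_i \neq u_i$ and $u_i \triangleleft x'_{j+1}$, so $u_i \in \L_i^{\bfx,D}$, which is~\eqref{eq:NonunifRec}. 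Theorem~\ref{thm:tricky} applied with order $q+2$ then gives $\QTC{\neg\P^R_{\bfx'}}{\P^R_{\bfx'}}{q+2} \leq e(q+2)\sqrt{10(q+2)T/M}$, independently of $\bfx'$, the second term of the claim.

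Plugging both bounds together with $\sqrt{\ell/M} = \sqrt{(q+2)/M}$ into Theorem~\ref{thm:QTCBound2} and squaring yields the stated inequality; the dominant first term gives the asymptotic $O(q^3 k^3 T/M)$. I expect the main subtlety to be the clean verification of weak recognition for the second capacity uniformly in $\bfx$ and $D$, and in particular confirming that the supports of the $\L_i^{\bfx,D}$ are contained in $\{x_1,\ldots,x_{q+2}\}$ so that Theorem~\ref{thm:tricky} applies with order $q+2$; everything else is mechanical bookkeeping driven by Example~\ref{ex:chain_bound}.
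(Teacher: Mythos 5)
Your proposal is correct and follows essentially the same route as the paper: Theorem~\ref{thm:QTCBound2} with $\ell = q+2$ and $\P^R = \chain^{q+1}$, the per-round capacities $\QTC{\SIZE[k(s-1)]\backslash\chain^s}{\chain^{s+1}}{k} \leq ek\sqrt{10kqT/M}$ via Lemma~\ref{lem:classicalCHAINproperty} and Theorem~\ref{thm:tricky}, and the extra capacity $\QTC{\neg\chain^{q+1}_{\bfx'}}{\chain^{q+1}_{\bfx'}}{q+2} \leq e(q+2)\sqrt{10(q+2)T/M}$ via the same weak-recognition argument with $\L_i^{\bfx,D} = \{y \,|\, \exists j: y \triangleleft x'_j\}$. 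The bookkeeping with $\P_s = \chain^{s+1} \cup \neg\SIZE[ks]$ matches Remark~\ref{rem:size_condition}, so there is nothing to add.
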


\begin{proof}[\ifnum\submission=0 Proof (of Theorem~\ref{thm:tricky}). \else of Theorem~\ref{thm:tricky}\fi]
We consider fixed choices of $\bfx$ and $D$, and we then write $\L_i$ for $\L^{\bfx,D}_i$. For arbitrary but fixed $\hat\bfy$, we introduce
$$
A_i := \!\!\sum_{u_i,r_i \text{ s.t.} \atop u_i \in \L_i \wedge r_i \neq u_i}\!\!  \proj{u_i} \,\CO_{x_i \hat y_i} \proj{r_i}
\ifnum\submission=1
$$
and
$$
\else
\qquad\text{and}\qquad
\fi
B_i := \CO_{x_i \hat y_i} - A_i = \!\!\sum_{u_i,r_i \text{ s.t.} \atop u_i \not\in \L_i \vee r_i = u_i}\!\! \proj{u_i} \,\CO_{x_i y_i} \proj{r_i}
$$
and observe that, taking it as understood that the operators $\CO_{x_1 \hat y_1} ,\ldots,\CO_{x_k \hat y_k}$ act on different subsystems,%
\footnote{I.e., strictly speaking, we have $\CO_{\bfx \hat\bfy} = \bigotimes_{j=1}^k \CO_{x_j \hat y_j}$. }
\begin{align*}
\CO_{\bfx \hat\bfy} = \prod_{j=1}^k \CO_{x_j \hat y_j} 
&= \prod_{j=1}^{k-1} \CO_{x_j \hat y_j} A_k + \prod_{j=1}^{k-1} \CO_{x_j \hat y_j} B_k \\
&= \prod_{j=1}^{k-1} \CO_{x_j \hat y_j} A_k + \prod_{j=1}^{k-2} \CO_{x_j \hat y_j} A_{k-1} B_k + \prod_{j=1}^{k-2} \CO_{x_j \hat y_j} B_{k-1} B_k \\
&= \cdots
= \sum_{i=0}^{k} \bigg(\prod_{j < k-i}\!\! \CO_{x_j \hat y_j}\bigg) A_{k-i} \bigg(\prod_{j > k-i}\!\! B_j \bigg) 
\end{align*}
with the convention that $A_0 = \id$. Furthermore, by assumption on the $\L_i$'s, it follows that
$$
Q:= \P'|_{D|^\bfx} \bigg(\prod_{j > 0} B_j\bigg) \P|_{D|^\bfx} = 0 \, .
$$
Indeed, by definition of $\P'|_{D|^\bfx}$ and $\P|_{D|^\bfx}$ (considering them as subsets of $\bar\Ycal^k$ now), for $\bra{\bfu} Q \ket{\bfr}$ {\em not} to vanish, it is necessary that $\bfr \in \P|_{D|^\bfx}$ and $\bfu \in \P'|_{D|^\bfx}$. But then, by assumption, for such $\bfr$ and $\bfu$ there exists $i$ so that $u_i \in \L_i$ and $r_i \neq u_i$, and thus for which $\bra{u_i}B_i\ket{r_i} = 0$. Therefore, $\bra{\bfu} Q \ket{\bfr} = \bra{\bfu} \prod_j B_j \ket{\bfr} = \prod_j \bra{u_j}B_j\ket{r_j}$ still vanishes. 
As a consequence, we obtain 
\begin{align*}
\|\P'|_{D|^\bfx} \,\CO_{\bfx \hat\bfy} \, \P|_{D|^\bfx}\| &\leq \Bigg\| \,\sum_{i=0}^{k-1} \bigg(\prod_{j < k-i}\!\! \CO_{x_j \hat y_j}\bigg) A_{k-i} \bigg(\prod_{j > k-i}\!\! B_j \bigg) \Bigg\| 
\leq \sum_{i=0}^{k-1} \Big(\| A_{k-i}\| \!\prod_{j > k-i}\!\! \|B_j\|\Big) \, .
\end{align*}
Using that $\| B_i\| = \|\CO_{x_i \hat y_i} - A_i\| \leq 1+\| A_i\|$, this is bounded by  
$$
\leq \sum_{i=1}^k \| A_i\| \prod_{j=1}^k(1+\| A_j\|)
\leq \sum_i \| A_i\| \, e^{\sum_j \ln(1+\| A_j\|)}
\leq \sum_i \| A_i\| \, e^{\sum_j\|A_j\|} 
\leq \sum_i \| A_i\| \, e
$$
where the last inequality holds if $\sum_j\|A_j\| \leq 1$, while the final term is trivially an upper bound on the figure of merit otherwise. 
Using the fact that the operator norm is upper bounded by the Frobenius norm, we observe that 
\begin{align*}
\| A_i\|^2 \leq \sum_{r_i,u_i} &|\bra{u_i} A_i \ket{r_i}|^2 = \!\!\sum_{u_i,r_i \text{ s.t.} \atop u_i \in L_i \wedge r_i \neq u_i}\!\!\!|\bra{u_i} \CO_{x_i y_i} \ket{r_i}|^2 
= \sum_{r_i} \tilde P[r_i \!\neq\! U \!\in\! L_i| r_i, y_i] \leq 10 P[U \!\in\! \L_i] \, ,
\end{align*}
where the last inequality is due to (\ref{eq:connection}), here with the additional observation that if $\bot \in \L_i$ (and so (\ref{eq:connection}) does not apply) then, by condition~\ref{it:monotonicity} of Definition~\ref{def:Locality}, $\L_i = \bar\Ycal$, and hence the bound holds trivially. 
\ifnum\submission=1 \qed \fi
\end{proof}


\subsubsection{General $\ell$-Locality and Collision Finding}\label{sec:GeneralLocality}

We now remove the limitation on the locality being $\ell = 1$. The bound then becomes a bit more intricate, and we only have a version for {\em strong} recognizability. 

\begin{theorem}\label{thm:simple-general}
Let $\P \to \P'$ be a database transition that is $k$-non-uniformly strongly recognizable by $\ell$-local properties $\L_t$, where we leave the dependency of $\L_t= \L_t^{\bfx,D}$ on $\bfx$ and $D$ implicit. 
Then
$$
\QTC{\P}{\P'}{k} 
\leq \max_{\bfx,D} \, e \ell \sqrt{10 \sum_t \: \max_{x \in \supp(\L_t)} \: \max_{D' \in D|^{\supp(\L_t)}} P\bigl[U \!\in\! \L_t|_{D'|^x}\bigr]} \, .
$$
with the convention that $P\bigl[U \!\in\! \L_t|_{D'|^x}\bigr]$ vanishes if $\L_t|_{D'|^x}$ is trivial. 
\end{theorem}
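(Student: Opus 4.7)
The plan is to combine the strong-recognizability reduction from Theorem~\ref{thm:simple} with a telescoping argument over the support coordinates of each local property, reducing the $\ell$-local problem to $\ell$ instances of the $1$-local bound established via~\eqref{eq:connection}.

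Fix arbitrary $\bfx,\hat\bfy,D$ and write $\L_t = \L_t^{\bfx,D}$ with $\supp(\L_t) = \{s_1,\ldots,s_{S_t}\} \subseteq \{x_1,\ldots,x_k\}$ and $S_t \leq \ell$. Orthogonalizing via ${\sf M}_t := \L_t \setminus \bigcup_{s<t}\L_s$, and using that each $\L_t$ is disjoint from $\P|_{D|^\bfx}$ (hence $\P|_{D|^\bfx} \leq \id - \L_t$), the very same manipulation as in Theorem~\ref{thm:simple}'s proof yields
$$
\|\P'|_{D|^\bfx}\, \CO_{\bfx\hat\bfy}\, \P|_{D|^\bfx}\|^2 \;\leq\; \sum_t \|\L_t\, \CO_{\bfx\hat\bfy}\, (\id-\L_t)\|^2.
$$
It thus remains to bound each $\|\L_t\,\CO_{\bfx\hat\bfy}\,(\id-\L_t)\|$. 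Since $\L_t$ depends only on the coordinates in $\supp(\L_t)$ and $\CO_{\bfx\hat\bfy}$ factors as a tensor product over coordinates, this norm equals $\|\L\,\CO_{S_t}\cdots\CO_1\,(\id-\L)\|$ on the subsystem spanned by $\supp(\L_t)$, where $\CO_i := \CO_{s_i,\hat y_{s_i}}$ and $\L$ is viewed as a subset of $\bar\Ycal^{S_t}$.

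The key step is a telescoping expansion. Inserting $\id = \L + (\id-\L)$ between consecutive $\CO_i$'s, starting from the right, one obtains
$$
\L\,\CO_{S_t}\cdots\CO_1\,(\id-\L) \;=\; \sum_{i=1}^{S_t} T_i, \qquad T_i := \L\,\CO_{S_t}\cdots\CO_{i+1}\,\L\,\CO_i\,(\id-\L)\,\CO_{i-1}\,(\id-\L)\cdots\CO_1\,(\id-\L),
$$
so each $T_i$ contains a single ``boundary'' factor $\L\,\CO_i\,(\id-\L)$; all other factors are unitaries sandwiched by same-type projections (hence have operator norm at most $1$), giving $\|T_i\| \leq \|\L\,\CO_i\,(\id-\L)\|$. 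Now writing $\L = \sum_{\bfv_{\bar i}} \proj{\bfv_{\bar i}} \otimes \L_t|_{D'_{\bfv_{\bar i}}|^{s_i}}$, where $D'_{\bfv_{\bar i}}$ agrees with $D$ outside $\supp(\L_t)$ and with $\bfv_{\bar i}$ on $\supp(\L_t)\setminus\{s_i\}$, and using that $\CO_i$ acts only on $s_i$, the operator $\L\,\CO_i\,(\id-\L)$ becomes block-diagonal in $\bfv_{\bar i}$; its norm therefore equals $\max_{D'}\|\L_t|_{D'|^{s_i}}\,\CO_i\,(\id-\L_t|_{D'|^{s_i}})\|$, which by~\eqref{eq:connection} (i.e., the $1$-local computation from Theorem~\ref{thm:simple}) is at most $\sqrt{10\,\max_{D'}P[U\in\L_t|_{D'|^{s_i}}]}$.

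Combining the $\|T_i\|$ bounds via the triangle inequality and Cauchy--Schwarz (observing $(\sum_i\sqrt{p_i})^2 \leq S_t\sum_i p_i \leq \ell^2\max_i p_i$), then summing over $t$ and maximizing over $\bfx,D$, yields the claimed bound (in fact, this approach gives the slightly tighter form $\ell\sqrt{10\sum_t \max_{x,D'}P[U\in\L_t|_{D'|^x}]}$, without the factor $e$; that factor would arise only if one instead mimicked the non-commutative product-expansion approach of Theorem~\ref{thm:tricky}). The main conceptual obstacle is that $\L_t$ couples its $\ell$ support coordinates, preventing a direct appeal to the $1$-local machinery; the telescoping argument resolves this by isolating a single ``offending'' coordinate $s_i$ per term $T_i$ and reducing to its $1$-local slice $\L_t|_{D'|^{s_i}}$, where~\eqref{eq:connection} already provides the desired per-coordinate bound.
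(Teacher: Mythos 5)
Your proposal is correct, and its first half (orthogonalizing via ${\sf M}_t := \L_t \setminus \bigcup_{s<t}\L_s$ and reducing to $\sum_t \|\L_t\,\CO_{\bfx_t\hat\bfy_t}\,(\id-\L_t)\|^2$) coincides with the paper's. The second half genuinely differs. The paper controls $\|\L_t\,\CO_{\bfx'\hat\bfy'}\,(\id-\L_t)\|$ by recycling the $A_i/B_i$ product expansion from Theorem~\ref{thm:tricky}, with $A_i = \L\,\CO_{x_i\hat y_i}(\id-\L)$ and $B_i = \CO_{x_i\hat y_i}-A_i$; since $\|B_i\|\le 1+\|A_i\|$, this incurs the multiplicative factor $\prod_j(1+\|A_j\|)\le e$. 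Your telescoping instead exploits that the expression already carries the projection $(\id-\L)$ on its right, so each residual factor can be taken of the form $(\id-\L)\,\CO_j\,(\id-\L)$, which is an exact contraction; this is why your bound comes out without the factor $e$, and since $e\ge 1$ it implies the stated inequality. Your diagnosis of where the $e$ comes from is accurate: the $A_i/B_i$ splitting is forced in the weak-recognizability setting of Theorem~\ref{thm:tricky}, where the right-hand projection is $\P|_{D|^\bfx}$ rather than $\id-\L_i$ and only matrix entries indexed by specific pairs $(\bfr,\bfu)$ are known to vanish; under strong recognizability your cleaner splitting is available and is strictly sharper. The remaining steps (block-diagonality of $\L\,\CO_i\,(\id-\L)$ over the other support coordinates, the per-slice bound via (\ref{eq:connection}), and the Cauchy--Schwarz step $(\sum_i\sqrt{p_i})^2\le \ell^2\max_i p_i$) match the paper's; just note, as the paper does, that when $\bot\in\L_t|_{D'|^{x}}$ condition 2 of Definition~\ref{def:Locality} forces $\L_t|_{D'|^{x}}=\bar\Ycal$, so the corresponding Frobenius sum vanishes and (\ref{eq:connection}) is not needed there.
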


In case of {\em uniform} recognizability, where there is no dependency of $\L_t$ on $\bfx$ and $D$, the quantification over first $D$ and then over $D' \in D|^{\supp(\L_t)}$ collapses to a single quantification over $D$, simplifying the statement again a bit. 

\begin{corollary}\label{cor:simple-general}
Let $\P \to \P'$ be a database transition that is uniformly strongly recognizable by $\ell$-local properties $\L_t$. 
Then
$$
\QTC{\P}{\P'}{k} \leq \max_{\bfx,D} e \ell \sqrt{10 \sum_t \max_{x \in \supp(\L_t)} P\bigl[U \!\in\! \L_t|_{D|^{x}}\bigr]} \, .
$$
with the convention that $P\bigl[U \!\in\! \L_t|_{D|^{x}}\bigr]$ vanishes if $\L_t|_{D|^{x}}$ is trivial. 
\end{corollary}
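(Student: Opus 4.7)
My approach mirrors the structure of Theorem~\ref{thm:simple}, augmented by an inner induction that handles the $\ell$-locality of each $\L_t$. Reading the hypothesis as providing the sandwich $\P'|_{D|^\bfx} \subseteq \bigcup_t \L_t \subseteq \neg\P|_{D|^\bfx}$, I would first carry over the orthogonality trick from Theorem~\ref{thm:simple}: defining ${\sf M}_t := \L_t \setminus \bigcup_{s<t}\L_s$ yields mutually orthogonal projectors with $\P'|_{D|^\bfx} \leq \sum_t {\sf M}_t$ and ${\sf M}_t \leq \L_t \leq \neg\P|_{D|^\bfx}$. Using $\L_t \cdot \P|_{D|^\bfx} = 0$ to replace $\P|_{D|^\bfx}$ by $(\id - \L_t)$ under the projector $\L_t$, this gives
$$
\|\P'|_{D|^\bfx}\, \CO_{\bfx\hat\bfy}\, \P|_{D|^\bfx}\|^2 \;\leq\; \sum_t \|\L_t\, \CO_{\bfx\hat\bfy}\,(\id - \L_t)\|^2 \, .
$$
Because $\L_t$ has support $X_t \subseteq \{x_1,\ldots,x_k\}$ with $|X_t| \leq \ell$ and $\CO_{\bfx\hat\bfy}$ factors across registers, the factors outside $X_t$ are unitary and commute with $\L_t$, so $\|\L_t \CO_{\bfx\hat\bfy}(\id - \L_t)\| = \|\L_t \CO_{X_t \hat\bfy_t}(\id - \L_t)\|$, where $\CO_{X_t \hat\bfy_t} := \prod_{x_j \in X_t}\CO_{x_j \hat y_j}$ is a product of pairwise-commuting single-register unitaries.

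The core step is to bound $\|\L_t \CO_{X_t \hat\bfy_t}(\id - \L_t)\|$ by $|X_t| \sqrt{10 \max_{x \in X_t,\, D' \in D|^{X_t}} P[U \in \L_t|_{D'|^x}]}$. I would prove this by induction on the number $m$ of single-register factors applied, based on the identity
$$
\L_t \CO_1 \CO_2 \cdots \CO_m(\id - \L_t) \;=\; \L_t \CO_1 \L_t \cdot \CO_2 \cdots \CO_m(\id - \L_t) \;+\; \L_t \CO_1(\id - \L_t)\cdot \CO_2 \cdots \CO_m(\id - \L_t)
$$
obtained by inserting $\id = \L_t + (\id - \L_t)$ after $\CO_1$. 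The triangle inequality and submultiplicativity give the recursion $\|\L_t \CO_1 \cdots \CO_m(\id - \L_t)\| \leq \|\L_t \CO_2 \cdots \CO_m(\id - \L_t)\| + \|\L_t \CO_1(\id - \L_t)\|$, where the first summand is handled by induction and the second is a single-register transition. For the latter, the key observation is that $\L_t$ admits the diagonal decomposition
$$
\L_t \;=\; \sum_{\bfc \in \bar\Ycal^{X_t \setminus \{x_1\}}} \big(\L_t|_{D'(\bfc)|^{x_1}}\big)_{x_1} \otimes \proj{\bfc}_{X_t \setminus \{x_1\}}
$$
where $D'(\bfc) \in D|^{X_t}$ agrees with $D$ outside $X_t$ and takes values $\bfc$ on $X_t \setminus \{x_1\}$. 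Since $\CO_{x_1 \hat y_1}$ acts only on register $x_1$ and commutes with each $\proj{\bfc}$, the operator $\L_t \CO_{x_1 \hat y_1}(\id - \L_t)$ is block-diagonal in $\bfc$, so its norm equals $\max_\bfc \|(\L_t|_{D'(\bfc)|^{x_1}})\CO_{x_1 \hat y_1}(\id - \L_t|_{D'(\bfc)|^{x_1}})\|$; inequality~(\ref{eq:connection}) then bounds each such $1$-local transition by $\sqrt{10\, P[U \in \L_t|_{D'(\bfc)|^{x_1}}]}$, which is at most $\sqrt{10 \max_{D'} P[U \in \L_t|_{D'|^{x_1}}]}$.

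Combining the outer decomposition with the per-$\L_t$ bound (and using $|X_t| \leq \ell$) yields $\QTC{\P}{\P'}{k} \leq \max_{\bfx,D} \ell \sqrt{10 \sum_t \max_{x,D'} P[U \in \L_t|_{D'|^x}]}$, which is already at most the stated $e\ell$ bound. The main obstacle I anticipate lies in the per-$\L_t$ step: verifying the diagonal decomposition of $\L_t$ in the context basis on $X_t \setminus \{x_1\}$, and confirming that the max over $D' \in D|^{X_t}$ in the final bound genuinely absorbs every intermediate context $\bfc$ arising through the induction. Both amount to careful bookkeeping of how an $\ell$-local property restricts to one coordinate with the other $\ell-1$ coordinates fixed, but once that is checked the rest is a routine assembly of the ingredients.
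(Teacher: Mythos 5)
Your proposal is correct, and its outer structure coincides with the paper's: the orthogonalization ${\sf M}_t := \L_t \setminus \bigcup_{s<t}\L_s$ reducing everything to $\sum_t \|\L_t\,\CO_{\bfx_t\hat\bfy_t}(\id-\L_t)\|^2$, the restriction to the registers in $\supp(\L_t)$, and the block-diagonal reduction of the single-register step to $\L_t|_{D'|^{x}}$ followed by (\ref{eq:connection}) are all exactly what the paper does. Where you genuinely diverge is the inner telescoping over the $\ell$ commuting factors of $\CO_{\bfx_t\hat\bfy_t}$: the paper reuses the $A_i/B_i$ machinery from Theorem~\ref{thm:tricky} (with $A_i = \L_t\,\CO_{x_i\hat y_i}(\id-\L_t)$ and $B_i = \CO_{x_i\hat y_i}-A_i$), shows $\L_t\bigl(\prod_j B_j\bigr)(\id-\L_t)=0$, and pays the factor $\prod_j(1+\|A_j\|)\leq e$; you instead insert $\id = \L_t + (\id-\L_t)$ between consecutive factors and use $\|\L_t\CO_1\|\leq 1$ and $\|\CO_2\cdots\CO_m(\id-\L_t)\|\leq 1$ to get the clean recursion $\|\L_t\CO_1\cdots\CO_m(\id-\L_t)\|\leq\sum_i\|\L_t\CO_i(\id-\L_t)\|$. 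This hybrid insertion is available precisely because here, unlike in Theorem~\ref{thm:tricky}, every summand involves the \emph{same} projector $\L_t$, so your argument is both more elementary and sharper — it drops the factor $e$ from the bound — and the remaining bookkeeping (that every context $\bfc$ arising in the induction is absorbed by the maximum over $D'\in D|^{\supp(\L_t)}$, which in the uniform case of the corollary collapses to the outer maximum over $D$) goes through exactly as you anticipate.
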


\begin{example}\label{ex:collision_bound}
Consider $\col = \{D \,|\, \exists\, x,x':D(x) = D(x') \neq \bot\}$. For any $D \in \DB$ and $\bfx = (x_1,\ldots,x_k)$, consider the family of $2$-local properties consisting of 
\begin{align*}
&\col_{i,j} := \{D_\circ \in D|^\bfx \,|\, D_\circ(x_i) = D_\circ(x_j)\neq\bot \} \qquad\text{and} \\
&\col_i := \{D_\circ \in D|^\bfx \,|\,\exists \,\bar x \not\in \{x_1,\ldots,x_k\}:D_\circ(x_i) = D(\bar x) \neq \bot \}
\end{align*}
for $i \neq j \in \{1,\ldots,k\}$, with respective supports $\{x_i,x_j\}$ and $\{x_i\}$. 

It is easy to see that this family of $2$-local properties satisfies (\ref{eq:NonunifStrongRec}) for the database transition $\neg\col \to \col$. Indeed, if $D$ and $D'$ are identical outside of $\bfx$, and $D$ has no collision while $D'$ has one, then $D'$'s collision must be for $x_i,x_j$ inside $\bfx$, or for one $x_i$ inside and one $\bar x$ outside. 
As an immediate consequence, the family also satisfies (\ref{eq:NonunifStrongRec}) for the database transition $(\SIZE[ks]\setminus\col) \to \col$. In this case though, whenever $D \not\in \SIZE[k(s+1)]$ the left hand side of (\ref{eq:NonunifStrongRec}) is never satisfied and so we may replace the family of local properties to consist of (only) the constant-false property. 

Consider $\bfx = (x_1,\ldots,x_k)$ and $D\in \SIZE[k(s+1)]$ with $s \leq q-1$. Then, for $i\neq j$, as subsets of $\bar\Ycal$ we have that
$$
\col_{i,j}|_{D'|^{x_i}} = \{D'(x_j)\}
\quad\text{and}\quad
\col_{i}|_{D'|^{x_i}} = \{D'(\bar x) \,|\, \bar x \not\in \{x_1,\ldots,x_k\}: D'(\bar x) \neq \bot\}
$$
for any $D' \in D|^{(x_i,x_j)}$ and $D' \in D|^{x_i}$, respectively, 
and therefore
$$
P\bigl[U \!\in\! \col_{i,j}|_{D'|^{x_i}}\bigr] = \frac{1}{M}
\qquad\text{and}\qquad
P\bigl[U \!\in\! \col_{i}|_{D'|^{x_i}}\bigr] \leq \frac{k(q-1)}{M} \, .
$$
So, by Theorem~\ref{thm:simple-general}, 
$$
\QTC{\SIZE[ks]\backslash\col}{\col}{k} 
\leq 2e \sqrt{10 \bigg(\frac{k^2}{M}+\frac{k^2 (q-1)}{M}\bigg)} \, .
= 2e k \sqrt{10 \, \frac{q}{M}}
$$
Hence, by Theorem~\ref{thm:QTCBound}, increasing $q$ by $1$ to allow for the additionally query for $\cal A$ to learn (what he claims to be) $H(x_1) = H(x_2)$, we obtain the following bound.
\end{example}

\begin{theorem}\label{thm:ColBound}
The probability $p$ of any $k$-parallel $q$-query algorithm outputting a collision is bounded by 
$$
p \leq \biggl(2 (q+1) e k \sqrt{10 \, \frac{q+1}{M}} + \sqrt{\frac{2}{M}}\,\biggr)^2 = O\biggl(\frac{k^2 q^3 }{M}\biggr) \, . 
$$
\end{theorem}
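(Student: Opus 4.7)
The plan is to assemble the bound by applying Theorem~\ref{thm:QTCBound} to the transition-capacity estimate already established in Example~\ref{ex:collision_bound}. Most of the substantive work has been done there; what remains is a telescoping argument plus a small reduction so that the output relation is of the form $(\bfx,\bfy)$ with $\bfy = H(\bfx)$.

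First, I would reduce to an algorithm that outputs not only the alleged colliding pair $(x_1,x_2)$ but also corresponding hash values $\bfy = (H(x_1), H(x_2))$. This can be done by appending one additional $k$-parallel query round in which $\cal A$ queries $(x_1,x_2)$ (using just two of the $k$ slots). The resulting algorithm makes $q+1$ parallel rounds, succeeds with the same probability $p$, and outputs $(\bfx,\bfy)$ satisfying $\bfy = H(\bfx)$ together with the relation $R = \{((x_1,x_2),(y_1,y_2)) : x_1 \neq x_2 \wedge y_1 = y_2\}$. Note that $\P^R = \col$ and $\ell = 2$ for this $R$.

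Next, following Remark~\ref{rem:size_condition}, I would choose the sequence of database properties $\P_0 = \neg\boldsymbol\bot$ and $\P_s = \col \cup \neg\SIZE[ks]$ for $s = 1,\ldots,q+1$, so that $\neg\P_{s-1} = \SIZE[k(s-1)] \setminus \col$. Since $\boldsymbol\bot \notin \col$ the condition $\P_0 = \neg\boldsymbol\bot$ holds, and $\P_{q+1} \supseteq \col = \P^R$, which suffices for Theorem~\ref{thm:QTCBound} (the right-hand side is monotone in $\P_q$ by basic properties of the operator norm applied to positive projections). This gives
$$
\sqrt{p} \;\leq\; \sum_{s=1}^{q+1} \QTC{\SIZE[k(s-1)]\setminus\col}{\,\col \cup \neg\SIZE[ks]\,}{k} \,+\, \sqrt{\tfrac{2}{M}} \, .
$$

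Finally, each of the $q+1$ transition capacities on the right-hand side is bounded by the analysis in Example~\ref{ex:collision_bound}, applied with query count $q+1$ in place of $q$ (so that the maximum pre-query database size over all transitions is $kq = k((q{+}1){-}1)$). This yields the uniform bound $2ek\sqrt{10(q+1)/M}$ per term; summing $q+1$ identical terms and squaring produces exactly the claimed expression. There is no real obstacle here: the hard work (2-local strong recognizability of $\neg\col \to \col$ via the family $\{\col_{i,j},\col_i\}$ and the invocation of Theorem~\ref{thm:simple-general}) is already contained in Example~\ref{ex:collision_bound}.
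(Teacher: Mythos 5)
Your proposal is correct and follows essentially the same route as the paper: Example~\ref{ex:collision_bound} supplies the per-round capacity bound $2ek\sqrt{10\,q/M}$ via $2$-local strong recognizability and Theorem~\ref{thm:simple-general}, and the theorem is obtained by adding one query round so that $\cal A$ outputs $\bfy=H(\bfx)$ (hence $q\mapsto q+1$, $\ell=2$), telescoping over the size-tracked properties of Remark~\ref{rem:size_condition}, and applying Theorem~\ref{thm:QTCBound}. No gaps.
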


The above easily generalizes to a more general notion of collision, where the goal is to find $x$ and $x'$ for which $f\bigl(x,H(x)\bigr) = f\bigl(x',H(x')\bigr)$ for a given function $f: \Xcal \times \Ycal \to {\cal Z}$. Here, writing $f\!D(x)$ as a shorthand of $f\bigl(x,D(x)\bigr)$ with $f\!D(x) = \bot$ if $D(x) = \bot$, one would then consider 
\begin{align*}
&\col_{i,j} := \{D_\circ \in D|^\bfx \,\, f\!D_\circ(x_i) = f\!D_\circ(x_j)\neq\bot \} \qquad\text{and} \\
&\col_i := \{D_\circ \in D|^\bfx \,|\,\exists \,\bar x \not\in \{x_1,\ldots,x_k\}: f\!D_\circ(x_i) = f\!D(\bar x) \neq \bot \}
\end{align*}
where then, as subsets of $\bar \Ycal$, 
\begin{align*}
\col_{i,j}|_{D'|^{x_i}} &= \{ y_i \in \Ycal \,|\, f(x_i,y_i) = f\!D'(x_j)  \} \qquad\text{and}\qquad 
\col_{i}|_{D'|^{x_i}} = \!\!\!\bigcup_{\bar x \not\in \{x_1,\ldots,x_k\}}\!\!\!\! \{ y_i \in \Ycal \,|\, f(x_i,y_i) = f\!D'(\bar x)  \}
\end{align*}
for any $D' \in D|^{(x_i,x_j)}$ and $D' \in D|^{x_i}$, respectively, 
and therefore
$$
P\bigl[U \!\in\! \col_{i,j}|_{D'|^{x_i}}\bigr] = \frac{\Gamma}{M}
\qquad\text{and}\qquad
P\bigl[U \!\in\! \col_{i}|_{D'|^{x_i}}\bigr] \leq \frac{k(q-1)\Gamma}{M} 
$$
with $\Gamma := \max_{x\neq x',y'} |\{ y\in \Ycal \,|\, f(x,y)=f(x',y')\}|$. So, by Theorem~\ref{thm:simple-general}, with the obvious generalization of $\col$, 
$$
\QTC{\SIZE[ks]\backslash\col}{\col}{k} 
\leq 2e \sqrt{10 \bigg(\frac{k^2 \Gamma}{M}+\frac{k^2 (q-1) \Gamma} {M}\bigg)}
= 2e k \sqrt{10 \Gamma \, \frac{q}{M}}  
$$
and so we obtain the following generalization of the collision finding bound. 

\begin{theorem}\label{thm:GenColBound}
For any function $f: \Xcal \times \Ycal \to {\cal Z}$, the probability $p$ of any $k$-parallel $q$-query algorithm outputting $x,x' \in \Xcal$ with  $f\bigl(x,H(x)\bigr) = f\bigl(x',H(x')\bigr)$ is bounded by 
$$
p \leq \biggl(2 (q+1) e k \sqrt{10 \Gamma \, \frac{q+1}{M}} + \frac{2}{\sqrt{M}}\biggr)^2 = O\biggl(\frac{k^2 q^3 \Gamma}{M}\biggr) 
$$
for $\Gamma := \max_{x\neq x',y'} |\{ y\in \Ycal \,|\, f(x,y)=f(x',y')\}|$. 
\end{theorem}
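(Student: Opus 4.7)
The plan is to mirror the proof of Theorem~\ref{thm:ColBound} with the generalized collision property, since almost all of the work is already carried out in the paragraphs preceding the theorem statement; what remains is to package it via Theorem~\ref{thm:QTCBound}. First I would define the database property $\col := \{D \in \DB \,|\, \exists\, x\neq x' : f\!D(x) = f\!D(x') \neq \bot\}$ and note that, for any output $x,x'$ of the algorithm, if $f(x,H(x)) = f(x',H(x'))$ then certainly $H \in \col$. To apply Theorem~\ref{thm:QTCBound} cleanly, I would allow the algorithm to make one extra $k$-parallel query to learn $H(x)$ and $H(x')$ and output the pair $(x,y), (x',y')$, bumping $q$ up by $1$ and introducing the $\sqrt{2/M}$ term coming from the $\ell=2$ factor in Corollary~\ref{cor:zha}.

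Second, I would choose the propagation sequence $\P_s := \col \cup \neg \SIZE[ks]$ (cf.\ Remark~\ref{rem:size_condition}), so that the quantum transition capacity to be controlled at step $s$ is $\QTC{\SIZE[k(s-1)] \backslash \col}{\col}{k}$. For this, I would verify that the families $\{\col_{i,j}\}_{i\neq j}$ and $\{\col_i\}_i$ defined in the text (generalized with $f$) yield a $k$-non-uniform strong recognizability of $\neg\col \to \col$ by $2$-local properties: any new $f$-collision in $D[\bfx \!\mapsto\! \bfu]$ that did not already exist must involve either two coordinates inside $\bfx$ or one inside and one outside, giving exactly these properties.

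Third, I would instantiate the bound in Theorem~\ref{thm:simple-general} using the computations already displayed: for $D \in \SIZE[k(s-1)]$ (so the support outside $\bfx$ has size at most $k(q-1)$) one gets
\[
P\bigl[U \!\in\! \col_{i,j}|_{D'|^{x_i}}\bigr] \leq \frac{\Gamma}{M}, \qquad P\bigl[U \!\in\! \col_{i}|_{D'|^{x_i}}\bigr] \leq \frac{k(q-1)\Gamma}{M},
\]
by the definition of $\Gamma$. Plugging these into Theorem~\ref{thm:simple-general} with $\ell=2$, and bounding the sum over the $O(k^2)$ local properties, yields
\[
\QTC{\SIZE[k(s-1)]\backslash\col}{\col}{k} \leq 2e k \sqrt{10\Gamma\, \frac{q}{M}}.
\]

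Finally, I would apply Theorem~\ref{thm:QTCBound} to the extended algorithm with $q+1$ queries and $\ell=2$: summing the per-step capacity bound over $q+1$ steps gives the leading term $2(q+1)ek\sqrt{10\Gamma(q+1)/M}$, plus the $\sqrt{2/M}$ term from the Zhandry correction, and squaring produces the claimed $O(k^2 q^3 \Gamma/M)$ bound. I do not anticipate a significant obstacle here: the main conceptual step (2-local recognizability) is identical to the $f=\mathrm{id}$ case, and the only quantitative change is replacing the constant $1$ in the bound on $P[U \in \col_{i,j}|_{D'|^{x_i}}]$ by $\Gamma$, reflecting the preimage multiplicity of $f$ in its second argument.
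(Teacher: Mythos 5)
Your proposal is correct and follows essentially the same route as the paper: the paper also generalizes the $2$-local properties $\col_{i,j}$ and $\col_i$ via $f\!D(x)=f(x,D(x))$, bounds the restricted probabilities by $\Gamma/M$ and $k(q-1)\Gamma/M$, applies Theorem~\ref{thm:simple-general} to get $\QTC{\SIZE[ks]\backslash\col}{\col}{k}\leq 2ek\sqrt{10\Gamma q/M}$, and concludes via Theorem~\ref{thm:QTCBound} after adding one query to learn $H(x),H(x')$. No gaps.
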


\begin{proof}[\ifnum\submission=0 Proof (of Theorem~\ref{thm:simple-general}) \else of Theorem~\ref{thm:simple-general}\fi]
We first observe that one can recycle the proof of Theorem~\ref{thm:simple} to bound 
$$
\|\P'|_{D,\bfx} \,\CO_{\bfx\hat\bfy} \, \P|_{D,\bfx} \|^2 
\leq \sum_t \|\L_t \,\CO_{\bfx_t \hat \bfy_t} \, (\id - \L_t)\|^2 \, ,
$$
where $\bfx_t$ is the restriction of $\bfx$ to those coordinates that are in $\supp(\L_t)$, and the same for $\hat \bfy_t$. 

We now consider an arbitrary but fixed choice of $t$ and write $\L$ for $\L_t$. We write $\{x_1,\ldots,x_\ell\}$ for its support and set $\bfx' := (x_1,\ldots,x_\ell)$. In order to control $\|\L \,\CO_{\bfx' \hat\bfy'} \, (\id - \L)\|$, we use a similar technique as in the proof of Theorem~\ref{thm:tricky}.%
\footnote{We point out that $\L(D')$ is determined by $D'(\bfx')$; thus, we may consider $\L$ as a property of functions $D' \in D|^{\bfx'} \subseteq D|^\bfx$.} 

For any $x_i \in \supp(\L)$, we set 
$$
A_i := \L \,\CO_{{x_i} \hat y_i} (\id-\L)
\qquad\text{and}\qquad
B_i := \CO_{x_i \hat y_i} - A_i \, .
$$
By means of the same generic manipulations as in the proof of Theorem~\ref{thm:tricky}, we have 
$$
\CO_{\bfx' \hat \bfy'} = \prod_{i=1}^\ell \CO_{x_i \hat y_i}
= \sum_{i=0}^{\ell} \bigg(\prod_{j < \ell-i}\!\! \CO_{x_j \hat y_j}\bigg) A_{\ell-i} \bigg(\prod_{j > \ell-i}\!\! B_j \bigg) 
$$
with the convention that $A_0 = \id$. Furthermore, using
$B_i (\id - \L_t)  = (\id - \L_t) \CO_{x_{\lambda_i} \hat y_{\lambda_i}} (\id - \L_t)$, we see that
$$
\L_t \bigg(\prod_{j > 0} B_j\bigg) (\id - \L_t) = 0 \,.
$$
As a consequence, verbatim as in the proof of Theorem~\ref{thm:tricky}, we obtain 
$$
\|\L \,\CO_{\bfx' \hat\bfy'} \, (\id - \L)\| 
\leq \sum_{i=0}^{\ell-1} \Big(\| A_{\ell-i}\| \!\prod_{j > \ell-i}\!\! \|B_j\|\Big) 
\leq \sum_{i=1}^\ell \|A_i\| \, e \, .
$$
Furthermore, for any $D' \!\in\! D|^\bfx$ and $i \in \{1,\ldots,\ell\}$, on the subspace spanned by $D'|^{x_i}$, the map $A_i$ acts identically to $\L|_{D'|^{x_i}} \,\CO_{x_i \hat y_i} \, (\id - \L|_{D'|^{x_i}})$, and thus, by basic properties of the operator norm, the norm of $A_i$ equals the largest norm of these restrictions: 
$$
\| A_i\| \leq \big\|\L|_{D'|^{x_i}} \,\CO_{x_{\lambda_i} \hat y_i} \, (\id - \L|_{D'|^{x_i}})\big\| \, .
$$
Bounding the operator norm by the Frobenius norm, we then obtain
$$
\| A_i\|^2 \leq \sum_{r \not\in L|_{D'|^{x_i}} \atop u \in L|_{D'|^{x_i}}} |\bra{u} \CO_{x_i \hat y_i} \ket{r}|^2 \leq  \sum_{r} \tilde P[r \!\neq\!U \!\in\! \L|_{D'|^{x_i}}| r, y_i] \leq 
10 P\bigl[U \!\in\! \L|_{D'|^{x_i}}\bigr]
 \, .
$$
where the last inequality is due to (\ref{eq:connection}), with the additional observation that if $\bot \in \L_i$ then, by condition \ref{it:monotonicity} of Definition~\ref{def:Locality}, $\L|_{D'|^{x_i}} = \bar\Ycal$, and thus the sum vanishes. 

Putting things together, we obtain
$$
\|\P'|_{D,\bfx} \,\CO_{\bfx\hat\bfy} \, \P|_{D,\bfx}\|
\leq \sqrt{\sum_t \bigg(\sum_{i=1}^\ell \|A_i\| \, e\bigg)^2 }
\leq e \sqrt{\sum_t 10 \ell^2 P\bigl[U \!\in\! \L|_{D'|^{x_i}}\bigr] } 
$$
which proves the claimed bound. 
\end{proof}



\subsection{Some Rules for the Quantum Transition Capacity}

As we have seen, certain ``simple" lower bounds on the query complexity (respectively upper bound on the success probability) can be obtained rather directly by bounding the quantum transition capacity by the means discussed above. In more complex scenarios, as we will encounter in the next section, it will be convenient to first {\em manipulate} the quantum transition capacity, e.g., to decompose it into different cases that can then be analyzed individually. We thus show some useful manipulation rules here.  

To start with, since $\CO_{\bfx\hat\bfy}^\dagger = \CO_{\bfx\hat\bfy^*}$, we note that the quantum transition capacity is symmetric:
$$
\QTC{\P}{\P'}{k} = \QTC{\P'}{\P}{k} \, .
$$
Therefore, the following bounds hold correspondingly also for $\QTC{\P}{\P'\cap \Q}{k}$ etc. 

\begin{lemma}\label{lem:shrink}
For any database properties $\P,\P'$ and $\Q$, 
$$
\QTC{\P\cap {\sf Q}}{\P'}{k} \, \leq \min\bigl\{\QTC{\P}{\P'}{k},\QTC{{\sf Q}}{\P'}{k}\bigr\}  \qquad\text{and}
$$
$$
\max\bigl\{\QTC{\P}{\P'}{k},\QTC{{\sf Q}}{\P'}{k}\bigr\} \leq \QTC{\P\cup {\sf Q}}{\P'}{k}\, \leq  \QTC{\P}{\P'}{k} + \QTC{{\sf Q}}{\P'}{k} \, . 
$$
\end{lemma}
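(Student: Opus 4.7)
The plan is to unpack the definition of $\QTC{\,\cdot\,}{\,\cdot\,}{k}$ and reduce every inequality to an elementary fact about norms of (products of) projections. The key observations I will use are: (i) as projections on $\CC[\DB]$, the properties $\P,\Q,\P'$ are all diagonal in the computational basis and hence pairwise commute; (ii) restriction commutes with set operations, so $(\P\cap\Q)|_{D|^\bfx}=\P|_{D|^\bfx}\,\Q|_{D|^\bfx}$ and $(\P\cup\Q)|_{D|^\bfx}=\P|_{D|^\bfx}\cup\Q|_{D|^\bfx}$; and (iii) every projection has operator norm at most $1$. With these in hand I only need to control the inner quantity $\|\P'|_{D|^\bfx}\,\CO_{\bfx\hat\bfy}\,(\cdot)\|$ pointwise in the argument, and then maximize uniformly over $(\bfx,\hat\bfy,D)$.

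For the first inequality I will use $(\P\cap\Q)|_{D|^\bfx}=\P|_{D|^\bfx}\,\Q|_{D|^\bfx}$ together with submultiplicativity and $\|\Q|_{D|^\bfx}\|\le 1$ to conclude $\|\P'|_{D|^\bfx}\CO_{\bfx\hat\bfy}\P|_{D|^\bfx}\Q|_{D|^\bfx}\|\le\|\P'|_{D|^\bfx}\CO_{\bfx\hat\bfy}\P|_{D|^\bfx}\|$, and symmetrically (using that the two projections commute, so their product is also $\Q|_{D|^\bfx}\,\P|_{D|^\bfx}$) the same bound with $\Q$ in place of $\P$; taking the minimum and then maximizing over $(\bfx,\hat\bfy,D)$ yields the first claim. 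The lower bound in the second inequality follows similarly: $\P\subseteq\P\cup\Q$ gives $(\P\cup\Q)|_{D|^\bfx}\P|_{D|^\bfx}=\P|_{D|^\bfx}$, hence $\|\P'|_{D|^\bfx}\CO_{\bfx\hat\bfy}\P|_{D|^\bfx}\|=\|\P'|_{D|^\bfx}\CO_{\bfx\hat\bfy}(\P\cup\Q)|_{D|^\bfx}\P|_{D|^\bfx}\|\le\|\P'|_{D|^\bfx}\CO_{\bfx\hat\bfy}(\P\cup\Q)|_{D|^\bfx}\|$, and the same with $\Q$ in place of $\P$; taking the max gives the desired inequality.

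The only mildly non-trivial step is the upper bound $\QTC{\P\cup\Q}{\P'}{k}\le\QTC{\P}{\P'}{k}+\QTC{\Q}{\P'}{k}$, which I will handle by a decomposition argument. For a unit vector $\ket{v}$ in the range of $(\P\cup\Q)|_{D|^\bfx}$, I will write $\ket{v}=\P|_{D|^\bfx}\ket{v}+(\id-\P|_{D|^\bfx})\ket{v}$; expanding $\ket{v}$ in the computational basis, the second term is supported on basis vectors $\ket{D'}$ with $D'\in (\Q\setminus\P)|_{D|^\bfx}\subseteq\Q|_{D|^\bfx}$, so it is fixed by $\Q|_{D|^\bfx}$. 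The triangle inequality then gives $\|\P'|_{D|^\bfx}\CO_{\bfx\hat\bfy}\ket{v}\|\le\|\P'|_{D|^\bfx}\CO_{\bfx\hat\bfy}\P|_{D|^\bfx}\|+\|\P'|_{D|^\bfx}\CO_{\bfx\hat\bfy}\Q|_{D|^\bfx}\|$, and maximizing uniformly over $(\bfx,\hat\bfy,D)$ finishes the proof. I expect this decomposition to be the only place where any genuine thought is needed; everything else amounts to bookkeeping directly from the definition.
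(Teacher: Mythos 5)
Your proposal is correct and follows essentially the same route as the paper: reduce to the inner quantity for fixed $(\bfx,\hat\bfy,D)$, use that restriction commutes with $\cap$ and $\cup$ and that the restricted projections commute, and then apply elementary norm facts. The only cosmetic difference is in the subadditivity step, where the paper tersely invokes the operator inequality $\P|_{D|^\bfx}\cup\Q|_{D|^\bfx}\leq\P|_{D|^\bfx}+\Q|_{D|^\bfx}$ while you make the same content explicit via the decomposition $(\P\cup\Q)|_{D|^\bfx}=\P|_{D|^\bfx}+\Q|_{D|^\bfx}(\id-\P|_{D|^\bfx})$ applied to a vector in the range; both are valid.
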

In particular, we have the following intuitive rule.  
\begin{corollary}\label{cor:subset} If $\P \subseteq \Q$ then
$\QTC{\P}{\P'}{k} \leq \QTC{\Q}{\P'}{k}$ and $\QTC{\P'}{\P}{k} \leq \QTC{\P'}{\Q}{k}$. 
\end{corollary}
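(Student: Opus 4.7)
My plan is to derive Corollary~\ref{cor:subset} as an immediate consequence of Lemma~\ref{lem:shrink}, using the observation that $\P \subseteq \Q$ implies $\P \cap \Q = \P$ and $\P \cup \Q = \Q$ as sets of databases (and hence also as projections on $\CC[\DB]$). Substituting these identities into the first inequality of Lemma~\ref{lem:shrink}, we obtain
$$
\QTC{\P}{\P'}{k} = \QTC{\P \cap \Q}{\P'}{k} \leq \min\bigl\{\QTC{\P}{\P'}{k},\QTC{\Q}{\P'}{k}\bigr\} \leq \QTC{\Q}{\P'}{k} \, ,
$$
which is the first claim. The second claim then follows from the symmetry $\QTC{\P}{\P'}{k} = \QTC{\P'}{\P}{k}$ noted at the start of the section: $\QTC{\P'}{\P}{k} = \QTC{\P}{\P'}{k} \leq \QTC{\Q}{\P'}{k} = \QTC{\P'}{\Q}{k}$.

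As a sanity check, one can also read the result directly off Definition~\ref{def:QTC} without invoking Lemma~\ref{lem:shrink}. Since $\P \subseteq \Q$ gives the operator inequality $\P|_{D|^\bfx} \leq \Q|_{D|^\bfx}$ for every $\bfx$ and $D$, the projection $\P|_{D|^\bfx}$ projects into the range of $\Q|_{D|^\bfx}$, so $\P|_{D|^\bfx} = \Q|_{D|^\bfx}\,\P|_{D|^\bfx}$. By sub-multiplicativity of the operator norm and $\|\P|_{D|^\bfx}\| \leq 1$,
$$
\bigl\|\P'|_{D|^\bfx}\,\CO_{\bfx\hat\bfy}\,\P|_{D|^\bfx}\bigr\|
= \bigl\|\P'|_{D|^\bfx}\,\CO_{\bfx\hat\bfy}\,\Q|_{D|^\bfx}\,\P|_{D|^\bfx}\bigr\|
\leq \bigl\|\P'|_{D|^\bfx}\,\CO_{\bfx\hat\bfy}\,\Q|_{D|^\bfx}\bigr\| ,
$$
and taking the maximum over $\bfx,\hat\bfy,D$ yields the first inequality again; the second then comes from symmetry.

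There is no real obstacle here, since the corollary is essentially a bookkeeping consequence of the lemma. The only thing to be slightly careful about is confirming that the convention identifying database properties with their corresponding projections behaves correctly under intersection and union, i.e., that $\P \cap \Q$ and $\P \cup \Q$ as properties correspond to the product and span of the projections respectively, so that the substitution $\P \cap \Q = \P$ when $\P \subseteq \Q$ is valid both at the set level and at the projection level appearing inside $\QTC{\cdot}{\cdot}{k}$. This compatibility is granted by Definition~\ref{def:DBProp} together with the convention in Remark~\ref{rem:convention}, so no additional work is needed.
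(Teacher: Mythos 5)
Your derivation is correct and matches the paper's (implicit) argument: the corollary is stated as an immediate consequence of Lemma~\ref{lem:shrink}, obtained exactly as you do by substituting $\P\cap\Q=\P$ (or equivalently $\P\cup\Q=\Q$) and then invoking the symmetry of the quantum transition capacity for the second inequality. Your additional direct verification from Definition~\ref{def:QTC} is a valid alternative but not needed.
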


\begin{proof}[\ifnum\submission=0 Proof (of Lemma~\ref{lem:shrink}).\else of Lemma~\ref{lem:shrink}\fi]
As subsets, $(\P\cap {\sf Q})|_{D|^\bfx} = (\P\cap {\sf Q}) \cap D|^\bfx = (\P\cap D|^\bfx) \cap ({\sf Q} \cap D|^\bfx) = \P|_{D|^\bfx} \cap {\sf Q}|_{D|^\bfx}$, and, as projections, $\P|_{D|^\bfx}$ and ${\sf Q}|_{D|^\bfx}$ commute, and $\P|_{D|^\bfx} \cap {\sf Q}|_{D|^\bfx} = \P|_{D|^\bfx} {\sf Q}|_{D|^\bfx} = {\sf Q}|_{D|^\bfx} \P|_{D|^\bfx} {\sf Q}|_{D|^\bfx} \leq \P|_{D|^\bfx}$ and similarly $\leq {\sf Q}|_{D|^\bfx}$. This implies that
$$
\|\P'|_{D|^\bfx} \,\CO_{\bfx\hat\bfy} \, (\P\cap {\sf Q})|_{D|^\bfx}\| \leq \min\bigl\{
\|\P'|_{D|^\bfx} \,\CO_{\bfx\hat\bfy} \, \P|_{D|^\bfx}\|, \|\P'|_{D|^\bfx} \,\CO_{\bfx\hat\bfy} \,{\sf Q}|_{D|^\bfx}\| \bigr\} \, ,
$$
and thus proves the first claim. 
Similarly, but now using that, as projections, 
$$
\P|_{D|^\bfx}, {\sf Q}|_{D|^\bfx} \leq \P|_{D|^\bfx} \cup {\sf Q}|_{D|^\bfx} \leq \P|_{D|^\bfx} + {\sf Q}|_{D|^\bfx} \, , 
$$
we obtain the second claim. 
\ifnum\submission=1 \qed \fi
\end{proof}

In the following, we extend the definition of the quantum transition capacity as follows, which captures a restriction of the query vector $\bfx = (x_1,\ldots,x_k)$ to entries $x_i$ in $X \subseteq \cal \Xcal$. 
\begin{equation}\label{eq:condQTC}
   \condQTC{\P}{\P'}{k}{X} := \max_{\bfx \in X^k \atop \hat\bfy,D} \|\P'|_{D|^\bfx} \,\CO_{\bfx\hat\bfy} \, \P|_{D|^\bfx}\| \, . 
\end{equation}
where the max is restricted to $\bfx \in X^k$. Obviously, $\QTC{\P}{\P'}{k} = \condQTC{\P}{\P'}{k}{\Xcal}$. 

\begin{lemma}\label{lem:ParCond}
Let $X = X' \cup X'' \subseteq \Xcal$ and $k = k' + k''$. Furthermore, let $\P,\P',\P''$ and $\Q$ be database properties. Then 
\begin{equation}\label{eq:Cond1}
\condQTC{\P}{\P''}{k}{X} \leq  
\condQTC{\P}{\P''\backslash\Q}{k}{X} + 
\condQTC{\P}{\Q \cap\P''}{k}{X}  \, ,
\end{equation}
where furthermore
\begin{equation}\label{eq:Cond2}
\condQTC{\P}{\Q \cap\P''}{k}{X} \leq  \condQTC{\P}{\neg\Q}{k'}{X}  +  
\condQTC{\P}{\Q \cap\P'}{k'}{X}  +  \condQTC{\Q\backslash\P'}{\Q \cap\P''}{k''}{X}
\end{equation}
as well as 
\begin{equation}\label{eq:Cond3}
\condQTC{\P}{\Q \cap\P''}{k}{X} \leq  \condQTC{\P}{\neg\Q}{k}{X'} +  
\condQTC{\P}{\Q \cap\P'}{k}{X'} +  \condQTC{\Q\backslash\P'}{\Q \cap\P''}{k}{X''} . 
\end{equation}
\end{lemma}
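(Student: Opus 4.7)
Inequality (\ref{eq:Cond1}) is the easy part: as disjoint subsets, $\P'' = (\P''\setminus\Q) \sqcup (\Q\cap\P'')$, so as projections on any $\CC[D|^\bfx]$ we have $\P''|_{D|^\bfx} = (\P''\setminus\Q)|_{D|^\bfx} + (\Q\cap\P'')|_{D|^\bfx}$. Plugging this decomposition into the definition of $\condQTC{\P}{\P''}{k}{X}$, applying the triangle inequality, and taking the maximum over $\bfx,\hat\bfy,D$ on each piece individually yields (\ref{eq:Cond1}).

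For (\ref{eq:Cond2}), I plan to fix arbitrary $\bfx \in X^k$ (with pairwise distinct entries), $\hat\bfy$ and $D$, and split $\bfx = (\bfx', \bfx'')$ and $\hat\bfy = (\hat\bfy', \hat\bfy'')$ with $|\bfx'|=k'$ and $|\bfx''|=k''$. Because the entries of $\bfx$ are distinct, $\CO_{\bfx'\hat\bfy'}$ and $\CO_{\bfx''\hat\bfy''}$ act on disjoint register blocks and commute, so $\CO_{\bfx\hat\bfy} = \CO_{\bfx''\hat\bfy''}\CO_{\bfx'\hat\bfy'}$. Inserting the identity decomposition $\id = \neg\Q + (\Q\cap\P') + (\Q\cap\neg\P')$ between these two factors and applying the triangle inequality produces three terms. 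For each of them I will use $\|ABC\|\leq \|AB\|\,\|C\|$ or $\|A\|\,\|BC\|$, and bound one factor by $1$ (a projector times a unitary has norm at most $1$). In the $(\Q\cap\P')$-term this leaves a factor of the form $\|(\Q\cap\P')|_{D|^\bfx}\,\CO_{\bfx'\hat\bfy'}\,\P|_{D|^\bfx}\|$; in the $(\Q\cap\neg\P')$-term a factor $\|(\Q\cap\P'')|_{D|^\bfx}\,\CO_{\bfx''\hat\bfy''}\,(\Q\cap\neg\P')|_{D|^\bfx}\|$; and in the $\neg\Q$-term a factor $\|(\neg\Q)|_{D|^\bfx}\,\CO_{\bfx'\hat\bfy'}\,\P|_{D|^\bfx}\|$.

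To promote these three factors to the defined quantities $\condQTC{\cdot}{\cdot}{k'}{X}$ and $\condQTC{\cdot}{\cdot}{k''}{X}$, whose restrictions are taken relative to $\bfx'$ or $\bfx''$ rather than the full $\bfx$, I will exploit the orthogonal decomposition $\CC[D|^\bfx] = \bigoplus_{\bfr''\in\bar\Ycal^{k''}} \CC[D[\bfx''\!\mapsto\!\bfr'']|^{\bfx'}]$ (and the analogous one indexed by $\bfr'$ for the middle term), noting that $\CO_{\bfx'\hat\bfy'}$ preserves each summand because it only touches the $\bfx'$-registers. The operator norm then equals the maximum of the block norms by the ``block-diagonal'' property of the operator norm recalled in Section~\ref{sec:operators}, and each block norm is of the exact form $\|(\Q\cap\P')|_{\tilde D|^{\bfx'}}\,\CO_{\bfx'\hat\bfy'}\,\P|_{\tilde D|^{\bfx'}}\|$ with $\tilde D := D[\bfx''\!\mapsto\!\bfr'']$, hence bounded by $\condQTC{\P}{\Q\cap\P'}{k'}{X}$; the other two terms are treated analogously. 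Taking the maximum over $\bfx,\hat\bfy,D$ on the left then delivers (\ref{eq:Cond2}).

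Inequality (\ref{eq:Cond3}) I will handle by the same argument, but now splitting the coordinates of $\bfx$ according to $X = X'\cup X''$: permute so that the first $k_1$ entries lie in $X'$ and the last $k_2 = k - k_1$ in $X''$ (entries in $X'\cap X''$ may be assigned to either side arbitrarily). This produces a bound involving $\condQTC{\cdot}{\cdot}{k_1}{X'}$ and $\condQTC{\cdot}{\cdot}{k_2}{X''}$, which I will then relax to $\condQTC{\cdot}{\cdot}{k}{X'}$ and $\condQTC{\cdot}{\cdot}{k}{X''}$ using the straightforward monotonicity $\condQTC{\P}{\P'}{k_0}{Y} \leq \condQTC{\P}{\P'}{k}{Y}$ for $k_0 \leq k$, which holds because any $\bfx \in Y^{k_0}$ can be padded to a tuple in $Y^{k}$ by appending $k - k_0$ fresh distinct entries with $\hat y = \hat 0$, and $\CO_{x,\hat 0}=\id$ leaves the norm unchanged. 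The main obstacle throughout, such as it is, lies in the careful bookkeeping around the $|_{D|^\bfx}$-restrictions and in justifying their identification with $|_{\tilde D|^{\bfx'}}$- or $|_{\tilde D|^{\bfx''}}$-restrictions via the direct-sum decomposition; once that is in place, the argument is a routine application of the norm manipulations already used in the proof of Lemma~\ref{lem:shrink}.
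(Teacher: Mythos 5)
Your proposal is correct and follows essentially the same route as the paper: splitting $\bfx$ into $(\bfx',\bfx'')$, factoring $\CO_{\bfx\hat\bfy}=\CO_{\bfx''\hat\bfy''}\CO_{\bfx'\hat\bfy'}$, inserting a resolution of the identity between the two factors (you do it in one three-way step $\id=\neg\Q+(\Q\cap\P')+(\Q\cap\neg\P')$ where the paper does two successive two-way steps, yielding the same three terms), and passing from $|_{D|^\bfx}$-restrictions to $|_{\tilde D|^{\bfx'}}$- or $|_{\tilde D|^{\bfx''}}$-restrictions via the block-diagonal property of the operator norm. Your explicit direct-sum justification of that last step, and of the monotonicity $\condQTC{\P}{\P'}{k_0}{Y}\leq\condQTC{\P}{\P'}{k}{Y}$ used for (\ref{eq:Cond3}), is if anything more careful than the paper's, which simply invokes ``basic properties of the operator norm.''
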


\begin{proof}
The first inequality follows immediately from Lemma~\ref{lem:shrink}, using that $(\P''\setminus\Q) \cup (\Q \cap\P'') = \P''$. 
For the other two, let $\bfx \in X^k$, $\hat\bfy \in \hat\Ycal^k$, $D \in \DB$ be the choices that achieve the maximal value in the definition of \smash{$\condQTC{\P}{\Q \cap \P''}{k}{X}$}. We may assume without loss of generality that $\bfx$ consists of pairwise distinct entries. 
For proving the first inequality, we split up $\bfx$ into $(\bfx',\bfx'') \in X^{k'} \times X^{k''}$, and correspondingly then for $\hat\bfy \in \hat\Ycal^k$. For proving the second inequality, we let $\bfx'$ consist of all coordinates of $\bfx$ that lie in $X'$, and we let $\bfx''$ consist of all coordinates of $\bfx$ that lie in $X''$ but not in $X'$, and $\hat\bfy'$ and $\hat\bfy''$ consists of the corresponding coordinates of $\hat\bfy$; in this case, \smash{$(\bfx',\bfx'') \in X'^{\ell'} \times {X''}^{\ell''}$} with $\ell' + \ell'' = k$. 
In both cases, we have $\CO_{\bfx\hat\bfy} = \CO_{\bfx'\hat\bfy'} \CO_{\bfx''\hat\bfy''}$, and, writing $\P_\bfx$ for $\P|_{D|^\bfx}$ etc., we obtain
\begin{align*}
\condQTC{&\P}{\Q\cap\P''}{k}{X} = \|\P''_\bfx \Q_\bfx  \,\CO_{\bfx''\hat\bfy''} \CO_{\bfx'\hat\bfy'} \, \P_\bfx\| \\
&\leq \|\P''_\bfx \Q_\bfx \,\CO_{\bfx''\hat\bfy''} \Q_\bfx \CO_{\bfx'\hat\bfy'} \, \P_\bfx\| + \| (\id-\Q_\bfx) \CO_{\bfx'\hat\bfy'} \, \P_\bfx\| \\
&\leq \| \P'_\bfx\Q_\bfx \CO_{\bfx'\hat\bfy'} \, \P_\bfx\| + \|\P''_\bfx \Q_\bfx \,\CO_{\bfx''\hat\bfy''} (\id - \P'_\bfx)\Q_\bfx \| + \| (\id-\Q^\bfx) \CO_{\bfx'\hat\bfy'} \, \P_\bfx\| \\
&\leq \| \P'_{\bfx'}\Q_{\bfx'} \CO_{\bfx'\hat\bfy'} \, \P_{\bfx'}\| + \|\P''_{\bfx''} \Q_{\bfx''} \,\CO_{\bfx''\hat\bfy''} (\id - \P'_{\bfx''})\Q_{\bfx''} \| + \| (\id-\Q_{\bfx'}) \CO_{\bfx'\hat\bfy'} \, \P_{\bfx'}\| \, ,
\end{align*}
where the last equality follows from basic properties of the operator norm. The first of the two remaining bounds is now obtained by maximizing the individual terms on the right hand side over $\bfx' \in X^{k'}$ and $\bfx'' \in X^{k''}$ (as well as over $\hat\bfy',\hat\bfy''$ and $D$). For the other case, we maximize over $\bfx' \in X'^{\ell'}$ and $\bfx'' \in {X''}^{\ell''}$ and exploit that, for instance, $\condQTC{\P}{\neg\Q}{\ell'}{X'} \leq \condQTC{\P}{\neg\Q}{k}{X'}$, given that $\ell' \leq k$. 
\ifnum\submission=1 \qed \fi
\end{proof}

By recursive application of Lemma~\ref{lem:ParCond}, we obtain the following. 

\begin{corollary}[Parallel Conditioning]\label{cor:parallel_condition}
Let $X = X_1 \cup \ldots \cup X_h \subseteq \Xcal$ and $k = k_1 + \cdots + k_h$, and let $\P_0,\P_1,\ldots,\P_h$ and $\neg\P_0 \subseteq \Q$ be database properties. Then 
\begin{align*}
\condQTC{\neg\P_0}{\P_h}{k}{X} &\leq \sum_{i=1}^h \condQTC{\neg\P_0}{\neg\Q}{\bar k_i}{X} + \sum_{i=1}^h \condQTC{\Q \backslash \P_{i-1}}{\Q \cap \P_i}{k_i}{X} && \text{and} \\
\condQTC{\neg\P_0}{\P_h}{k}{X} &\leq \sum_{i=1}^h \condQTC{\neg\P_0}{\neg\Q}{k}{\bar X_i} + \sum_{i=1}^h \condQTC{\Q \backslash \P_{i-1}}{\Q \cap \P_i}{k}{X_i} \, ,
\end{align*}
where $\bar k_i = k_1 + \cdots + k_i$ and $\bar X_i = X_1 \cup \ldots \cup X_i$. 
\end{corollary}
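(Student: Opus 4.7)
The plan is to prove the corollary by iteratively applying the three inequalities of Lemma~\ref{lem:ParCond}. I will work out the first bound in detail; the second is then obtained by exactly the same unwinding, but with inequality~(\ref{eq:Cond3}) substituted for (\ref{eq:Cond2}) at each step (using the partition $X = \bar X_{j-1} \cup X_j$ in place of $\bar k_j = \bar k_{j-1} + k_j$).

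First, I apply (\ref{eq:Cond1}) with $\P'' = \P_h$ and the given $\Q$, to split
\[
\condQTC{\neg\P_0}{\P_h}{\bar k_h}{X} \,\leq\, \condQTC{\neg\P_0}{\P_h \setminus \Q}{\bar k_h}{X} + \condQTC{\neg\P_0}{\Q \cap \P_h}{\bar k_h}{X}.
\]
Since $\P_h \setminus \Q \subseteq \neg\Q$, Corollary~\ref{cor:subset} (whose proof extends verbatim to the $X$-restricted capacity) bounds the first term by $\condQTC{\neg\P_0}{\neg\Q}{\bar k_h}{X}$, which is the $i = h$ summand of the first sum in the target bound. The remaining work is to unwind the quantity $g(j) := \condQTC{\neg\P_0}{\Q \cap \P_j}{\bar k_j}{X}$ at $j = h$.

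To this end, I apply (\ref{eq:Cond2}) with $\P = \neg\P_0$, $\P' = \P_{j-1}$, $\P'' = \P_j$, $k' = \bar k_{j-1}$, and $k'' = k_j$, obtaining the recursion
\[
g(j) \,\leq\, \condQTC{\neg\P_0}{\neg\Q}{\bar k_{j-1}}{X} + g(j-1) + \condQTC{\Q\setminus\P_{j-1}}{\Q \cap \P_j}{k_j}{X}
\]
for every $j \geq 2$. Unrolling this from $j = h$ down to $j = 2$ contributes exactly the $\neg\Q$ summands with $i = 1,\ldots,h-1$ and the $\Q\setminus\P_{i-1} \to \Q \cap \P_i$ summands with $i = 2,\ldots,h$. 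At the base, the hypothesis $\neg\P_0 \subseteq \Q$ lets me rewrite $\neg\P_0 = \Q \setminus \P_0$ (as sets, and hence as projectors), so that $g(1) = \condQTC{\Q\setminus\P_0}{\Q \cap \P_1}{k_1}{X}$ supplies the missing $i = 1$ summand of the second sum; combining everything yields the claimed bound.

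No real obstacle is anticipated beyond the combinatorial bookkeeping of tracking $\bar k_j$ and the summand indices through the recursion. The one substantive use of the hypotheses is the identification $\neg\P_0 = \Q \setminus \P_0$, which is precisely what the assumption $\neg\P_0 \subseteq \Q$ provides, and which is exactly what matches the base case $g(1)$ to the $i = 1$ term of the second sum in the target.
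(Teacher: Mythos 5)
Your proposal is correct and follows the paper's own argument: apply inequality (\ref{eq:Cond1}) once to peel off the $i=h$ term of the first sum, then recursively apply (\ref{eq:Cond2}) (respectively (\ref{eq:Cond3}) for the second bound), closing the recursion at $j=1$ via the identification $\neg\P_0 = \Q\setminus\P_0$ afforded by the hypothesis $\neg\P_0 \subseteq \Q$. The only difference is that you spell out the bookkeeping that the paper leaves as "recursively applying (\ref{eq:Cond2})".
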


\begin{proof}
Applying (\ref{eq:Cond1}) and (\ref{eq:Cond2}) with $\P := \neg\P_0$, $\P' := \P_{h-1}$ and $\P'' := \P_h$, and omitting the ``conditioning" on $X$ for simplicity, we get
\begin{align*}
\QTC{\neg\P_0}{\P_h}{k} \leq & \QTC{\neg\P_0}{\neg\Q}{k} + \longQTC{\neg\P_0}{\neg\Q}{\bar k_{h-1}} 
+ \longQTC{\neg\P_0}{\Q \cap\P_{h-1}}{\bar k_{h-1}} + \longQTC{\Q \backslash \P_{h-1}}{\Q \cap \P_h}{k_h} \, .
\end{align*}
Recursively applying (\ref{eq:Cond2}) to $\longQTC{\neg\P_0}{\Q \cap\P_{h-1}}{\bar k_{h-1}}$ gives the first claim. The second is argued correspondingly. 
\ifnum\submission=1 \qed \fi
\end{proof}

The quantum transition capacity {\em with restricted input}, defined in (\ref{eq:condQTC}), is just the original definition of the quantum transition capacity (Definition~\ref{def:QTC}) but with the considered set $\Xcal$ replaced by $X$. As a consequence, properties for $\QTC{\P}{\P'}{}$ carry over to $\condQTC{\P}{\P'}{}{X}$. For instance, it is still symmetric, and Lemma \ref{lem:shrink} carries over to 
$$
\condQTC{\P\cap {\sf Q}}{\P'}{k}{X} \, \leq \min\bigl\{\condQTC{\P}{\P'}{k}{X},\condQTC{{\sf Q}}{\P'}{k}{X}\bigr\} 
$$
etc. For completeness we spell out here the definition of non-uniform recognizability as well as Theorem~\ref{thm:tricky} for such input-restricted database transitions $\P \to \P' \,|\, X$ (the other types of recognizability can be generalized similarly). 

\begin{definition}\label{def:NonunifWeakRecL}
A database transition $\P \to \P'$ with input restricted in $X\subseteq{\Xcal}$ is said to be {\em $k$-non-uniformly  weakly recognizable} by $\ell$-local properties if for every $\bfx = (x_1,\ldots,x_k) \in X^k$ with disjoint entries, and for every $D \in \DB$, there exist a family of $\ell$-local properties $\{\L^{\bfx,D}_i\}_i$ with supports in $\{x_1,\ldots,x_k\}$ so that
\begin{equation*}
D_\circ \!\in\! \P|_{D|^\bfx} \:\wedge\: D' \!\in\! \P'|_{D|^\bfx}  \:\Longrightarrow\: \exists \, i\! : D' \in \L^{\bfx,D}_i  \:\wedge\: \big( \exists\, x \!\in\! \supp(\L^{\bfx,D}_i) : D_\circ(x) \!\neq\! D'(x) \big) 
\end{equation*} 
\end{definition}
\begin{theorem}\label{thm:tricky_restricted}
Let $\P \to \P'$ with input restricted in $X$ be $k$-non-uniformly weakly recognizable by $1$-local properties $\L_i^{\bfx,D}$, where the support of $\L_i^{\bfx,D}$ is $\{x_i\}$ or empty. 
Then
$$
\condQTC{\P}{\P'}{k}{X} \leq \max_{\bfx,D} e \sum_{i} \sqrt{10 P\bigl[U \!\in\! \L_i^{\bfx,D}\bigr]} \, ,
$$
\ifnum\submission=1 \par\vspace{-1ex}\noindent \fi
where the $\max$ now is over all $\bfx = (x_1,\ldots,x_k) \in X^k$. 
\end{theorem}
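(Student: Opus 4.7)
The plan is to essentially replay the proof of Theorem~\ref{thm:tricky} verbatim, observing that the argument there is entirely pointwise in $\bfx,\hat\bfy,D$, so restricting the supremum in the definition of the transition capacity to $\bfx \in X^k$ only restricts the supremum in the conclusion correspondingly, without affecting any of the operator manipulations.

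Concretely, I would fix arbitrary $\bfx = (x_1,\ldots,x_k) \in X^k$, $\hat\bfy$, and $D$, and abbreviate $\L_i := \L^{\bfx,D}_i$, $\P_\bfx := \P|_{D|^\bfx}$, $\P'_\bfx := \P'|_{D|^\bfx}$. For each $i$, I would set $A_i := \L_i\,\CO_{x_i\hat y_i}(\id - \L_i)$ and $B_i := \CO_{x_i \hat y_i} - A_i$, and use the same telescoping identity
$$
\CO_{\bfx\hat\bfy} = \prod_{j=1}^k \CO_{x_j\hat y_j} = \sum_{i=0}^{k} \bigg(\prod_{j<k-i}\CO_{x_j \hat y_j}\bigg) A_{k-i} \bigg(\prod_{j>k-i} B_j\bigg)
$$
with the convention $A_0 = \id$. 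The weak-recognizability hypothesis (now formulated for $\bfx \in X^k$ via Definition~\ref{def:NonunifWeakRecL}) guarantees exactly as before that $\P'_\bfx (\prod_j B_j) \P_\bfx = 0$, so the $i=0$ term drops out.

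Applying the triangle inequality, submultiplicativity of $\|\cdot\|$, and the bound $\|B_i\| \leq 1 + \|A_i\|$, I obtain
$$
\|\P'_\bfx\, \CO_{\bfx\hat\bfy}\, \P_\bfx\| \,\leq\, \sum_{i=1}^{k} \|A_i\| \cdot e^{\sum_j \ln(1+\|A_j\|)} \,\leq\, e \sum_{i=1}^{k} \|A_i\|,
$$
where the final bound uses the standard trick of separating the cases $\sum_j \|A_j\| \leq 1$ and $\geq 1$. Finally, bounding $\|A_i\|$ by the Frobenius norm and invoking (\ref{eq:connection}) (with the same edge-case observation that $\bot \in \L_i$ forces $\L_i = \bar\Ycal$ via Condition~\ref{it:monotonicity} of Definition~\ref{def:Locality}, making the contribution vanish) yields $\|A_i\|^2 \leq 10 P[U \in \L_i^{\bfx,D}]$.

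Combining these and then taking the maximum over $\bfx \in X^k$, $\hat\bfy$, and $D$ gives the claimed bound. The only substantive difference from the proof of Theorem~\ref{thm:tricky} is that Definition~\ref{def:NonunifWeakRecL} only supplies the recognizing families for $\bfx \in X^k$, which is exactly the range over which $\condQTC{\cdot}{\cdot}{k}{X}$ is maximized, so no additional obstacle arises; the proof is purely a bookkeeping variant of Theorem~\ref{thm:tricky}.
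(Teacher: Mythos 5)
Your overall strategy is exactly right, and it is also what the paper does: the argument for Theorem~\ref{thm:tricky} is pointwise in $\bfx,\hat\bfy,D$, so restricting the maximization to $\bfx \in X^k$ (which is precisely the range supplied by Definition~\ref{def:NonunifWeakRecL}) is all that changes. However, your transcription of that argument contains a substantive error that breaks the key step.

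You define $A_i := \L_i\,\CO_{x_i\hat y_i}(\id-\L_i)$. That is the operator used in the proofs for \emph{strong} recognizability (Theorems~\ref{thm:simple} and \ref{thm:simple-general}); in the proof of Theorem~\ref{thm:tricky} the operator is
$$
A_i := \!\!\sum_{u_i,r_i:\, u_i\in\L_i \wedge r_i\neq u_i}\!\! \proj{u_i}\,\CO_{x_i\hat y_i}\,\proj{r_i}\, ,
$$
which also removes off-diagonal transitions \emph{within} $\L_i$. The distinction matters because weak recognizability only guarantees, for $\bfr\in\P|_{D|^\bfx}$ and $\bfu\in\P'|_{D|^\bfx}$, an index $i$ with $u_i\in\L_i$ and $r_i\neq u_i$; it does \emph{not} guarantee $r_i\notin\L_i$. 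With your $B_i=\CO_{x_i\hat y_i}-\L_i\,\CO_{x_i\hat y_i}(\id-\L_i)$, the entry $\bra{u_i}B_i\ket{r_i}$ equals $\bra{u_i}\CO_{x_i\hat y_i}\ket{r_i}$ whenever $r_i\in\L_i$ and $r_i \neq u_i$, so the claimed identity $\P'|_{D|^\bfx}\bigl(\prod_j B_j\bigr)\P|_{D|^\bfx}=0$ fails in general; this is exactly the point where weak and strong recognizability diverge, and it is not a corner case (in the chain application, $r_i = D(x_i)$ may well lie in $\L_i^{\bfx,D}$ even though $D$ contains no $s$-chain). With the paper's $A_i$ the Frobenius-norm estimate $\|A_i\|^2 \le \sum_{r_i}\tilde P[r_i\neq U\in\L_i\,|\,r_i,\hat y_i]\le 10\,P[U\in\L_i]$ from (\ref{eq:connection}) still applies, and the rest of your argument, including the restriction of the max to $X^k$, goes through unchanged. (Minor: it is the $i=k$ term of your telescoping sum, i.e.\ the one equal to $\prod_j B_j$, that vanishes, not the $i=0$ term.)
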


\section{Post-Quantum Proof of Sequential Works}\label{sec:PoSW}

In this section, we prove post-quantum security of the  proof of sequential work (PoSW) construction by Cohen and Pietrzak~\cite{cohen2018simple} (referred to as Simple PoSW) using our framework developed in the last section. As a matter of fact, we directly analyze the non-interactive variant of their construction after applying the Fiat-Shamir transformation \cite{fiat1986prove}. As we shall see, the proof is by means of purely classical reasoning, recycling observations that are relevant for arguing classical security and combining them with results provided by our framework. 



\subsection{Simple Proof of Sequential Works}\label{sec:classical_PoSW}

For readers not familiar with PoSW, we review the definition in Appendix~\ref{sec:posw-def}. Typically, underlying the construction of a PoSW is a directed acyclic graph (DAG) $G$ with certain ``depth-robust'' properties, and a graph labeling that the prover $\Prover$ is required to compute using a hash function $\Hash$. We proceed to describe the DAG used in Simple PoSW and the graph labeling.

\paragraph{\bf Simple PoSW DAG and Graph Labeling.} 
Let $n \in \mathbb{N}$ and $N = 2^{n+1}-1$. Consider the (directed) complete binary tree $B_n = (\VId{n}, E_n')$ of depth $n$, where $\VId{n} := \bool^{\leq n}$ and $E_n'$ consists of the edges directed towards the root (black edges in Fig.~\ref{fig:PoSW_DAG}). 
The Simple PoSW DAG, denoted by $\gPoSW$, is obtained by adding some additional edges to $B_n$ (red edges in Fig.~\ref{fig:PoSW_DAG}). 
Before giving the formal definition of $\gPoSW$ (Definition~\ref{def:DAG}), we recall some basic terminology and notation in the context of the complete binary tree $B_n$, which we will then also use in the context of $\gPoSW$. 

\begin{definition} 
We write $\rt := \epsilon$ for the {\em root}, and we write $\leaves(\VId{n}) := \bool^n$ for the {\em leaves} in $\VId{n}$. For $T \subseteq \VId{n}$, we set $\leaves(T) := T \cap \bool^n$. 
For $v\notin \leaves(\VId{n})$, let $\lch{v} := v\Vert 0$ and $\rch{v} := v\Vert 1$. 
For $b\in\bool$ and $v\in\bool^{<n}$, let $\mathsf{par}(v\Vert b) := v$ and  $\sib(v\Vert b) := v\Vert \neg b$ (see~Fig.~\ref{fig:PoSW_DAG}, right).

Finally, for a leaf $v \in \leaves(\VId{n})$, we define the \emph{ancestors} of $v$ as $\anc(v) = \smash{\{\mathsf{par}^i(v) \,|\, 0\le i \le n\}}$ and  the \emph{authentication path} of $v$ (as in the Merkle tree) as
$\ap(v) = (\anc(v) \backslash \{\rt\}) \cup \smash{\{\sib(u)  \,|\,  \rt \neq u \in \anc(v) \}}$.

%
%

\end{definition}

\begin{figure}[h]\vspace{-1ex}
    \centering
    \includegraphics[width=0.63\textwidth]{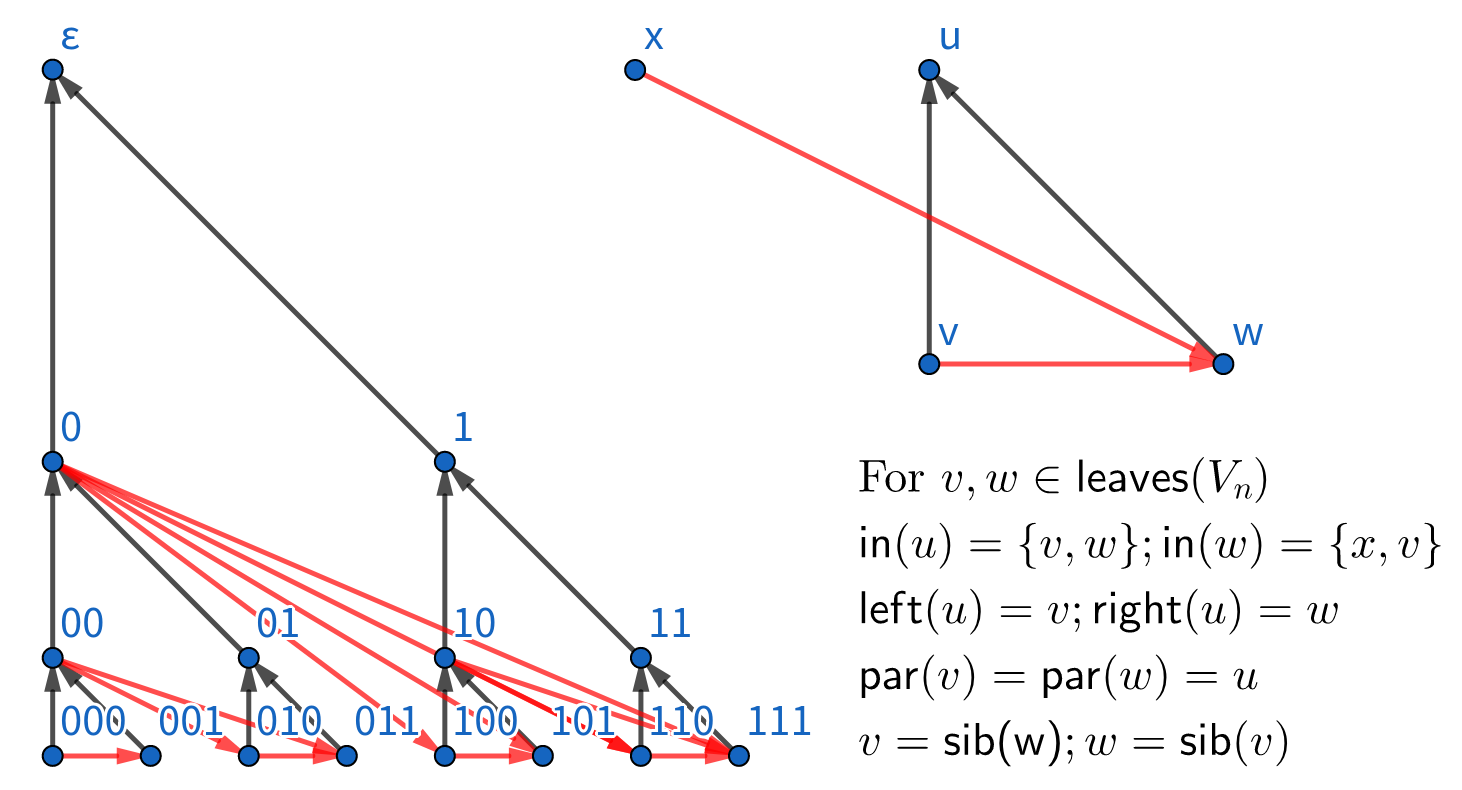}
    \vspace{-1ex}
    \caption{Illustration of the Simple PoSW DAG $\gPoSW$ for $n=3$. 
    }
    \label{fig:PoSW_DAG}
\end{figure}\vspace{-1ex}

\begin{definition}\label{def:DAG}
For $n\in \mathbb{N}$,  define the Simple PoSW DAG $\gPoSW := (V_n,E_n'\cup E_n'')$ with vertex set $V_n$ and edges
\begin{align*}
E_n' &:=\{(\lch{v},v),(\rch{v},v)\,|\, v\in V_n\setminus\mathsf{leaves}(V_n)\} \quad\text{and} \\
E_n''&:=\{(\mathsf{sib}(u),v)\,|\, v\in V_n, u \in \anc(v) \text{ s.t. }u=\mathsf{right}(\mathsf{par}(u))\} \, .
\end{align*}
\end{definition}



For $v\in V_n$, we write $\mathsf{in}(v) := \{u\in V_n\,|\,(u,v)\in E_n'\cup E_n''\}$ to denote the inward neighborhood of $v$. 
We consider a fixed ordering of the vertices (e.g. lexicographic), so that for any set $\{v_1,\dots,v_d\} \in V_n$ of vertices, the corresponding ordered list $(v_1,\dots,v_d)$ is well defined.

We proceed to define the graph labeling for $\gPoSW$ with respect to a hash function $\Hash:\bool^{\leq B}\to\bool^w$, were $w$ is a security parameter, and $B$ is arbitrary large (and sufficiently large for everything below being well defined).   

\begin{definition}[Graph Labeling]\label{def:GraphLabelling}
A function $\ell: V_n \to \bool^w$, $v \mapsto \ell_v$ is a {\em labeling} of $\gPoSW$ with respect to $\Hash$ if 
\begin{equation}\label{eq:label}
\ell_v = \Hash(v,\ell_{\parent(v)})
\end{equation}
for all $v \in V_n$, were $\ell_{\parent(v)}$ is shorthand for $(\ell_{v_1},\dots,\ell_{v_d})$ with $\{v_1,\dots,v_d\} = \parent(v)$. Similarly, for a subtree%
\footnote{By a {\em subtree} of $\gPoSW$ we mean a sub{\em graph} of $\gPoSW$ that is a sub{\em tree} of the complete binary tree $B_n$ when restricted to edges in $E'_n$. We are also a bit sloppy with not distinguishing between the graph $T$ and the vertices of $T$. }  
$T$ of $\gPoSW$, a function $\ell: T \to \bool^w$, $v \mapsto \ell_v$ is a called a {\em labeling} of $T$ with respect to $\Hash$ if $\ell_v = \Hash(v,\ell_{\parent(v)})$ for all $v \in V_n$ for which $\parent(v) \subseteq T$. 
\end{definition}

By the structure of the graph, $\gPoSW$ admits a unique labeling, which can be computed by making $N = 2^{n+1}-1$ sequential queries to $H$, starting with the leftmost leaf. 
We sometimes speak of a {\em consistent} labeling (of $\gPoSW$ or $T$) when we want to emphasize the distinction from an arbitrary function $\ell$. The definition also applies when replacing the function $H$ by a database $D:\bool^{\leq B}\to\bool^w \cup \{\bot\}$, where the requirement (\ref{eq:label}) then in particular means that $\Hash(v,\ell_{\parent(v)}) \neq \bot$.


We also make the following important remark.

\begin{remark} \label{rmk:path-chain}
Let $T$ be a subtree of $\gPoSW$ with a consistent labeling $\ell$. 
Then, any path $\Path = (v_0,\dots,v_r)$ of length $|P|=r$ in $T$ induces an $r$-chain $(x_0,\dots,x_r)$, where $x_i = (v_i, \ell_{v'_1},\dots,\ell_{v'_d})$ with $\{v'_1,\dots,v'_d\} = \parent(v_i)$, and where the relation $\triangleleft$ is defined as follows. $y \triangleleft x$ if and only if $x$ is of form $(v,\lab{1},\lab{2},\dots,\lab{d})$ with $v \in \VId{n}, \lab{j}\in\bool^w$, $|d|=|\parent(v)|\leq n$, and $y = \lab{j}$ for some $j$. 
\end{remark}

\paragraph{\bf Simple PoSW Construction.} We are ready to describe the (non-interactive) Simple PoSW construction, which amounts to asking the prover $\Prover$ to compute the root label of $\gPoSW$ with respect to the hash function $\Hash_{\chi}$ defined by $\Hash_{\chi}(\cdot) := \Hash(\chi,\cdot)$ for a random $\chi \in \bool^w$ sampled by the verifier $\mathcal{V}$, and open the labels of the authentication paths of the challenge leaves. 

Specifically, given parameters $w, t$ and $N = 2^{n+1}-1$, and a random oracel $\Hash:\bool^{\leq B} \rightarrow \bool^w$, the Simple PoSW protocol is defined as follows.
\begin{itemize}
    \item $(\phi,\phi_{\Prover}) := \mathsf{PoSW}^\Hash(\chi,N)$: $\Prover$ computes the unique consistent labeling $\ell$ of $\gPoSW$ with respect to hash function $\Hash_{\chi}$ defined by $\Hash_{\chi}(\cdot) := \Hash(\chi,\cdot)$, and stores it in $\phi_{\Prover}$. $\Prover$ sets $\phi = \lab{\rt}$ as the root label.
    \item The opening challenge:  $\gamma := \Hash_{\chi}^{\mathsf{ChQ}}(\phi) := \bigl(\Hash_{\chi}(\phi,1),\dots,\Hash_{\chi}(\phi,d)\bigr) \in \bool^{dw}$ for sufficiently large $d$, parsed as $t$ leaves $\{v_1,\ldots,v_t\} \subseteq \leaves(\VId{n})$.%
    \item $\tau:= \mathsf{open}^\Hash(\chi,N,\phi_{\Prover},\gamma):$ For challenge $\gamma = \{v_1,\ldots,v_t\}$, the opening $\tau$ consists of the labels of vertices in the authentication path $\ap(v_i)$ of  $v_i$ for $i \in [t]$, i.e., $\tau = \{\ell_{\ap(v_i)}\}_{i \in [t]}$.
    \item $\mathsf{verify}^\Hash(\chi, N, \phi, \gamma, \tau)$:  $\mathcal{V}$ verifies if the ancestors of every $v_i$ are consistently labeled by $\tau$. Specifically, for each $i \in [t]$, $\mathcal{V}$ checks if $\ell_{u} = H_{\chi}(u, \ell_{\parent(u)})$ for all $u \in \anc(v_i)$.
    $\mathcal{V}$ outputs $\mathsf{accept}$ iff all the consistency checks pass. 
\end{itemize}


Note that since we consider the non-interactive version of Simple PoSW after applying the Fair-Shamir transformation, the random oracle $\Hash$ is used to compute both the labels (as $\Hash_\chi(v,\ell_{\parent(v)})$) and the challenge (as $\Hash_\chi^{\mathsf{ChQ}}(\phi)$). We silently assume that the respective inputs are specially formatted so as to distinguish a {\em label query} from a {\em challenge query}. E.g., a label query comes with a prefix $0$ and a challenge query with prefix~$1$. We then denote the set of inputs for label and challenge queries by $\mathsf{LbQ}$ and $\mathsf{ChQ} \subseteq \bool^{\leq B}$, respectively.
Also, for simplicity, we will treat $\Hash_{\chi}^{\mathsf{ChQ}}(\phi)$ as {\em one} oracle query, i.e., ``charge" only one query for a challenge query; however, we keep the superscript $\mathsf{ChQ}$ to remind that the query response is (understood as) a set of leaves.

\paragraph{\bf Classical Security Analysis of Simple PoSW.} Before presenting our proof of post-quantum security for Simple PoSW, we first review the classical security analysis in~\cite{cohen2018simple}. For simplicity, here we consider the original (interactive) Simple PoSW (i.e., $\Prover$ first sends $\phi$, receives random $\gamma$ from $\mathcal{V}$, and then sends $\tau$ to $\mathcal{V}$). Also, to start with, we assume that $\Prover$ does not make further oracle queries after sending $\phi$. 
We review the argument of \cite{cohen2018simple} for bounding the probability that a $k$-parallel $q$-query classical oracle algorithm $\mathcal{A}$ with $q < N$ makes $\mathcal{V}$ accept, using the terminology we introduced in Section~\ref{sec:CTC}.

Let $D:\bool^{\leq B} \rightarrow \zo^w \cup \{\bot\}$ be the database at the point that $\mathcal{A}$ sends $\phi$ to $\mathcal{V}$ (after having made the $q$ $k$-parallel queries). Following the argument in Section~\ref{sec:CTC}, we can bound the success probability of $\mathcal{A}$ by bounding the probability that a random challenge $\gamma = \{v_i\}_{i\in[t]}$ can be opened based on the information in the database $D$. 
As argued in \ifnum\submission=0 Section~\ref{sec:CTC} \else {\color{red} one of the appendices} \fi, the probability that the database $D$ contains collisions, or a $(q+1)$-chain with respect to the relation defined in Remark~\ref{rmk:path-chain}, is small.
Thus, by a union bound, we can assume that $D$ contains no collisions nor $(q+1)$-chains. 

Next, given the database $D$ and the ``commitment" $\phi$, claimed to be the root label $\ell_\rt$, we need to analyze the set of leaves $v$ that $\mathcal{A}$ can open. 
One of the key observations in~\cite{cohen2018simple} is that, for a database $D$ with no collisions, there exists a maximal subtree $T$ of $\gPoSW$ that contains $\rt$ and admits a consistent labeling $\ell$ with $\ell_\rt = \phi$. 
As observed in~\cite{cohen2018simple}, this subtree $T$ then contains all leaves that one can open given $D$. Thus, $\mathcal{A}$ can correctly answer a challenge $\gamma = \{v_1,\ldots,v_t\}$ if $\gamma \subseteq \leaves(T)$, while otherwise it is unlikely that he succeeds. 

The subtree $T$, together with the labeling $\ell$ of $T$, can be extracted using $\ext{D}{n}{\phi}$, described in Algorithm~\ref{algo:extract} in the Appendix~\ref{app:Extract}. 
Roughly speaking, starting with $T := \{\rt\}$, consider $v := \rt$ and $\ell_\rt := \phi$, and add $\lch{v}$ and $\rch{v}$ to $T$ if (and only if) there exist \smash{$\ell_{\lch{v}}$} and $\ell_{\rch{v}}$ such that $\ell_v = D\bigl(v,\ell_{\lch{v}},\ell_{\rch{v}}\bigr)$, and repeat inductively with the newly added elements in $T$. In the end, for the leaves $v \in T$ check if $\ell_{v} = D(v,\ell_{\parent(v)})$ and remove $v$ from $T$ if this is not the case; we note here that $v \in \leaves(T) \Rightarrow \parent(v) \subseteq T$.  

In summary:

\begin{lemma}\label{lem:extract}
Let $D:\bool^{\leq B} \rightarrow \zo^w \cup \{\bot\}$ be a database with no collisions (beyond $\bot$). Then, for any $\phi \in \zo^w$, the subtree $T$ and the labeling $\ell$ produced by $\ext{D}{n}{\phi}$ are such that $\ell$ is a consistent labeling of $T$ with respect to $D$, having root label $\ell_\rt = \phi$. Furthermore, for any leave $v$ of $V_n$, if $v \in T$ then $\ell_{u} = D(u, \ell_{\parent(u)})$ for all $u \in \anc(v_i)$, and if $v \not\in T$ then there exists no labeling $\ell'$ with $\ell'_\rt = \phi$ and $\ell'_{u} = D(u, \ell'_{\parent(u)})$ for all $u \in \anc(v_i)$. 
\end{lemma}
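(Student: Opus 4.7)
The plan is to prove the three assertions of the lemma in sequence, with the no-collision property carrying essentially all the weight.

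First, I would verify that the output $(T,\ell)$ of $\ext{D}{n}{\phi}$ is a consistent labeling of $T$ with $\ell_\rt = \phi$. This is essentially by construction: the first phase of the algorithm adds $\lch{u}, \rch{u}$ to $T$ only together with labels $\ell_{\lch{u}}, \ell_{\rch{u}}$ satisfying $\ell_u = D(u,\ell_{\lch{u}},\ell_{\rch{u}})$, which gives consistency at every internal $u$ of $T$, and the final clean-up phase removes those leaves $v \in V_n$ from $T$ for which the leaf-consistency $\ell_v = D(v,\ell_{\parent(v)})$ fails. Since the clean-up only removes leaves of $V_n$ and leaves the labels of interior vertices intact, $(T,\ell)$ remains a consistent labeling in the sense of Definition~\ref{def:GraphLabelling}.

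The second claim, that $v \in T$ implies $\ell_u = D(u,\ell_{\parent(u)})$ for all $u \in \anc(v)$, is then immediate: for the interior ancestors this is the consistency just established, and for $v$ itself it is exactly the leaf-consistency that $v$ must satisfy to survive the clean-up phase.

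The third and main claim is the converse: if a leaf $v$ of $V_n$ is \emph{not} in $T$, then no labeling $\ell'$ with $\ell'_\rt = \phi$ satisfying $\ell'_u = D(u,\ell'_{\parent(u)})$ for all $u \in \anc(v)$ can exist. The plan is to argue by induction down the path $\rt = u_0, u_1, \ldots, u_n = v$ of ancestors of $v$ that, if such an $\ell'$ existed, every $u_i$ would belong to $T$ with label $\ell_{u_i} = \ell'_{u_i}$. The base case $i=0$ is immediate from $\ell_\rt = \phi = \ell'_\rt$. For the inductive step, assume $u_i \in T$ with $\ell_{u_i} = \ell'_{u_i}$; since $\ell'_{u_i} = D(u_i, \ell'_{\lch{u_i}}, \ell'_{\rch{u_i}})$, labels satisfying the parent equation at $u_i$ exist, so the algorithm does add $\lch{u_i}$ and $\rch{u_i}$ to $T$ with some labels $\ell_{\lch{u_i}}, \ell_{\rch{u_i}}$ satisfying the same equation. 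Here the no-collision hypothesis on $D$ is invoked: since both $D(u_i, \ell_{\lch{u_i}}, \ell_{\rch{u_i}}) = \ell_{u_i} = \ell'_{u_i} = D(u_i, \ell'_{\lch{u_i}}, \ell'_{\rch{u_i}})$ are equal and non-$\bot$, the absence of collisions in $D$ forces $(\ell_{\lch{u_i}}, \ell_{\rch{u_i}}) = (\ell'_{\lch{u_i}}, \ell'_{\rch{u_i}})$, so in particular $\ell_{u_{i+1}} = \ell'_{u_{i+1}}$ as required.

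The main obstacle to spell out carefully is the last step of the induction, namely showing that $v = u_n$ is not removed in the clean-up phase. By the induction, $v \in T$ after the first phase, and $\anc(v) \subseteq T$ with $\ell$-labels matching $\ell'$; moreover, the siblings $\sib(u_i)$ that make up the non-tree part of $\parent(v)$ also lie in $T$ (they were added together with the $u_i$'s), and one must check that their $\ell$-labels coincide with their $\ell'$-labels as well — which follows by the same no-collision argument applied at each internal ancestor. Thus $\ell_{\parent(v)} = \ell'_{\parent(v)}$, and so $\ell_v = \ell'_v = D(v,\ell'_{\parent(v)}) = D(v,\ell_{\parent(v)})$, meaning $v$ passes the leaf-consistency check and remains in $T$, contradicting $v \notin T$. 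This contradiction establishes the third claim.
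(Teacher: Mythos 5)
Your proof is correct, and it matches the reasoning the paper intends: the paper itself states Lemma~\ref{lem:extract} without a formal proof, merely as a summary of the behaviour of $\ext{D}{n}{\phi}$, so your write-up supplies exactly the missing details. In particular, you correctly locate the only place where the no-collision hypothesis is needed, namely to force uniqueness of the pair $(x,y)$ with $D(u,x,y)=\ell_u$ at each internal ancestor, which makes the extracted labels (of both children, hence also of the off-path siblings constituting $\parent(v)$) coincide with those of any alternative consistent labeling $\ell'$, so that $v$ would survive the final leaf-consistency check — yielding the contradiction for the third claim.
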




The last step is to bound the number of leaves in $T$. Another key argument in~\cite{cohen2018simple} uses a certain ``depth-robust'' property of $\gPoSW$ to show that for any subtree $T \subseteq \VId{n}$ with $\rt \in T$, there exists a path $\Path$ in $T$ with length $|\Path| \geq 2 \cdot |\leaves(T)| -2$. 
Recall we argued above that the graph labeling of a path $\Path \in \gPoSW$ induces a $|\Path|$-chain in $\Hash$. The same argument applies here to show that there exists a $|\Path|$-chain in $D$ since the extracted labels in $P\subseteq T$ are consistent (i.e., satisfying $D(v,\lab{\parent(v)}) = \lab{v}$). Combining these with the assumption that $D$ contains no $q+1$-chain, we have $|\leaves(T)| \leq (q+2)/2$. Therefore, the probability that $\mathcal{A}$ can open labels for a random challenge $\gamma = \{v_i\}_{i\in[t]}$ is at most 
$$\left(\frac{|\leaves(T)|}{2^n}\right)^t \leq \left(\frac{q+2}{2^{n+1}}\right)^t.$$

\begin{lemma}\label{lem:numberofleaves}
Let $D:\bool^{\leq B} \rightarrow \zo^w \cup \{\bot\}$ be a database with no $(q+1)$-chain. Let $T$ be a subtree of $\gPoSW$ admitting a consistent labeling with respect to $D$. Then, $|\leaves(T)| \leq (q+2)/2$. 
\end{lemma}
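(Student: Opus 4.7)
The plan is to combine two ingredients that have already been flagged in the surrounding discussion: a purely graph-theoretic depth-robustness property of the Simple PoSW DAG $\gPoSW$, and the path-to-chain conversion recorded in Remark \ref{rmk:path-chain}. The argument mirrors the classical sketch reviewed just above the lemma, so the main task is to isolate the combinatorial core and feed it into the no-$(q+1)$-chain assumption on $D$.

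The first step is to invoke the depth-robustness claim: for any subtree $T$ of $\gPoSW$ containing $\rt$, there exists a directed path $P = (v_0, v_1, \ldots, v_r)$ lying entirely in $T$ (using edges in $E'_n \cup E''_n$) whose length satisfies $r = |P| \geq 2\,|\leaves(T)| - 2$. This is the only graph-theoretic input needed, and it is proved by structural induction on the depth $n$: the base case $n=0$ is trivial since then $T = \{\rt\}$ and $|\leaves(T)| = 1$, while for the inductive step one applies the hypothesis separately to the left and right sub-subtrees $T_L, T_R$ (which are subtrees of copies of $G_{n-1}^{\mathsf{PoSW}}$) and splices the two inductively obtained paths into a single path through $\rt$, using a ``red'' sibling edge in $E''_n$ connecting an appropriate leaf-region vertex of $T_L$ to the entry point of the $T_R$-path. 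The edges in $E''_n$ are designed exactly so that this splicing is possible whenever both subtrees are non-empty, and the additive $+2|\leaves|-2$ form of the bound then matches up after summing over the two sides.

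The second step is to apply Remark \ref{rmk:path-chain} to the path $P$ together with the consistent labeling $\ell$ of $T$ guaranteed by the hypothesis of the lemma. This produces an $r$-chain $(x_0, x_1, \ldots, x_r)$ in $D$ with respect to the relation $\triangleleft$ of the remark, where $x_i = (v_i, \ell_{\parent(v_i)})$ and each step $D(x_{i-1}) \triangleleft x_i$ is witnessed by the label of $v_{i-1}$ appearing among the coordinates of $\ell_{\parent(v_i)}$; consistency of $\ell$ with $D$ (in the sense of Definition \ref{def:GraphLabelling}) is precisely what ensures $D(x_{i-1}) \neq \bot$ and $D(x_{i-1}) \triangleleft x_i$ for every $i$.

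The final step is immediate: since $D$ contains no $(q+1)$-chain, every chain in $D$ has length at most $q$, so $r \leq q$. Combining with the depth-robustness bound yields $2\,|\leaves(T)| - 2 \leq r \leq q$, i.e., $|\leaves(T)| \leq (q+2)/2$, as claimed. The main obstacle, and the only place requiring real work, is getting the inductive splicing in the depth-robustness claim to land on the sharp constant $2\,|\leaves(T)|-2$ rather than something weaker; the correctness of the splice relies delicately on the fact that the added edges in $E''_n$ connect the ``rightmost branch'' of each left subtree to vertices in the corresponding right subtree, and one must verify that the particular sibling edge needed for the splice does lie in $T$ (not merely in $\gPoSW$), which in turn uses that $T$ is closed under the ancestor-of-leaves structure.
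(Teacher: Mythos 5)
Your argument is exactly the one the paper uses: the depth-robustness bound $|\Path| \geq 2\,|\leaves(T)|-2$ for subtrees containing $\rt$ (which the paper simply imports from Cohen--Pietrzak rather than reproving), combined with the path-to-chain conversion of Remark~\ref{rmk:path-chain} and the no-$(q+1)$-chain hypothesis to get $2\,|\leaves(T)|-2 \leq q$. Your inductive splicing sketch is a reasonable reconstruction of the cited depth-robustness lemma, so the proof is correct and takes essentially the same route as the paper.
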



Finally, we briefly discuss here how to handle the case that $\mathcal{A}$ can make additional queries after sending $\phi$, as a similar argument is required in the analysis of the non-interactive Simple PoSW in the next section. As before, let $D$ be the database right after $\mathcal{A}$ has sent $\phi = \lab{\rt}$, but now $\mathcal{A}$ can make additional queries after seeing $\gamma$, which adds new entries to $D$ and may help $\mathcal{A}$ to open labels for more challenges $\gamma$.

The main observation to analyze whether additional queries are helpful is as follows.
Recall that $T$ contains all leaves $v$ that admit consistently labeled ancestors. Thus for the additional queries to be helpful, they must enlarge the extracted subtree $T$. More precisely, let $D'$ be the database after the additional queries and let $T'$ and $\ell'$ be extracted by $\ext{D'}{n}{\phi}$. It must be that $T \subsetneq T'$ and $\ell'|_T = \ell$, and there must exist $x$ with $D(x) = \bot$ while $D'(x) = \ell_v$ for some $v \in T$. This happens with probability at most $O(qk/2^w)$ for each query since $\ell$ has support size at most $O(qk)$. We capture the above crucial observation by means of the following formal statement, which, in this form, will then be useful in the security proof against quantum attacks. 


\begin{lemma}\label{lem:newpath}
Let $D:\bool^{\leq B} \rightarrow \zo^w \cup \{\bot\}$ be a database with no collisions (beyond $\bot$). Let $\phi \in \zo^w$ and $(T,\ell) =\ext{D}{n}{\phi}$. Furthermore, let $D' = D[\bfx \!\mapsto \! \bfu]$ and $(T',\ell') =\ext{D'}{n}{\phi}$, and let $v$ be a leave of $V_n$. If $v \in T' \setminus T$ then there exist $j \in \{1,\ldots,k\}$ and $z \in \anc(v)$ so that 
$D(x_j) \neq D'(x_j) = \ell'_z$. 
\end{lemma}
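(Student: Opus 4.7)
My plan is to prove Lemma~\ref{lem:newpath} by contradiction, leveraging the characterization of membership in $T$ and $T'$ provided by Lemma~\ref{lem:extract}: since $D$ has no collisions, a leaf $w$ of $V_n$ lies in $T$ if and only if there exists a labeling $\ell$ of $\anc(w)$ with $\ell_\rt = \phi$ and $\ell_u = D(u, \ell_{\parent(u)})$ for every $u \in \anc(w)$. Under this lens, I will assume toward contradiction that no such $j$ and $z$ exist, i.e.\ that for every $z \in \anc(v)$ and every $j \in \{1,\dots,k\}$, either $D(x_j) = D'(x_j)$ or $D'(x_j) \neq \ell'_z$, and aim to conclude $v \in T$.

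The central step is to show that the very labeling $\ell'$ witnessing $v \in T'$ also satisfies the $D$-consistency required by Lemma~\ref{lem:extract}. Since $v \in T'$, all of $\anc(v)$ together with their inward neighborhoods $\parent(u)$ lie in $T'$, so $\ell'_{\parent(u)}$ is well-defined and satisfies $\ell'_u = D'(u, \ell'_{\parent(u)})$ for every $u \in \anc(v)$. For each such $u$, I will consider $q := (u, \ell'_{\parent(u)})$ and split into two cases: if $q \notin \{x_1,\dots,x_k\}$, then $D$ and $D'$ agree at $q$, giving $D(q) = \ell'_u$; if $q = x_j$ for some $j$, then $D'(x_j) = \ell'_u$ combined with my contradiction hypothesis (using $u \in \anc(v)$) forces $D(x_j) = D'(x_j) = \ell'_u$. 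Either way, $D(u, \ell'_{\parent(u)}) = \ell'_u$ holds for every $u \in \anc(v)$, and applying the contrapositive of the second part of Lemma~\ref{lem:extract} yields $v \in T$, contradicting $v \in T' \setminus T$.

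The main subtlety I expect is verifying that $\ell'_{\parent(u)}$ really is well-defined for every $u \in \anc(v)$: the inward neighborhood $\parent(u)$ in $\gPoSW$ contains the two binary-tree children of $u$ together with the left siblings of each right-child ancestor of $u$, and these vertices are not a priori ancestors of $v$. What rescues the argument is that $v \in T'$ forces the entire relevant portion of $T'$, including $\parent(u)$ for every $u \in \anc(v)$, to be consistently labeled by $\ell'$ under the extraction from $D'$. Beyond this bookkeeping, the proof needs no induction and no no-collisions assumption on $D'$; the hypothesis that $D$ has no collisions enters only through the invocation of Lemma~\ref{lem:extract}.
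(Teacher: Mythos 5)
Your proof is correct and is essentially the paper's argument in contrapositive form: the paper directly exhibits an ancestor $z$ where $\ell'$ is $D'$-consistent but $D$-inconsistent (via Lemma~\ref{lem:extract}) and localizes the discrepancy to some $x_j$, while you assume no such pair exists and conclude $v\in T$ from the same two facts. The same key lemma and the same observation that $D$ and $D'$ agree outside $\bfx$ do all the work in both versions.
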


\begin{proof}
Given that $v \in T'$, the labeling $\ell'$ labels the ancestors of $v$ consistently with respect to $D'$, i.e., $\ell'_{z} = D'(z,\ell'_{\parent(z)})$ for all $z \in \anc(v)$. 
On another hand, since $v$ is not in $T$, 
it must be that $\ell'$ does {\em not} label the ancestors of $v$ consistently with respect to $D$, i.e., there must exist $z \in \anc(v)$ such that $D(z,\ell'_{\parent(z)}) \neq \ell'_{z} = D'(z,\ell'_{\parent(z)})$. Since $D$ and $D'$ differ only within $\bfx$, there exists $j \in \{1,\ldots,k\}$ with $x_j=(z,\ell'_{\parent(z)})$. 
\end{proof}

\subsection{Post-Quantum Security of Simple PoSW}

In this section, we prove post-quantum security of the (non-interactive) Simple PoSW protocol. As we shall see, relying on the framework we developed in Section~\ref{sec:QTC_framework}, the proof uses {\em purely classical reasoning} only, and somewhat resembles the arguments in the classical analysis.  

\begin{theorem}[Post-Quantum Simple PoSW Security]
\label{thm:posw_security}
Consider the Simple PoSW protocol with parameters $w, t$ and $N = 2^{n+1}-1$ with $w\ge tn$. Let $\tilde{\Prover}$ be a $k$-parallel $q$-query quantum oracle algorithm acting as a prover. The probability $p$ that  $\tilde{\Prover}$ can make the verifier $\mathcal{V}$ accept is at most 
$$
p = O\left(k^2q^2\left(\frac{q+2}{2^{n+1}}\right)^t+\frac{k^3q^3n}{2^w}+\frac{tn}{2^w}\right). 
$$
\end{theorem}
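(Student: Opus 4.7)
\medskip

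\noindent\textbf{Proof plan.} The plan is to fit the classical analysis of Section~\ref{sec:classical_PoSW} into the framework of Section~\ref{sec:QTC_framework} by identifying three ``bad'' database properties whose transition capacities recover the three summands of the claimed bound. Concretely, I will use Theorem~\ref{thm:QTCBound2} with a chain of properties built from three ingredients: the collision property $\col$ (Example~\ref{ex:collision_bound}), the PoSW chain property $\chain^s$ (Example~\ref{ex:chain_bound} with the relation $\triangleleft$ from Remark~\ref{rmk:path-chain}, giving $T=n$), and a success property $\suc$ defined below. The Zhandry correction term in Theorem~\ref{thm:QTCBound2}, with $\ell = tn+1$ (the verifier performs one challenge query and $tn$ label queries to check the authentication paths), contributes the $tn/2^w$ summand.

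\noindent\textbf{Step 1: Properties and setup.} Define $\suc \subseteq \DB$ to consist of all databases $D$ for which there exist $\phi \in \bool^w$ and $\tau$ such that $\V^D(\chi,N,\phi,\Hash_\chi^{\mathsf{ChQ}}(\phi),\tau) = \mathsf{accept}$; equivalently, writing $(T_\phi,\ell^\phi) = \ext{D}{n}{\phi}$, there is $\phi$ with $D(\phi) \neq \bot$ and $D(\phi) \in \leaves(T_\phi)^t$. For $s = 0,\dots,q$ set
$$
\P_s := \col \cup \chain^{s+1} \cup \suc,
$$
so that $\neg\P_s = \neg\col \cap \neg\chain^{s+1} \cap \neg\suc$, in particular $\boldsymbol{\bot} \in \neg\P_0$. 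Applying Theorem~\ref{thm:QTCBound2} with $\P^R = \suc$ and this sequence (restricted to $\SIZE[ks]$ as per Remark~\ref{rem:size_condition}), we obtain
$$
\sqrt{p} \;\leq\; \sum_{s=1}^q \QTC{\neg\P_{s-1}}{\P_s}{k} \;+\; \max_{\bfx'}\QTC{\neg\P^R_{\bfx'}}{\P^R_{\bfx'}}{tn+1} \;+\; \sqrt{\tfrac{tn+1}{2^w}}\, .
$$

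\noindent\textbf{Step 2: Decomposing each transition.} By Lemma~\ref{lem:shrink},
$$
\QTC{\neg\P_{s-1}}{\P_s}{k} \;\leq\; \QTC{\neg\P_{s-1}}{\col}{k} \;+\; \QTC{\neg\P_{s-1}}{\chain^{s+1}}{k} \;+\; \QTC{\neg\P_{s-1}}{\suc}{k}\, .
$$
The first two terms are already handled by the examples in Section~\ref{sec:QTC_framework}: summing the collision bound from Example~\ref{ex:collision_bound} and the chain bound from Example~\ref{ex:chain_bound} (with $T=n$) over $s$ yields contributions of order $k^2q^3/2^w$ and $k^3q^3n/2^w$ respectively, which together absorb into the $k^3q^3n/2^w$ summand of the theorem.

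\noindent\textbf{Step 3: The success transition.} The key new work is to bound $\QTC{\neg\P_{s-1}}{\suc}{k}$, where we may restrict $D \in \SIZE[k(s-1)] \cap \neg\col \cap \neg\chain^{s+1} \cap \neg\suc$. Using Lemma~\ref{lem:ParCond} I will split the query set $\Xcal = \mathsf{LbQ} \cup \mathsf{ChQ}$ and argue non-uniform weak recognizability (Definition~\ref{def:NonunifWeakRecL}) separately for label and challenge queries. For a label query, Lemma~\ref{lem:newpath} (which crucially uses $\neg\col$) shows that enlarging the extracted tree for \emph{any} root $\phi$ forces some $u_j$ to equal an existing label $\ell'_z$; since $D$ has at most $ks$ entries and each involves at most $n$ labels, the relevant $1$-local property $\L_i^{\bfx,D}$ has size $O(nks)$, and Theorem~\ref{thm:tricky_restricted} yields $\condQTC{\neg\P_{s-1}}{\suc}{k}{\mathsf{LbQ}} = O(k\sqrt{nks/2^w})$, which is subsumed by the chain contribution. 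For a challenge query $x_j = \phi$, set $\L_j^{\bfx,D} := \leaves(T_\phi)^t \subseteq \bool^{tn}$; Lemma~\ref{lem:numberofleaves} (using $\neg\chain^{s+1}$) gives $|\leaves(T_\phi)| \leq (q+2)/2$, so $P[U \in \L_j^{\bfx,D}] \leq ((q+2)/2^{n+1})^t$, and Theorem~\ref{thm:tricky_restricted} (with $\ell=1$) gives $\condQTC{\neg\P_{s-1}}{\suc}{k}{\mathsf{ChQ}} = O\bigl(k\sqrt{((q+2)/2^{n+1})^t}\bigr)$. Summing over $s$ and squaring yields the $k^2 q^2 ((q+2)/2^{n+1})^t$ summand.

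\noindent\textbf{Step 4: The extra $\QTC{\neg\P^R_{\bfx'}}{\P^R_{\bfx'}}{tn+1}$ term} is bounded analogously by the same kind of recognizability argument for the (classical, deterministic) verifier queries, and is dominated by the three main summands. Combining all contributions and squaring gives the claimed $O(\cdot)$ bound.

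\noindent\textbf{Main obstacle.} The subtle point is the success transition, because $\suc$ is a \emph{global} property defined via the extraction algorithm rather than a local one, and a single $k$-parallel query can potentially both extend the tree (via label queries) \emph{and} fix a challenge response (via a challenge query). The proof of non-uniform weak recognizability for $\neg\P_{s-1} \to \suc$ must therefore carefully combine Lemmas~\ref{lem:extract}, \ref{lem:numberofleaves}, and \ref{lem:newpath} to express ``newly achieved success'' as an alternation of $1$-local conditions on the query responses $u_j$, one branch per query type; the use of $\neg\col$ is essential here to guarantee uniqueness of the extracted tree $T_\phi$ and hence a well-defined $\L_j^{\bfx,D}$.
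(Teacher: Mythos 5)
Your proposal follows essentially the same route as the paper: the same chain of properties $\P_s = \suc\cup\col\cup\chain^{s+1}$, the same split of each parallel query into $\mathsf{LbQ}$ and $\mathsf{ChQ}$ via the parallel-conditioning calculus, and the same $1$-local properties (label values hitting $\lsupp$, resp.\ challenge responses landing in $\leaves(T_\phi)^t$) justified by Lemmas~\ref{lem:extract}, \ref{lem:numberofleaves} and \ref{lem:newpath}. The one place where you deviate is the entry point: the paper invokes Corollary~\ref{cor:zha} (as in Theorem~\ref{thm:QTCBound}, with $\ell = t(n+1)+1$), which is available because the prover's opening $\tau$ already contains all the claimed hash values, so there is no need for Theorem~\ref{thm:QTCBound2}. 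Your use of Theorem~\ref{thm:QTCBound2} is not wrong, but it introduces the additional term $\max_{\bfx'}\QTC{\neg\P^R_{\bfx'}}{\P^R_{\bfx'}}{tn+1}$, which you assert is ``dominated'' without bounding it; since $\P^R_{\bfx'}$ here mixes label-consistency checks with the challenge-query check, this term would need its own recognizability argument (analogous to your Step~3) to be discharged, so as written this step is incomplete, though avoidable by switching to the paper's entry point. The remaining minor discrepancies (the index $\neg\chain^{s+1}$ versus $\neg\chain^{s}$ for the source property, and $tn+1$ versus $t(n+1)+1$ for the verifier's query count) do not affect the asymptotic bound.
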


The first step towards the proof is to invoke Corollary~\ref{cor:zha} (using the notation from Theorem~\ref{thm:QTCBound}), which, in the case here, bounds the success probability $p$ of a dishonest prover $\tilde{\Prover}$ by 
$$
\sqrt{p} \leq \qQTC{{\boldsymbol \bot}}{\P^R}{k}{q}+\sqrt{\frac{t\cdot (n+1)+1}{2^w}} \, ,
$$
where $R$ is the relation that checks correctness of $\tilde{\Prover}$'s output according to the scheme. In the following, we write $\suc:=\P^R$ and $\fail=\neg\suc$. 
Also, recall the database properties $\col$, $\SIZE[s]$ and $\chain^s$ defined previously, where the latter is with respect to the hash chain relation $\triangleleft$ considered in Remark~\ref{rmk:path-chain}. 
By the properties of (the subtree extracted with) $\ext{D}{n}{\cdot}$, we have  
\begin{equation}\label{eq:suc}
\suc \setminus \col = \bigl\{D \in\nocol \,\big|\, \exists \, \ell_\rt \in \bool^w \text{ s.t. } D^\mathsf{ChQ}(\ell_\rt) \subseteq \ext{D}{n}{\ell_\rt} \bigr\} \, .
\end{equation}

To bound $\qQTC{{\boldsymbol \bot}}{\P^R}{k}{q} = \qQTC{{\boldsymbol \bot}}{\suc}{k}{q}$, we consider database properties $\P_0,\dots,\P_q$ with $\P_0 = {\boldsymbol \bot}$ and $\P_s =\suc\cup\col\cup\chain^{s+1}$ for $1\leq s \leq q$. 
Using Lemma~\ref{lem:tech}, Remark~\ref{rem:size_condition} and Corollary~\ref{cor:subset}, 
$$
\qQTC{{\boldsymbol \bot}}{\suc}{k}{q} 
\leq \sum_{1\leq s\leq q} \QTC{\SIZE[k(s-1)]\backslash \P_{s-1}}{\P_s}{k} \, .
$$
Thus, the proof of Theorem~\ref{thm:posw_security} follows immediately from the following bound on the considered transition capacity. 

\begin{proposition}
For integers $0 \leq s \leq q$, and for the database properties $\P_0,\dots,\P_q$ as defined above 
$$
\QTC{\SIZE[k(s-1)]\backslash \P_{s-1}}{\P_s}{k} \leq 4ek\sqrt{10\frac{q+1}{2^w}} + 3ek\sqrt{\frac{10kqn}{2^w}} + ek\sqrt{10\left(\frac{q+2}{2^{n+1}}\right)^t} \, .
$$
\end{proposition}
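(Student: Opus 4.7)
The plan is to decompose the target property $\P_s = \suc \cup \col \cup \chain^{s+1}$ along three ``branches'' and bound the resulting transition capacities separately, using the conditioning and recognizability tools of Section~\ref{sec:QTC_framework}. More specifically, I would first apply Corollary~\ref{cor:parallel_condition} (parallel conditioning with $\Q = \nocol$) to absorb the collision contribution and then restrict to the no-collision branch, and then use Lemma~\ref{lem:shrink} once more with $\Q = \neg\chain^{s+1}$ so that the remaining task is to analyze the transition into $\suc$ under the assumption that the database stays in $\nocol$ and does not acquire an $(s+1)$-chain. This would yield the three natural summands of the claimed bound: a collision term, a chain-extension term, and a ``success'' term; the first two can be handled by direct appeal to Examples~\ref{ex:collision_bound} and \ref{ex:chain_bound}, while the third requires the extraction analysis from Lemmas~\ref{lem:extract}--\ref{lem:newpath}.

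\textbf{Collision and chain terms.} For the collision transition $\SIZE[k(s-1)]\!\setminus\!\P_{s-1} \to \col$, Example~\ref{ex:collision_bound} (applied with $M = 2^w$ and noting that the database size is at most $k(s-1) \leq k(q-1)$) gives a bound of shape $2ek\sqrt{10(q+1)/2^w}$. The remaining factor of two in the $4ek\sqrt{\,\cdot\,}$ term of the proposition should come from an additional collision-like contribution needed to insert the ``conditioning projection" $\nocol$ inside the parallel-conditioning decomposition (i.e.\ the $\sum\QTC{\neg\P_0}{\neg\Q}{}{\,\cdot\,}$ part of Corollary~\ref{cor:parallel_condition}). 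For the chain transition $\neg\chain^s \to \chain^{s+1}$ (restricted to $\nocol$ and to databases of bounded size), apply Example~\ref{ex:chain_bound} / Theorem~\ref{thm:tricky} with the PoSW relation from Remark~\ref{rmk:path-chain}; there $T = \max_x |\{y : y \triangleleft x\}| = n$ since every $x = (v,\ell_1,\ldots,\ell_d)$ has $d \leq n$. This gives the central $ek\sqrt{10kqn/2^w}$ contribution; the additional copies (giving the $3ek$ coefficient) should arise, one each, from (i) an analogous hash-chain--like local property needed to handle the success transition via label queries (see next paragraph), and (ii) the overhead introduced by an additional layer of parallel conditioning.

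\textbf{The success term.} For the transition $\,\nocol \cap \neg\chain^{s+1} \!\setminus\! \P_{s-1} \to \nocol \cap \neg\chain^{s+1} \cap \suc\,$ (the only remaining branch after peeling off $\col$ and $\chain^{s+1}$), I would split the query vector by input type using Corollary~\ref{cor:parallel_condition} with $X_1 = \mathsf{LbQ}$ and $X_2 = \mathsf{ChQ}$. For the \emph{label-query} subsystem, Lemma~\ref{lem:newpath} provides exactly the non-uniform weak recognizability required by Theorem~\ref{thm:tricky_restricted}: whenever a label query $x_j$ causes the extracted subtree to gain a new leaf (and hence potentially moves the database into $\suc$), one must have $u_j = \ell'_z$ for some ancestor label in the new tree. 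Under $\nocol \cap \neg\chain^{s+1}$, Lemmas~\ref{lem:extract} and \ref{lem:numberofleaves} ensure that the set of such candidate labels has size at most $|\leaves(T)|\cdot n \leq (q+2)n/2$, so the associated $1$-local property has probability bounded by $O(qn/2^w)$, matching the $ek\sqrt{10kqn/2^w}$ shape. For the \emph{challenge-query} subsystem, equation~(\ref{eq:suc}) identifies $\suc|_{\nocol}$ with the $1$-local property ``the queried challenge output $D^{\mathsf{ChQ}}(\ell_\rt)$ is contained in $\leaves(T)$"; this is uniformly strongly recognizable in the sense of Theorem~\ref{thm:simple}, and since $|\leaves(T)|^t/2^{nt} \leq ((q+2)/2^{n+1})^t$ by Lemma~\ref{lem:numberofleaves}, this yields the final $ek\sqrt{10((q+2)/2^{n+1})^t}$ summand.

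\textbf{Main obstacle.} The routine parts are the collision bound and the chain bound, which are essentially off-the-shelf. The conceptually delicate step is turning the extraction analysis (Lemmas~\ref{lem:extract}--\ref{lem:newpath}) into the form demanded by Definition~\ref{def:NonunifWeakRecL}, i.e.\ specifying, for each $(\bfx,D)$ with $D \in \nocol \cap \neg\chain^{s+1}$, a family of $1$-local properties $\L_i^{\bfx,D}$ whose supports lie in $\{x_1,\ldots,x_k\}$ and which faithfully ``witness" every transition into $\suc$. In particular, one must verify that the candidate-label set produced by the extraction is determined by $D$ alone (not by the new values $\bfu$), so that $\L_i^{\bfx,D}$ is genuinely a property of a single coordinate of the update; this is precisely what Lemma~\ref{lem:newpath} provides, but carefully packaging it\,---\,together with the challenge-query case\,---\,into an application of the parallel conditioning corollary is the heart of the proof.
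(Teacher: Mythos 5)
Your proposal follows essentially the same route as the paper: decompose $\P_s=\suc\cup\col\cup\chain^{s+1}$ by conditioning on $\Q=\neg(\col\cup\chain^{s+1})$, recycle the collision and chain bounds, split the residual success transition by query type into $\mathsf{LbQ}$ and $\mathsf{ChQ}$, and analyze those branches via Lemmas~\ref{lem:extract}--\ref{lem:newpath} together with Theorem~\ref{thm:tricky_restricted}. The paper performs the $\Q$-conditioning and the $\mathsf{LbQ}/\mathsf{ChQ}$ split in a \emph{single} application of Corollary~\ref{cor:parallel_condition} with $h=2$, which is where the factor $2$ on the collision/chain capacities\,---\,hence the coefficient $4ek$ and two of the three copies in $3ek$\,---\,comes from; your accounting of the coefficients is consistent with this.

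Two points need fixing in precisely the step you single out as the heart of the proof. First, for the label-query branch, your candidate set ``labels of ancestors in the extracted tree, of size $|\leaves(T)|\cdot n$'' is not determined by $D$ and $\bfx$ alone: Lemma~\ref{lem:newpath} gives $u_j=\ell'_z$ where $\ell'$ is extracted from the \emph{updated} database $D[\bfx\mapsto\bfu]$, so this set depends on $\bfu$ and cannot serve as the $1$-local property $\L_j^{\bfx,D}$ required by Definition~\ref{def:NonunifWeakRecL}. The paper's fix is to enlarge to the full labeling support $\lsupp(D_{\bfx,\bf 0})$ (all strings occurring as a label block of a defined label query, or as a queried challenge root), which depends only on where $D$ is defined and has size at most $nkq$; this is also what produces the $\sqrt{10kqn/2^w}$ term rather than your $O(qn/2^w)$. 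Second, the challenge-query branch is not uniformly \emph{strongly} recognizable: a database can lie in $\suc$ via some root label $\ell_\rt$ that is not among the queried challenge points, so $\suc|_{D|^\bfx}\not\subseteq\bigcup_j\L_j^{\bfx,D}$ and condition (\ref{eq:UnifStrongRec}) fails. What holds, and what the paper verifies, is non-uniform \emph{weak} recognizability\,---\,the implication is only needed when the pre-update database lies in $\fail$\,---\,after which Theorem~\ref{thm:tricky_restricted} yields exactly the $ek\sqrt{10\left((q+2)/2^{n+1}\right)^t}$ term in the statement.
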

Intuitively, we consider the transition from a database that is bounded in size, has no collision, no $s$-chain and does not have a successful output for $\tilde{\Prover}$, into one that contains a collision or an $(s+1)$-chain or a successful output for $\tilde{\Prover}$. 

\begin{proof}
Applying Corollary~\ref{cor:parallel_condition} with $h:=2$, $X_1:=\mathsf{LbQ}$ and $X_2:=\mathsf{ChQ}$, and with $\P_0,\P_1,\P_2$ and $\Q$ in Corollary~\ref{cor:parallel_condition} set to%
\footnote{Note that we have slight collision of notation here: $\P_0,\P_1,\P_2$ correspond to the choice of properties for applying Corollary~\ref{cor:parallel_condition}, and should not be confused with $\P_s$ with $s$ set to $0,1,2$, respectively.  }
$$
\neg \P_0 := \SIZE[k(s-1)]\setminus\P_{s-1} \: , \quad
\P_1:=\suc \: , \quad  
\P_2:=\suc\cup\col\cup\chain^{s+1} = \P_s \quad\text{and}\quad 
\Q:=\neg(\col\cup\chain^{s+1}) \, ,
$$
we can bound $\QTC{\SIZE[k(s-1)]\backslash\P_{s-1}}{\P_s}{k} = \QTC{\neg \P_0}{\P_2}{k}$ by
\begin{align*}
&\leq \condQTC{\neg\P_0}{\neg\Q}{k}{\mathsf{LbQ}} + \condQTC{\neg\P_0}{\neg\Q}{k}{\mathsf{ChQ} \cup\mathsf{LbQ} } 
+ \condQTC{\Q \backslash \P_0}{\Q \cap \P_1}{k}{\mathsf{LbQ}}  + \condQTC{\Q \backslash \P_1}{\Q \cap \P_2}{k}{\mathsf{ChQ}}  \\
&\leq 2 \QTC{\neg\P_0}{\neg\Q}{k} + \condQTC{\Q \backslash \P_0}{\Q \cap \P_1}{k}{\mathsf{LbQ}}  + \condQTC{\Q \backslash \P_1}{\Q \cap \P_2}{k}{\mathsf{ChQ}} \\
&= 2 \QTC{\SIZE[k(s-1)]\backslash\P_{s-1}}{\col\cup\chain^{s+1}}{k} + \condQTC{\SIZE[k(s-1)]\backslash\P_{s-1}\backslash\col\backslash\chain^{s+1}}{\suc\backslash\col\backslash\chain^{s+1}}{k}{\mathsf{LbQ}}  \\
&\qquad+ \condQTC{\neg(\suc\cup\col\cup\chain^{s+1})}{\suc\backslash\col\backslash\chain^{s+1}}{k}{\mathsf{ChQ}}   \\
&\leq 2 \QTC{\SIZE[k(s-1)]\backslash\P_{s-1}}{\col\cup\chain^{s+1}}{k}  + \condQTC{\SIZE[k(s-1)]\backslash\P_{s-1}}{\suc\backslash\col}{k}{\mathsf{LbQ}}  + \condQTC{\neg \P_s}{\suc\backslash\col}{k}{\mathsf{ChQ}} \, . 
\end{align*}

By means of Lemma~\ref{lem:shrink} (and Corollary~\ref{cor:subset}), and recalling that $\P_{s-1} = \suc \cup \col \cup \chain^s$, the first capacity in the term can be controlled as
\begin{align*}
    \QTC{\SIZE[k(s-1)]\backslash\P_{s-1}}{\col\cup\chain^{s+1}}{k} 
    &\leq \QTC{\SIZE[k(s-1)]\backslash\P_{s-1}}{\col}{k} + \QTC{\SIZE[k(s-1)]\backslash\P_{s-1}}{\chain^{s+1}}{k} \\ 
    &\leq \QTC{\SIZE[k(s-1)]\backslash\col}{\col}{k} + \QTC{\SIZE[k(s-1)]\backslash\chain^s}{\chain^{s+1}}{k} \\ 
    &\leq 2ek\sqrt{10\frac{q+1}{2^w}}+ek\sqrt{\frac{10kqn}{2^w}}
\end{align*}
using earlier derived bounds. It remains to bound the remaining two capacities appropriately, which we do below. 
\end{proof}

Intuitively, $\condQTC{\neg \P_s}{\suc\backslash\col}{k}{\mathsf{ChQ}}$ captures the likelihood that a database $D \not\in \suc$ (and with no collision and chain) is turned into one that does satisfy $\suc$ by (re)defining $D$ on $k$ values that correspond to challenge queries. For this to happen, one of the newly defined function values of $D$, corresponding to a challenge query and thus specifying a set of leaves, must ``hit" the set of leaves that can be answered, which is bounded in size. 


\begin{lemma}\label{lem:QTC_ChQ}
For any positive integer $q$, it holds that  
$\condQTC{\neg\P_{q}}{\suc\backslash\col}{k}{\mathsf{ChQ}} \leq ek \cdot \sqrt{10\left(\frac{q+2}{2^{n+1}}\right)^t}$. 
\end{lemma}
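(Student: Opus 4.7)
The plan is to apply Theorem~\ref{thm:tricky_restricted} to the database transition $\neg\P_q \to \suc\setminus\col$ restricted to $X = \mathsf{ChQ}$, by identifying, for every $\bfx = (x_1,\dots,x_k) \in \mathsf{ChQ}^k$ with pairwise distinct entries and every $D \in \DB$, an appropriate family of $1$-local properties $\L_i^{\bfx,D}$ with support $\{x_i\}$. Concretely, I would set (as a subset of $\bar\Ycal = \bool^w \cup \{\bot\}$)
$$
\L_i^{\bfx,D} \,:=\, \bigl\{ u \in \bool^w \,\bigl|\, \text{parsing $u$ as $t$ leaves, all $t$ leaves lie in } \leaves\!\bigl(\ext{D}{n}{x_i}\bigr) \bigr\} \, ,
$$
with the convention that $\L_i^{\bfx,D}$ is constant-false when $D$ has a collision (so that the recognizability condition is trivially satisfied). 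The key structural observation is that challenge queries lie in $\mathsf{ChQ}$ while the extraction procedure $\ext{\cdot}{n}{\cdot}$ only inspects label queries in $\mathsf{LbQ}$; hence $\ext{D[\bfx\!\mapsto\!\bfu]}{n}{\phi} = \ext{D}{n}{\phi}$ for every $\phi$ and every $\bfu \in \bar\Ycal^k$.

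Next, I would verify weak recognizability as in Definition~\ref{def:NonunifWeakRecL}. Assume $D[\bfx\!\mapsto\!\bfr] \in \neg\P_q = \neg\suc \cap \nocol \cap \neg\chain^{q+1}$ and $D[\bfx\!\mapsto\!\bfu] \in \suc\setminus\col$. By (\ref{eq:suc}) applied to $D[\bfx\!\mapsto\!\bfu]$, there exists some $\phi \in \bool^w$ with $D[\bfx\!\mapsto\!\bfu]^{\mathsf{ChQ}}(\phi) \subseteq \leaves(\ext{D[\bfx\!\mapsto\!\bfu]}{n}{\phi})$. Since $D[\bfx\!\mapsto\!\bfr]$ agrees with $D[\bfx\!\mapsto\!\bfu]$ outside $\bfx$ and has the same extracted subtree (both equal the subtree extracted from $D$), and since $D[\bfx\!\mapsto\!\bfr] \notin \suc$, the value $D[\bfx\!\mapsto\!\bfu]^{\mathsf{ChQ}}(\phi)$ must differ from $D[\bfx\!\mapsto\!\bfr]^{\mathsf{ChQ}}(\phi)$; in particular $\phi = x_i$ for some $i$, and $u_i \neq r_i$. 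Furthermore, by construction $u_i \in \L_i^{\bfx,D}$, which establishes (\ref{eq:NonunifRec}).

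It then remains to bound $P[U \in \L_i^{\bfx,D}]$ for a uniform $U \in \bool^w$. Since $D \in \neg\chain^{q+1}$ and has no collision, Lemma~\ref{lem:extract} shows that $\ext{D}{n}{x_i}$ produces a subtree $T$ of $\gPoSW$ admitting a consistent labeling with respect to $D$, and Lemma~\ref{lem:numberofleaves} then gives $|\leaves(T)| \leq (q+2)/2$. A uniformly random $u \in \bool^w$ parses (via its first $tn$ prefix bits, using $w \geq tn$) to a tuple of $t$ uniformly random leaves in $\leaves(V_n) = \bool^n$, so
$$
P\bigl[U \in \L_i^{\bfx,D}\bigr] \,\leq\, \biggl(\frac{|\leaves(T)|}{2^n}\biggr)^{\!t} \,\leq\, \biggl(\frac{q+2}{2^{n+1}}\biggr)^{\!t} .
$$

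Finally, applying Theorem~\ref{thm:tricky_restricted} with $X = \mathsf{ChQ}$ yields
$$
\condQTC{\neg\P_q}{\suc\setminus\col}{k}{\mathsf{ChQ}} \,\leq\, \max_{\bfx,D} \, e \sum_{i=1}^{k} \sqrt{10\, P\bigl[U \in \L_i^{\bfx,D}\bigr]} \,\leq\, e k \sqrt{10 \biggl(\frac{q+2}{2^{n+1}}\biggr)^{\!t}} \, ,
$$
as claimed. The only delicate point is the classical observation that a challenge query can turn $D$ into a successful database only by ``hitting'' the bounded set $\leaves(\ext{D}{n}{x_i})$; once this is identified as a $1$-local necessary condition on the fresh value $u_i$, the framework of Section~\ref{sec:QTC_framework} mechanically lifts the bound to the quantum setting.
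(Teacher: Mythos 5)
Your proof is correct and follows essentially the same route as the paper's: the same $1$-local properties given by the leaves of the extracted subtree, the same key observation that $\ext{\cdot}{n}{\cdot}$ ignores challenge queries so the extraction is unchanged by resampling $\bfx \in \mathsf{ChQ}^k$, the same identification of the successful root label with some $x_j$, the same use of Lemmas~\ref{lem:extract} and~\ref{lem:numberofleaves} to bound $P[U\in\L_j^{\bfx,D}]$, and the same application of Theorem~\ref{thm:tricky_restricted}. The only (minor) imprecision is tying the constant-false convention to ``$D$ has a collision'': the right trigger is that no $D[\bfx\!\mapsto\!\bfr]$ lies in $\neg\P_q$ (e.g.\ the label-query part of $D$ is bad), since a collision involving only challenge-query entries of $\bfx$ can be removed by resampling, in which case your main construction—not the constant-false one—is the valid choice.
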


\begin{proof}
For convenience, we will denote $D[\bfx\!\mapsto\!\bfy]$ by $D_{\bfx,\bfy}$. 
In order to bound the above capacity, we define $1$-local properties $\L_j^{\bfx, D}$ and show that $\L_j^{\bfx,D}$ (weakly) recognize the considered transition (with input restricted to $\mathsf{ChQ}$). 

For any $D$ and $\bfx = (\ell_\rt^1,\ldots,\ell_\rt^k) \in \mathsf{ChQ}^k$, we set
$$
\L^{\bfx, D}_j:=\left\{D_\circ \in D|^{\bfx}\,\middle|\, D_\circ^\mathsf{ChQ}(x_j)\subseteq\leaves\Bigl(\ext{D_{\bfx,\perp}}{n}{\ell_\rt^j}\Bigr)\right\}
$$ 
Suppose $D_{\bfx,\bfr}\in\neg\P_q=\fail\setminus\col\setminus\chain^{q+1}$ but $D_{\bfx,\bfu}\in\suc\setminus\col$. 
Thus, by (\ref{eq:suc}), 
there exists $\lab\rt\in\bool^w$ with 
\begin{equation}{\label{eq:Cl+}}
    D_{\bfx,\bfu}^\mathsf{ChQ}(\lab\rt)\subseteq \mathsf{leaves}\Bigl(\ext{D_{\bfx,\bfu}}{n}{\lab\rt}\Bigr) \, ,
\end{equation}
while
\begin{equation}{\label{eq:Cl-}}
    D_{\bfx,\bfr}^\mathsf{ChQ}(\lab\rt)\not\subseteq \leaves\Bigl(\ext{D_{\bfx, \bfr}}{n}{\lab\rt}\Bigr).
\end{equation}
Since the output of the extraction procedure $\ext{D}{n}{\cdot}$ only depends on those function values of $D$ that correspond to {\em label} queries ($\bfx$ here consists of challenge queries), 
we have
$$
\ext{D_{\bfx, \bfr}}{n}{\lab\rt} = \ext{D_{\bfx, \perp}}{n}{\lab\rt} = \ext{D_{\bfx, \bfu}}{n}{\lab\rt}.
$$
If $\lab\rt$ is different from all $\lab\rt^j$, then equations $(\ref{eq:Cl+})$ and $(\ref{eq:Cl-})$ contradict. So there is some $j$ such that $\lab\rt^j = \lab\rt$. 
Equations $(\ref{eq:Cl+})$ and $(\ref{eq:Cl-})$ thus become
\begin{align*}
    u_j\subseteq \mathsf{leaves}\Bigl(\ext{D_{\bfx, \perp}}{n}{\lab\rt}\Bigr) \qquad\text{and}\qquad
    r_j\not\subseteq \mathsf{leaves}\Bigl(\ext{D_{\bfx, \perp}}{n}{\lab\rt}\Bigr),
\end{align*}
understanding that $u_j$ and $r_j$ represent lists/sets of $t$ (challenge) leaves.
Hence $r_j\neq u_j$. This concludes that $\L_j^{\bfx,D}$ indeed weakly recognizes the considered database transition.

We note that, for each $\bfx\in\mathsf{ChQ}^k$ and $D\in\fail\setminus\col\setminus\chain^{q+1}$, since the longest hash chain in $D$ is of length no more than $q$ and $T:= \ext{D_{\bfx,\perp}}{n}{\ell_\rt^j}$ admits a consistent labeling (Lemma~\ref{lem:extract}), it follows from  Lemma~\ref{lem:numberofleaves} that
$$
\Big|\mathsf{leaves}\Bigl(\ext{D_{\bfx,\perp}}{n}{\ell_\rt^j}\Bigr)\Big|\leq\frac{q+2}{2} \, .
$$
Therefore,
$$
P\bigl[U\in\L_j^{\bfx,D}\bigr] \leq \left(\frac{\leaves\bigl(\ext{D_{\bfx,\perp}}{n}{\lab\rt^j}\bigr)}{2^n}\right)^t\leq\left(\frac{q+2}{2^{n+1}}\right)^t \, ,
$$
and so the claimed bound follows by applying Theorem~\ref{thm:tricky_restricted}. 
\end{proof}

Similarly here, the intuition is that $\condQTC{\neg \P_s}{\suc\backslash\col}{k}{\mathsf{LbQ}}$ captures the likelihood that a database $D \not\in \suc$ (and with no collision and chain) is tuned into one that does satisfy $\suc$ by (re)defining $D$ on $k$ values that correspond to label queries. For this to happen, one of the newly defined function values of $D$, corresponding to a label, must ``match up" with the other labels. 


\begin{lemma}\label{lem:QTC_LbQ}
For any positive integer $q$, it holds that $\condQTC{\SIZE[k(q-1)]\backslash\P_{q-1}}{\suc\backslash\col}{k}{\mathsf{LbQ}} \leq  ek \sqrt{\frac{10nkq}{2^w}}.$
\end{lemma}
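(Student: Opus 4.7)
My plan is to apply Theorem~\ref{thm:tricky_restricted} with a family $\{\L_j^{\bfx,D}\}_{j\in[k]}$ of $1$-local properties designed around Lemma~\ref{lem:newpath}. For each $\bfx = (x_1,\ldots,x_k) \in \mathsf{LbQ}^k$ and reference database $D$, I would define $\L_j^{\bfx,D} \subseteq \bool^w$, with support $\{x_j\}$, to consist of (i) every $w$-bit string appearing as a label component of some label-query input in $\supp(D)\cup\{x_1,\ldots,x_k\}$, together with (ii) every queried commitment $\phi$ with $D^\mathsf{ChQ}(\phi)\neq\bot$. Because each label-query input $(v,\ell_{\parent(v)})$ contributes $|\parent(v)| \leq n$ label components and there are at most $|D|+k \leq kq$ such inputs in $\supp(D)\cup\bfx$, the size of $\L_j^{\bfx,D}$ is $O(nkq)$; hence, for $U$ uniform on $\bool^w$, $P[U\in\L_j^{\bfx,D}] \leq O(nkq/2^w)$.

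To verify (non-uniform) weak recognition, I would proceed as follows. Suppose $D_\circ = D[\bfx\!\mapsto\!\bfr] \in \SIZE[k(q-1)]\setminus\P_{q-1}$ and $D' = D[\bfx\!\mapsto\!\bfu] \in \suc\setminus\col$. Since $\bfx\subseteq\mathsf{LbQ}$, label queries do not alter $D^\mathsf{ChQ}$-values, and~(\ref{eq:suc}) produces some $\phi^*$ with $D^\mathsf{ChQ}(\phi^*) = D_\circ^\mathsf{ChQ}(\phi^*) = D'^\mathsf{ChQ}(\phi^*)\neq\bot$ and some leaf $v^*$ in this common set lying in $T' := \ext{D'}{n}{\phi^*}$ but not in $T := \ext{D_\circ}{n}{\phi^*}$ (using $D_\circ\notin\suc$). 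Because $D_\circ\notin\col$, Lemma~\ref{lem:newpath} applies to $v^*$, yielding some $j^* \in [k]$ and $z^* \in \anc(v^*)$ with $r_{j^*}\neq u_{j^*} = \ell'_{z^*}$.

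It remains to show $u_{j^*} = \ell'_{z^*} \in \L_{j^*}^{\bfx,D}$. If $z^* = \rt$ then $\ell'_{z^*} = \phi^*$ lies in part~(ii). Otherwise, $\ell'_{z^*}$ is the label component at the coordinate $z^* \in \parent(\mpar(z^*))$ of the label-query input $(\mpar(z^*),\ell'_{\parent(\mpar(z^*))})$ consulted in the $T'$-extraction step at $\mpar(z^*)$, and this input lies in $\supp(D') \subseteq \supp(D)\cup\{x_1,\ldots,x_k\}$. Hence $\ell'_{z^*}$ appears as a label component of an input in $\supp(D)\cup\bfx$ and lies in part~(i). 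This establishes weak recognition, and Theorem~\ref{thm:tricky_restricted} combined with the bound on $P[U\in\L_j^{\bfx,D}]$ then yields $\condQTC{\SIZE[k(q-1)]\!\setminus\!\P_{q-1}}{\suc\setminus\col}{k}{\mathsf{LbQ}} \leq ek\sqrt{10nkq/2^w}$, as claimed.

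The chief obstacle in this plan is identifying the correct form of $\L_j^{\bfx,D}$: a more naive choice such as $\lsupp(D)\cup\{\text{queried }\phi\}$ is insufficient, because the forced label $\ell'_{z^*}$ from Lemma~\ref{lem:newpath} need not be a \emph{value} of any $D$-entry --- it is only guaranteed to be an \emph{input component} of one. Accommodating this by including all label components of inputs in $\supp(D)\cup\bfx$ is exactly what introduces the factor of $n$ in the size bound and matches the statement of the lemma.
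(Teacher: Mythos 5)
Your proposal is correct and follows essentially the same route as the paper's proof: the same $1$-local properties (your set (i)$\cup$(ii) is exactly the paper's $\lsupp(D_{\bfx,\bf 0})$), the same recognition argument via (\ref{eq:suc}) and Lemma~\ref{lem:newpath}, the same $nkq/2^w$ count, and the same final application of Theorem~\ref{thm:tricky_restricted}. Your closing observation about needing input \emph{components} rather than function \emph{values} is precisely why the paper defines $\lsupp$ the way it does.
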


\begin{proof}
Define the notion of labeling support $\lsupp(D)$ of a database $D\in\DB$ as follows. 
$$
\lsupp(D) := \left\{ \lambda \in \bool^w \,\middle|\, \!\begin{array}{l} \exists \, 0 \!\leq\! i \!\leq\! d \!\leq\! n, v \!\in\! V_n, \ell_1,\ldots,\ell_d \!\in\! \bool^w \\ \text{s.t. } D(v,\ell_1,\dots,\ell_{i-1},\lambda,\ell_{i+1},\ldots\ell_d) \neq \bot \end{array} \!\! \right\} \cup \left\{
    \lab\rt\in\bool^w \,\middle|\, D^\textsf{ChQ}(\lab\rt)\neq\bot \right\}  .
$$
We note that since $\lsupp$ is defined only in terms of where $D$ is defined, but does not depend on the actual function values (beyond being non-$\bot$), $\lsupp(D) \subseteq \lsupp(D_{\bfx,\bf 0})$ for any $\bfx \in \Xcal^k$, where ${\bf 0} \in \bool^k$ is the all-$0$ string. 

In order to bound above capacity, we define $1$-local properties and show that they (weakly) recognize the considered transition (with input restricted to $\mathsf{LbQ}$). For any $D$ and $\bfx \in \mathsf{LbQ}^k$, consider the local properties 
$$
\L_j^{\bfx,D}:=\left\{D_\circ \in D|^{\bfx} \,\big|\, D_\circ(x_j)\in\lsupp(D_{\bfx,\bf 0})\right\} \, .
$$
Let $D_{\bfx,\bfr}\in\neg\P_{q-1}=\fail\setminus\col\setminus\chain^{q}$ yet $D_{\bfx,\bfu}\in\suc\setminus\col$.
By (\ref{eq:suc}), there exists $\lab\rt$ so that 
$D_{\bfx,\bfu}^\textsf{ChQ}(\lab\rt)\subseteq\ext{D_{\bfx,\bfu}}{n}{\lab\rt}$, while, on the other hand, there exists some $v\in D_{\bfx,\bfr}^\textsf{ChQ}(\lab\rt)\setminus\leaves\bigl(\ext{D_{\bfx, \bfr}}{n}{\lab\rt}\bigr)$. 
Given that here $\bfx\in\mathsf{LbQ}^k$, we have $D_{\bfx,\bfr}(\lab\rt) =  D_{\bfx,\bfu}(\lab\rt)$, and thus, by $(\ref{eq:Cl+})$, we have
$$
v\in \mathsf{leaves}\Bigl(\ext{D_{\bfx, \bfu}}{n}{\lab\rt}\Bigr)\setminus\mathsf{leaves}\Bigl(\ext{D_{\bfx, \bfr}}{n}{\lab\rt}\Bigr).
$$

Writing $\ell'$ for the labeling extracted by $\ext{D_{\bfx, \bfu}}{n}{\lab\rt}$, it then follows from Lemma~\ref{lem:newpath} that there exist $j \in \{1,\ldots,k\}$ and $z \in \anc(v)$ such that 
$u_j = D_{\bfx,\bfu}(x_j) = \ell'_{z} \neq D_{\bfx,\bfr}(x_j) = r_j$. Furthermore, since $D_{\bfx,\bfu}^\mathsf{ChQ}(\ell'_{z}) = D_{\bfx,\bfu}^\mathsf{ChQ}(\ell_\rt) \neq\bot$ in case $z = \rt$, and $\ell'_z$ is part of the input that is mapped to \smash{$\ell'_{\mpar(z)}$} under $D_{\bfx,\bfu}$ in all other cases, we also have $u_j = \ell'_{z} \in \lsupp(D_{\bfx,\bfu}) \subseteq \lsupp(D_{\bfx,\bf 0})$. 
Therefore, the local properties \smash{$\L_j^{\bfx,D}$} do indeed weakly recognize the considered transition for input restricted to $\mathsf{LbQ}$.

For $D\in\SIZE[k(q-1)]\setminus\P_{q-1}$, since there are only $k(q-1)$ entries in $D$, we have
$$
P[U\in\L_j^{\bfx,D}]\leq \frac{\left|\lsupp(D_{\bfx,\bf 0})\right|}{2^w}\leq
\frac{nkq}{2^w}. \, ,
$$
and thus the claimed bound follows from applying Theorem~\ref{thm:tricky_restricted}. 
\end{proof}

\section*{Acknowledgements}

We thank Jeremiah Blocki, Seunghoon Lee, and Samson Zhou for the open discussion regarding their work~\cite{BlockiLZ20}, which achieves comparable results for the hash-chain problem and the Simple PoSW scheme.

{\small

\bibliography{citation}
\bibliographystyle{plain}

}

\appendix

\section{Efficient Simulation of the Compressed Oracle}\label{app:Efficiency}

\def\Upd{U\!pd}
\def\upd{u\hspace{-0.1ex}pd}


In order to complete our exposition of the compressed oracle, we show here another aspect of the technique, which is not relevant in our context but an important feature in other applications: similarly to the classical lazy-sampling technique, the evolution of the compressed oracle can be {\em efficiently} computed, {\em and} useful information can be {\em efficiently} extracted from the compressed oracle. 

For concreteness, we assume here that $\Ycal = \{0,1\}^m$. This in particular means that $\hat\Ycal = \Ycal$, and that there is a designated and efficiently computable {\em quantum Fourier transform} $\qft: \ket{y} \mapsto \ket{\hat y} = H^{\otimes m} \ket{y}$. 
This then also means that $\DB = \hat\DB$, but we still distinguish between $\ket{D} = \bigotimes_x \ket{D(x)}$ and $\ket{\hat D} = \bigotimes_x \qft\ket{D(x)}$ for any $D \in \DB$. Additionally, we assume that $\Xcal$ comes with an efficiently computable total order, say $\Xcal = \{0,1\}^n$. 

Consider the classical encoding function $Enc: \DB \to \frak{L} := (({\cal X} \times {\cal Y}) \cup \{\bot\})^{|\Xcal|}$ that maps $D \in \DB$ to the list $L = \big[(x_1,y_1),\ldots,(x_s,y_s),\bot,\ldots,\bot\big]$ of pairs $(x_i,y_i)$ for which $y_i = D(x_i) \neq \bot$, sorted as $x_1 < \cdots < x_s$ and padded with $\bot$'s. Recall the unitary $\CO$, defined in Section~\ref{sec:TransitionMatrix} and which describes the evolution of the compressed oracle, and consider the corresponding  ``update function'' $\Upd : {\cal X} \times {\cal Y} \times {\frak{L}} \to {\cal X} \times {\cal Y} \times {\frak{L}}$, defined to satisfy 
\begin{align*}
\Upd\bigl(x,y,Enc(D)&\bigr) = \bigl(x,y,Enc(D')\bigr) \\
&\:\Longleftrightarrow\:
\ket{x}\ket{\hat y}\ket{\hat{D'}} = \CO\ket{x} \ket{\hat y} \ket{\hat D} = \ket{x} \ket{\hat y} \otimes \CO_{x \hat y}\ket{\hat D} 
\end{align*} 
for any $x \in \Xcal$, $y \in \Ycal$ and $D \in \DB$. 
By construction, and exploiting (\ref{eq:F-Step}), 
\ifnum\submission=0 it turns out that \fi
$\Upd$ is a rather simple function. Applied to $x \in \Xcal$, $y \in \Ycal$ and $L = [(x_1,y_1),\ldots,(x_s,y_s),\bot,\ldots,\bot] \in \frak{L}$, it acts as follows. If $y_i = 0$ for some $i$ then it acts as identity,%
\footnote{This is the ``artificial" case, which was introduced to have $\CO$ defined on the entire space $\CC[\Xcal] \otimes \CC[\Ycal] \otimes \CC[\DB]$} 
otherwise, the following two cases are distinguished: if $x \not\in \{x_1,\ldots,x_s\}$  and $y \neq 0$ then $\Upd$ inserts the pair $(x, y)$ to the list, while if $x = x_i$ and $y \neq y_i$ for some $i$ then $\Upd$ replaces $(x_i,y_i)$ by $(x_i,y_i \oplus y)$. In particular, for lists $L$ of bounded size $s \leq Q$, the classical function $\Upd$ can be efficiently computed, i.e., in time polynomial in $Q$ and in the size of the bit representations of the elements of $\Xcal$ and $\Ycal$. 

Formally, for a fixed $Q$, let $\DB_{\leq Q} := \{D \in \DB \,:\,|\{x \in \Xcal \,:\, D(x) \!=\! \bot\}| \leq Q\}$, and let $enc: \DB_{\leq Q} \to \frak{L}_{\leq Q} := (({\cal X} \times {\cal Y}) \cup \{\bot\})^Q$ be defined in the obvious way, i.e., so that $enc(D)$ is obtained from $Enc(D)$ by removing the rightmost $\bot$-paddings. Similarly, $\upd: {\cal X} \times {\cal Y} \times \frak{L}_{\leq Q}\to {\cal X} \times {\cal Y} \times \frak{L}_{\leq Q}$ is defined in the obvious way to coincide with $\Upd$ except for the shorter $\bot$-padding, and {\em except for the following additional modification}: $\upd$ is declared to act as identity on $(x, y, L)$ whenever $s = Q$ and $x \neq \{x_1,\ldots,x_s\}$, i.e., when there would be an ``overflow''. It then follows that $\upd$ is an {\em efficiently computable permutation}. Thus, by basic theory of quantum computation, the corresponding unitary $\ket{x,y,L} \mapsto \ket{\upd(x,y,L)}$ can be efficiently computed by means of a polynomial sized quantum circuit. Hence, by means of the encoding 
$$
{\sf \hat enc}: \ket{\hat D} \mapsto \ket{enc(D)} = \ket{x_1}\ket{y_1}\cdots\ket{x_s}\ket{y_s}\ket{\bot}\cdots\ket{\bot} \, ,
$$ 
the unitary $\CO$ can be efficiently computed, as long as it acts on $\CC[\Xcal] \otimes \CC[\Ycal] \otimes \CC[\DB_{< Q}]$, i.e., as long as fewer than $Q$ queries are being made. 

Alternatively, we can also consider the following variant, where $\ket{D}$, rather than $\ket{\hat D}$, is encoded as $\ket{enc(D)}$: 
$$
{\sf enc}: \ket{D} \mapsto \ket{enc(D)} = \ket{x_1}\ket{y_1}\cdots\ket{x_s}\ket{y_s}\ket{\bot}\cdots\ket{\bot} 
$$ 
This encoding offer the following useful property. Consider a unitary $U_f$ on $\CC[\DB]$, plus an ancilla, that computes a classical function $f$, meaning that $U_f:\ket{D}\ket{w} \mapsto \ket{D}\ket{w \oplus f(D)}$, and for which the classical function $f$ is efficiently computable for $D \in \DB_{\leq Q}$ and given that $D$ is represented by $enc(D)$. Then, the unitary $U_f$ is efficiently computable with the considered encoding $\sf enc$. This allows for efficient extraction of useful information from the compressed oracle. 
Typical examples would be to check whether a certain preimage $x_\circ \in \Xcal$ is in the database, i.e., whether $D(x_\circ) \neq \bot$, or to check whether there is a $0$-preimage in the database, i.e. whether $\exists \, x: D(x) = 0$, etc. 

The final, simple yet crucial, observation is that one can efficiently switch between these two encodings. Indeed, it is easy to see that, say, ${\sf enc}$ commutes with applying the quantum Fourier transform $\qft$ in the obvious way, i.e., 
$$
{\sf enc} \ket{\hat D} = \ket{x_1}\ket{\hat y_1} \cdots \ket{x_s}\ket{\hat y_s}\ket{\bot}\cdots\ket{\bot} \, .
$$
Thus, ${\sf enc}$ equals ${\sf \hat enc}$ up to some $\qft$'s to be applied (controlled by the corresponding register not being $\bot$), which can be efficiently done. Hence, by a suitable encoding, both the evolution of the compressed oracle as well as efficiently computable classical functions on the (suitably encoded) database $D$, can be efficiently computed by a quantum circuit.

\section{PoSW Definition} \label{sec:posw-def}

Informally, a (non-interactive) PoSW allows a prover $\Prover$ to generate an efficiently verifiable proof showing that some computation was going on for $N$ sequential steps since some ``statement'' $\chi$ was received, in the sense that even a powerful adversary with parallel computation power cannot compute a valid proof with much less than $N$ steps. PoSW is typically constructed in the random oracle model. We recall its formal definition from~\cite{cohen2018simple} (after applying the Fiat-Shamir transformation) as follows (see Figure~\ref{fig:noninteractive_PoSW} for an illustration).

\begin{figure}[H]
    \centering
    \begin{tikzpicture}
    \node at (2.5,4) {$\Hash:\bool^{\leq B}\to\bool^w$};
    \node[alice,minimum size=5em] (P) at (0,0) {Prover $\mathcal{P}(N,t,w)$};
    \node[bob,mirrored,minimum size=5em] (V) at (5,0) {Verifier $\mathcal{V}(N,t,w)$};
    
    \draw[<->,thick] (0.5,1.5)--(2,3.5);
    \draw[<->,thick] (4.5,1.5)--(3,3.5);
    
    \draw[<-,thick] (1,1)-- node[above]{statement $\chi\leftarrow\bool^w$} (4,1) ;
    
    \node at (-3,1) {$(\phi,\phi_\mathcal{P}):=\mathsf{PoSW}(\chi,N)$};
    \node at (-3,0.5) {$\gamma := \Hash_\chi(\phi)$};
    \node at (-3,0) {$\tau:=\mathsf{open}(\chi,N,\phi_\mathcal{P},\gamma)$};
    \draw[->,thick] (1,0) -- node[above]{$\pi:=(\phi,\tau)$} (4,00);
    \node at (8,-0.5) {verify authentication path};
    \node at (8,-1){$\substack{\text{If both succeed,}\\ \mathsf{verify(\chi,N,\phi,\gamma,\tau)}=\mathsf{accept}}$};
    \end{tikzpicture}
    \caption{Non-interactive PoSW.}
    \label{fig:noninteractive_PoSW}
\end{figure}

\begin{itemize}
    \item \emph{Common Inputs:} The prover $\Prover$ and the verifier $\mathcal{V}$ get as common input two statistical security parameters $w,t\in \mathbb{N}$ and a time parameter $N \in \mathbb{N}$. They have access to a random oracle $\Hash:\bool^{\leq B}\to\bool^w$, where  $B$ is sufficiently large but otherwise arbitrary.%
\footnote{The original paper~\cite{cohen2018simple} considers $\Xcal=\bool^*$; however, we want $\cal X$ to be finite so that our results from the previous sections apply. Thus, we simply choose $B$ large enough, so that the scheme is well defined, but also larger than any query that an arbitrary but fixed attacker will make. }
    \item \emph{Statement:} $\mathcal{V}$ samples a random $\chi \leftarrow \bool^w$ and sends it to $\Prover$.
    \item \emph{Compute PoSW:} $\Prover$ computes  $(\phi,\phi_{\Prover}) := \mathsf{PoSW}^H(\chi,N)$, where $\phi$ is a proof and $\phi_{\Prover}$ is a state $\Prover$ uses to compute the opening. 
    \item \emph{Opening Challenge:} The opening challenge $\gamma$ is determined by $\gamma := \Hash(\chi,\phi) \in \bool^w$.
    \item \emph{Open:} $\Prover$ computes  $\tau := \mathsf{open}^H(\chi,N,\phi_{\Prover}, \gamma)$. $\Prover$ sends $\pi := (\phi, \tau)$ to $\mathcal{V}$.
    \item \emph{Verify:} $\mathcal{V}$ computes and outputs $\mathsf{verify(\chi,N,\phi,\gamma,\tau)} \in \{ \mathsf{accept}, \mathsf{reject}\}$.
\end{itemize}

Since our goal is to analyze post-quantum security of Simple PoSW~\cite{cohen2018simple}, we will not present the formal security properties for PoSW here. Instead, we will prove concrete upper bounds on the probability that a $k$-parallel $q$-query quantum oracle algorithm $\mathcal{A}$ with $q < N$ can generate a valid proof.

\section{The Extraction Algorithm}\label{app:Extract}

\begin{algorithm}[H]
\SetAlgoLined
\textbf{Input: }$\lab\rt\in\bool^w$\\
\textbf{Output: } a subtree $T\subseteq V_n$\\
\textbf{Initialize: }\\
Set $\lext: \VId{n} \to \bool^w\cup\{\perp\}$ with $\lext_\rt\leftarrow\lab\rt$ and $\lext_{v} \leftarrow \perp$
for all $v\in V_n\setminus\{\rt\}$;\\
Set all vertex $v\in V_n$ as unmarked;\\
\textbf{Notation: } Define the support of a labeling as $\supp(\lext):=\{v\in V_n:\lext_v\neq\perp\}$\\
\textbf{Labeling extraction: }\\
\While{there is an unmarked $v\in\supp(\lext)\setminus\leaves(V_n)$}{
    mark the vertex $v$;\\
    \If{there exists some $x, y\in \bool^w$ such that $\lext_{v} = D(v, x, y)$}{
        $\lext_{\lch{v}} \leftarrow x$;\\
        $\lext_{\rch{v}} \leftarrow y$;
    }
}

\textbf{Consistency check:}\\
$T\leftarrow \supp(\lext)$;\\
\For{$v\in\leaves(T)$}{
    \If{$\lext_v\neq D(v,\lext_{\parent(v)})$}{
        $T\leftarrow T\setminus\{v\}$;
    }
}
output $T$;
\caption{$\ext{D}{n}{\lab\rt}$}\label{algo:extract}
\end{algorithm}

\end{document}